\newtheorem{theorem}{\textit{Theorem}}
\begin{document}
\title{\huge Bidirectional Optical Spatial Modulation for Mobile Users: Towards a Practical Design for LiFi Systems}

\author{Mohammad Dehghani Soltani,
        Mohamed Amine Arfaoui,
        Iman Tavakkolnia$^*$,
        Ali Ghrayeb,
        Majid Safari,
        Chadi Assi, 
        Mazen Hasna, and
        Harald Haas
\thanks{M. D. Soltani, I. Tavakkolnia, M. Safari, and H. Haas are with the LiFi Research and Development Centre, Institute for Digital Communications, School of Engineering, University of Edinburgh, UK. e-mail: \{m.dehghani, i.tavakkolnia, majid.safari, h.haas\}@ed.ac.uk. M. A. Arfaoui and C. Assi are with Concordia Institute for Information Systems Engineering (CIISE), Concordia University, Montreal, Canada, e-mail:\{m\_arfaou@encs, assi@ciise\}.concordia.ca. Ali Ghrayeb is with Texas A \& M University at Qatar, Doha, Qatar, e-mail: ali.ghrayeb@qatar.tamu.edu. Mazen Hasna is with Qatar University, Doha, Qatar, e-mail: hasna@qu.edu.qa. $^*$Corresponding author: Iman Tavakkolnia, i.tavakkolnia@ed.ac.uk}}

\maketitle

\vspace{-1cm}

\begin{abstract}
Among the challenges of realizing the full potential of light-fidelity (LiFi) cellular networks are user mobility, random device orientation and blockage. In the paper, we study the impact of those challenges on the performance of LiFi networks in an indoor environment using measurement-based channel models, unlike existing studies that rely on theoretical channel models. In our work, we adopt spatial modulation (SM), which has been shown to be energy efficient in many applications, including LiFi. We consider two configurations for placing the photodiodes (PDs) on the user equipment (UE). The first one is referred to as the screen receiver (SR) whereby all the PDs are located on one face of the UE, e.g., the screen of a smartphone, whereas the other one is a multi-directional receiver (MDR), in which the PDs are located on different sides of the UE. The latter configuration was motivated by the fact that SR exhibited poor performance in the presence of random device orientation and blockage. In fact, we show that MDR outperforms SR by over $10$ dB at bit-error ratio (BER) of $3.8\times10^{-3}$. Moreover, an adaptive access point (AP) selection scheme for SM is considered where the number of APs are chosen adaptively in an effort to achieve the lowest energy requirement for a target BER and spectral efficiency. The user performance with random orientation and blockage in the whole room is evaluated for sitting and walking activities. For the latter, we invoke the orientation-based random waypoint (ORWP) mobility model. We also study the performance of the underlying system on the uplink channel where we apply the same techniques used for the downlink channel. Specifically, as the transmitted uplink power is constrained, the energy efficiency of SM is evaluated analytically. It is shown that the multi-directional transmitter (MDT) with adaptive SM is highly energy efficient. Furthermore, as a benchmark, we compare the performance of the proposed framework to that of the conventional spatial multiplexing system, and demonstrate the superiority of the proposed one.   
\end{abstract}

\section{Introduction}
\subsection{Motivation}
It is anticipated that the mobile data traffic will generate about $49$ exabyte per month and the average global mobile connection speed will surpass $20$ Mbps by $2021$ \cite{Cisco}. The total number of smartphones (including phablets) will be over $50\%$ of global devices and they will generate around $86\%$ of the mobile data traffic ($42$ exabyte per month) by $2021$. Therefore, both academia and industry are looking for alternative solutions to offload heavy traffic loads from radio frequency (RF) wireless networks. 
Light-Fidelity (LiFi) is a novel bidirectional, high-speed and fully networked optical wireless communication (OWC) system which can be employed as a complementary structure along with RF networks\cite{Haas}. LiFi utilizes visible light and infrared spectra in downlink and uplink, respectively, and provides high data rates in short distances \cite{Bian,tsonev2015towards}. 
Compared to RF networks, LiFi offers notable benefits such as providing enhanced security, utilizing a very large and unregulated bandwidth and energy efficiency. These advantages have put LiFi in the scope of recent and future research. A task group for LiFi in IEEE $802.11$ already exists
\cite{ieee802bb} as well as a task group for optical camera communication (OCC) in IEEE $802.15.7{\rm r}1$ \cite{ieee802task}.
\subsection{Related Work}
Different data transmission techniques, which originate from RF wireless communication, are modified and adopted for OWC. Many of the developed communication techniques are validated by experiments and are even standardized \cite{Haas,ieee802}. For instance, the single-carrier modulation format on-off keying (OOK) is  used in IEEE 802.15.7 \cite{ieee802} as a simple technique which could also provide dimming. However, high data rates cannot be achieved using OOK. Therefore, parallel transmission techniques are proposed to increase the spectral efficiency. Multi-carrier modulation techniques, such as orthogonal frequency division multiplexing (OFDM) \cite{mossaad2015visible}, wavelength division multiplexing (WDM) \cite{chun2016led}, and multiple-input-multiple-output (MIMO) techniques \cite{fath2013performance} are among the most common realizations of parallel data transmission. Each of these techniques has several variants which make them favorable in various conditions \cite{tavakkolnia2018energy}. 

Spatial modulation (SM) is a type of MIMO structure that offers enhanced spectral efficiency compared to non-MIMO systems, and is more energy efficient with lower complexity as compared to full MIMO (i.e., spatial multiplexing while using all available transmitters) \cite{MeslehHaasSM}. In SM, part of the information is mapped on the degrees of freedom in the spatial domain, and the remaining part is mapped on the signal domain \cite{di2014spatial}. In optical SM, the selection of one or more of light emitting diodes (LEDs) forms the \textit{spatial information}. A modulation format is also used to map the \textit{signal information} for each selection of LEDs. Usually, pulse amplitude modulation (PAM) is used for modulating the signal information \cite{mesleh2011optical}. Space shift keying (SSK) and generalized space shift keying (GSSK) are two modulation schemes that are used when OOK is chosen for signal modulation \cite{jeganathan2009space,mesleh2008spatial,jeganathan2008generalized}. The performance of SM is well studied theoretically and experimentally, and its advantages and potential practical applications are highlighted \cite{popoola2014demonstration}. An important benefit of SM is the absence of interference from other transmitter units in single user scenarios. However, as expected for any MIMO system, the performance of SM also heavily depends on the channel conditions \cite{mesleh2008spatial}. This problem is more severe in OWC where the transmitter and/or receiver units are usually placed close to each other, and thus, the channel matrix may become ill-conditioned \cite{fath2013performance,mesleh2011optical}. Moreover, the performance of the system significantly changes for different environments, activities, and user positions.

Device orientation can significantly affect the users' throughput. Most of the studies on OWC assume that the device always faces vertically upward. This assumption may have been driven by the lack of having a proper model for orientation, and/or to make the analysis tractable. Nonetheless, such an assumption is only accurate for a limited number of devices (e.g., laptops with a LiFi dongle). However, the majority of users use devices such as smartphones, and in real-life scenarios, users are mobile and tend to hold their device in a way that feel most comfortable. This means that the device is not always facing upward and thus can have any orientation. Only a few studies have considered the impact of random orientation in their analysis, see for instance \cite{ICCTilting,MDSArxiv2018Orientation,BEROrientation} and references therein. All these works signify the importance of incorporating device orientation. The other important metric that can influence the system performance is the blockage of the optical channel by the user itself, known as self-blockage, or by other users or objects, and this consequently can interrupt the communication link. Blockage has been modeled in both millimeter wave and LiFi systems \cite{BlockageGlob03,blockageIET,raghavan2018statistical}.\vspace{-0.3cm}

\subsection{Contributions}
Against the above background, we present in this paper a bidirectional communication framework for an indoor LiFi environment. We adopt downlink and uplink channel models derived from real-life measurements, which makes the proposed framework relevant to the deployment efforts of LiFi networks. The adopted models encompass the combined effect of user mobility, random orientation and blockage. We note that this is the first time that such factors are incorporated into the design and analysis of LiFi networks when SM is employed. Among the performance measures that we consider are the bit-error ratio (BER), the spectral efficiency and energy efficiency. Motivated by the fact that SM is highly energy efficient, we adopt a variation of it to ensure that the least amount of energy is needed to achieve a target BER and spectral efficiency. \\
\indent It has been now well established that correlation among the channel gains between the transmit and receive antennas affects the performance of SM. To this end, we consider two representative configurations of PD placements. In the first configuration, we assume that the PDs are placed uniformly at one end of the UE, and in the second configuration, the PDs are placed on different edges of the UE. We refer to the first configuration as screen receiver/transmitter (SR/ST), and multi-directional receiver/transmitter (MDR/MDT) for the second configuration. The motivation behind introducing the latter configuration is twofold. First, SM performs best when the channel gains are uncorrelated, i.e., the performance degrades with correlation. Second, the random orientation may give rise to the problem of ill-conditioned channel matrixes for the SR/ST configuration, suggesting that a few subchannels become inadequate to support reliable transmission.    

The random orientation of the UE is modeled based on the experimental measurements reported in \cite{MDSArxiv2018Orientation} to obtain the instantaneous orientation of PDs. The measurement data is collected according to the normal daily activities of a number of participants for sitting and walking activities. Furthermore, not only is line-of-sight (LOS) considered but also non-line-of-sight (NLOS) channel gains are included to have an accurate channel model. Also, the blockage of the optical channel by human users and other random objects is considered. Therefore, a channel model close to realistic scenarios is incorporated which makes the methods and results presented in this paper reliable for future system design. 

We study the robustness of MDR/MDT in conjunction with SM against random orientation and blockage and show the impact of the channel on the overall performance. To improve the performance further, we propose using an adaptive SM (ASM) scheme for downlink and uplink in which the order of SM (i.e., the number of active light sources) is determined based on the strength of the channels between the LEDs and PDs. We examine the proposed ASM scheme for both sitting and walking activities. For sitting activities, about $10^4$ locations in a $5$ m $\times$ $5$ m room are considered, while for the walking activities, an orientation-based random waypoint (ORWP) mobility model is applied. It is observed that such adaptive methods significantly improve the performance. As a benchmark, we compare the proposed adaptive method with a spatial multiplexing MIMO system for different spectral efficiencies. The results confirm that the proposed ASM is more efficient. Moreover, the proposed MDR method can achieve up to twice the spectral efficiency of SR for the same SNR and target BER. 

To support bidirectional OWC, the validity of the MDR/MDT with ASM is also evaluated for the uplink channel. However, due to the constraint on the transmission uplink power, it is important to analyze the achievable energy efficiency in the uplink. This is performed by deriving bounds on the achievable spectral efficiency and defining an energy efficiency criterion. It is indicated that ASM, when integrated with the MDT, can provide higher energy efficiency compared to that of conventional SM methods.

In light of the above discussion, we may summarize the paper contributions as follows.
\begin{itemize}
\item Based on real-life measurements, we adopt a practical channel model that incorporates LOS and NLOS channel gain components, user mobility, user UE random orientation and link blockage. The effect of each of these phenomena on the performance of SM is studied.
\item We investigate the impact of different component placement configurations on the system performance. We show that placing the Tx/Rx LEDs/PDs on different sides of the UE (i.e., MDT/MDR) makes the system robust against blockage and random orientation.
\item We propose an ASM scheme in an effort to optimize the system performance. We propose algorithms for selecting the optimal number of used light sources for a given target spectral efficiency and reliability.
\item The performance of both downlink and uplink is investigated over the whole area of a typical indoor environment for walking and sitting activities. It is demonstrated that the MDR/MDT structure along with ASM improves the performance significantly. We also show that the proposed framework is superior to spatial multiplexing MIMO systems.
\item We analytically derive an energy efficiency metric which can be used for both uplink and downlink. We show that the proposed system is highly energy efficient for uplink when the MDT structure is combined with ASM. 
\end{itemize}\vspace{-0.3cm}


\subsection{Outline and Notations} 
The rest of the paper is organized as follows. The system model is presented in Section \ref{sec_system}. Section \ref{Sec_rand_ori_blk} presents the random orientation and link blockage modeling. In Sections \ref{sec_downlink} and \ref{sec_uplink}, the downlink and uplink performance is studied, respectively. Finally, the paper is concluded in Section \ref{sec_conc}, and future research directions are highlighted.

The following notations are adopted throughout the paper. Upper case bold characters denote matrices and lower case bold characters denote vectors. The set of natural numbers is denoted by $\mathbb{N}$ and the set of $N$-dimensional real-valued numbers is denoted by $\mathbb{R}^N$. The discrete set $\left\{1,2,...,N \right\}$ is denoted by $\llbracket 1,N \rrbracket$. Matrix transposition is denoted by the superscript $\{ \cdot \}^\mathrm{T}$. $|| \cdot ||_2$ denotes the Euclidean norm. $\mathcal{N}(\textbf{0}_N, \textbf{G})$ denotes the $N$-dimensional multivariate Gaussian probability distribution with zero-mean and covariance matrix $\textbf{G}$. The expected value is denoted by $\mathbb{E}[\cdot]$, the differential entropy is denoted by $h( \cdot )$ and the mutual information by $I(\cdot;\cdot)$. Superscript $C^+$ denotes max$(C,0)$. 

\section{System Model}\label{sec_system}
In this section, the channel model is described, and basics of SM are explained. Both downlink and uplink are studied in this paper and the system model can be used for both of them. Differences between downlink and uplink are highlighted wherever it is essential throughout the paper.
\vspace{-0.2cm}
\subsection{Channel Model}
The intensity modulation direct detection (IM/DD) optical wireless MIMO channel is considered, where $N_\mathrm{t}$ light sources (e.g. one or more LEDs) can transmit the signal and one UE receives the signal with $N_\mathrm{r}$ PDs. The resulting channel is described as:\vspace{-0.3cm}
\begin{equation}
\mathbf{y}=\mathbf {H}\mathbf{x}+\mathbf{n},
\end{equation}
where $\mathbf{x}$ is the transmitted signal vector of size $N_\mathrm{t}\times 1$; and $\mathbf{y}$ and $\mathbf{n}$ are $N_\mathrm{r}\times1$ vectors respectively representing the received signal and noise at each PD. The noise here includes all possible noises, such as shot noise and thermal noise and is assumed to be real valued additive white Gaussian $\mathcal{N} \left(\textbf{0}_{N_\mathrm{r}}, \sigma_{\rm n}^2 \textbf{I}_{N_\mathrm{r}} \right)$ and independent of the transmitted signal \cite{fath2013performance}. The variance of the noise is equal to $\sigma_{\rm n}^2=N_0B$, where $N_0$ is the single sided power spectral density of noise and $B$ is the bandwidth. The channel matrix $\mathbf{H}$ is given by: \vspace{-0.2cm}
\begin{equation}
\mathbf {H}= 
 \begin{pmatrix}
  h_{1,1} &  \cdots & h_{1,N_\mathrm{t}} \\
  \vdots    & \ddots & \vdots  \\
  h_{N_\mathrm{r},1} & \cdots & h_{N_\mathrm{r},N_\mathrm{t}} 
 \end{pmatrix},
\end{equation}
where the entities $h_{i,j}$ ($i=1, ...,N_\mathrm{r}$ and $j=1, ...,N_\mathrm{t}$) are the channel gain of the link between the $j$th transmitter and the $i$th PD which can be expressed as:\vspace{-0.2cm}
\begin{equation}\label{h}
h_{i,j}=h_{i,j}^\mathrm{LOS}+h_{i,j}^\mathrm{NLOS},
\end{equation}

\begin{figure}[!t]
\centering
\begin{subfigure}[b]{0.5\columnwidth}
\centering
\includegraphics{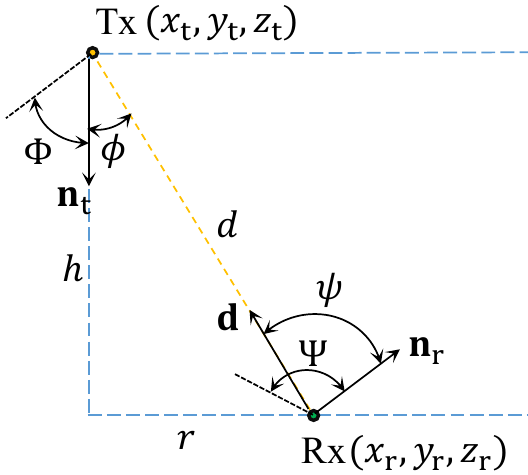}
\caption{LOS}
\label{LOS}
\vspace{-0.3cm}
\end{subfigure}~
\begin{subfigure}[b]{0.5\columnwidth}
\centering
\includegraphics{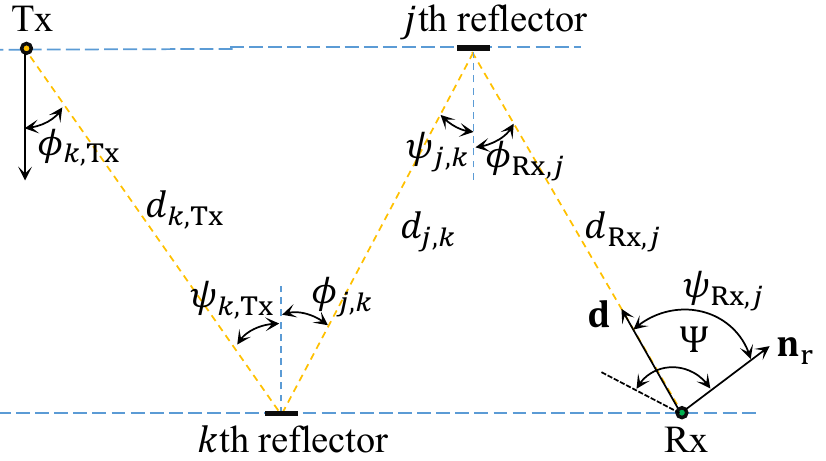}
\caption{NLOS}
\label{NLOS}
\vspace{-0.3cm}
\end{subfigure}
\caption{The downlink geometry of optical wireless communications with randomly-orientated user device.}
\label{Diag1}
\vspace{-0.9cm}
\end{figure}

\noindent where $h_{i,j}^\mathrm{LOS}$ is the LOS and $h_{i,j}^\mathrm{NLOS}$ is the NLOS channel gain. The channel gain depends on the distance between transmitter and receiver pairs (i.e., user position) and the orientation of each PD. Fig.~\ref{LOS} shows the LOS link geometry for a pair of receiver (Rx) and transmitter (Tx), where $\phi$ is the angle of radiance. The LED half-power semiangle is denoted by $\Phi$, and $\psi$ is the incidence angle. The receiver field-of-view (FOV) is shown by $\Psi$. The LOS channel gain of an optical link between a light source and a PD is given by \cite{kahn1997wireless}:
\begin{equation}\label{h_LOS}
h_{i,j}^\mathrm{LOS}=\frac{k+1}{2\pi d^2}A \cos^k(\phi)\cos(\psi)\mathrm{rect}\left(\frac{\psi}{\Psi}\right),
\end{equation}
where $A$ is the PD area, and $k=-1/\log_2(\cos(\Phi))$ is the Lambertian emission order. Furthermore, $\mathrm{rect}(\frac{\psi}{\Psi})=1$ for $0\leq\psi\leq\Psi$ and $0$ otherwise. 

The NLOS component of the channel gain can be calculated based on the method described in \cite{NLOSSchulze}. Using the frequency domain instead of the time domain analysis, we are able to consider an infinite number of reflections to have an accurate value of the diffuse link. The environment is segmented into a number of surface elements which reflect the light beams. These surface elements are modeled as Lambertian radiators described by \eqref{h_LOS} with $k=1$. 
Then, the NLOS channel gain including an infinite number of reflections can be expressed by \cite{NLOSSchulze}:\vspace{-0.1cm}
\begin{equation}\label{h_NLOS}
h_{i,j}^\mathrm{NLOS}=\mathbf{r}^\mathrm{T}\mathbf{G}_\rho(\mathbf{I-EG_\rho})^{-1}\mathbf{t},
\end{equation} 
where vectors $\mathbf{t}$ and $\mathbf{r}$ respectively represent the LOS link between the transmitter Tx and all the surface elements of the room and from all the surface elements of the room to the receiver Rx. Also, $(.)^\mathrm{T}$ denotes the transpose operator. Matrix $\mathbf{G}_\rho={\rm{diag}}(\rho_1,...,\rho_N)$ is the reflectivity matrix of all $N$ reflectors; $\mathbf{E}$ is the LOS transfer functions of size $N\times N$ for the link between all surface elements, and $\mathbf{I}$ is the unity matrix. In \eqref{h_NLOS}, the elements of $\mathbf{E}$, $\mathbf{r}$ and $\mathbf{t}$ are found according to \eqref{h_LOS} and Fig. \ref{Diag1} between pairs of Rx, Tx, and surface elements. In this paper, we assume that the modulation bandwidth is within the 3 dB bandwidth of the optical wireless transmission channel. Therefore, temporal delay between different Tx-Rx pairs is negligible, the temporal dispersion can be neglected, and only the DC channel gain is considered including LOS and NLOS components \cite{mesleh2011optical}.


The performance of a MIMO system depends heavily on the channel matrix. Devices such as laptops are usually placed on a flat surface and the PDs can be assumed to retain their orientation during each communication session \cite{MDSFeedback}, whether upward or not. On the other hand, hand-held devices such as smartphones, are prone to random changes in orientation due to hand motion. In this study, we focus on these types of devices and include the random orientation in our analysis. Moreover, objects and people may be placed close to the UE and block all or part of the light to reach one or more PDs. The details of random orientation and blockage modeling will be presented in Section \ref{Sec_rand_ori_blk}.\vspace{-0.3cm}

\subsection{Spatial Modulation} \label{sec_sm}

SM was first introduced in \cite{MeslehHaasSM}, which can provide the spectral and energy efficiency fulfillment of the next generation wireless communications. We review the basics of SM in this section and elaborate on how we adapt SM to the optical communication. More details can be found in \cite{DirenzoHanzoSM,mesleh2008spatial,mesleh2011optical,jeganathan2009space,imanGWC,MeslehRenzoSM,YounisRenzoSM} and references therein.

Following the basic principles of SM, the spatially distributed light sources are utilized to carry data along with the transmitted signal. In the original SM format \cite{mesleh2011optical}, only one light source is turned on at each time instant. Let $N_\mathrm{a}\leq N_\mathrm{t}$ be the number used LEDs, chosen out of $N_\mathrm{t}$ LEDs. Thus, by activating only one LED at each channel use, $\log_2(N_\mathrm{a})$ bits (the spatial information) are transmitted by SM. The transmitted symbol by an individual LED is also encoded by an $M$-ary PAM ($M$-PAM) constellation. Therefore, the spectral efficiency is $R=\log_2 (M) + \log_2(N_\mathrm{a})$ bits/sec/Hz. Note that, unlike spatial multiplexing (i.e., full MIMO), even one PD can be sufficient for signal detection because only differences between all possible symbols determine the system performance. This highlights the benefit of SM which is simple and is capable of potentially satisfying communication requirements when some PDs are blocked and not available. 

In this paper, we consider activating one of the available $N_\mathrm{a}$ LEDs with an $M$-PAM modulation format, which results in a total of $K=M N_\mathrm{a}$ symbols. The intensity levels of $M$-PAM are given by:\vspace{-0.0cm}
\begin{equation}
I_m=\frac{2I}{M+1}m,~~~~\mathrm{for}~m=1, \dots, M,
\end{equation}
where $I$ is the average emitted optical power. Therefore, one of the available $N_\mathrm{a}$ LEDs transmits one of the $M$ levels at each channel use, and the input vector $\mathbf{x}=\mathbf{s}_k$, $k=1, ..., K$, is chosen from the columns of the $N_a\times K$ matrix $\mathbf{S}=[I_1 \mathbf{I}_{N_a}~I_2 \mathbf{I}_{N_a}~\cdots I_M \mathbf{I}_{N_a}]$, where $\mathbf{I}_{N_a}$ is the square unity matrix of size $N_\mathrm{a}$. At the Rx, Maximum-Likelihood (ML) detection is performed. An error occurs whenever a transmitted vector $\mathbf{s}_{1}$ is detected mistakenly as another vector $\mathbf{s}_{2}$. The pairwise error probability (PEP) is defined as:\vspace{-0.0cm}
\begin{equation}\label{pep}
\mathrm{PEP}=Q\left(\sqrt[]{\frac{\gamma_\mathrm{Tx}}{4 I^2}\|\mathbf{H}(\mathbf{s}_1-\mathbf{s}_2)\|^2}\right),
\end{equation}
where $\gamma_\mathrm{Tx}$ is the average transmit SNR, and $\mathrm{Q}(\cdot)$ is the Q-function. The transmit SNR is defined as $\gamma_\mathrm{Tx}=\frac{E_s}{N_0}$, where $E_s$ is the mean emitted electrical energy. We define the received SNR $\gamma_\mathrm{Rx}$ by considering the received signal energy as the total received signal energies at all $N_\mathrm{r}$ PDs. Therefore, the received SNR can be expressed as \cite{fath2013performance}:\vspace{-0.0cm}
\begin{equation}
\gamma_\mathrm{Rx}=\frac{\gamma_\mathrm{Tx}}{N_\mathrm{a}^2}\sum_{i=1}^{N_\mathrm{r}}\left(\sum_{j=1}^{N_\mathrm{a}}h_{i,j}\right)^2.
\end{equation}
The upper bound on the BER can be derived using the union bound method as:
\begin{equation}\label{bersm}
\mathrm{BER}\left(M, E_s, \mathbf{H} \right) \simeq \frac{1}{K\log_2(K)} \sum\limits_{k_1=1}^{K}\sum\limits_{k_2=1}^{K} \mathrm{d_{H}}(b_1,b_2) Q\left(\sqrt{\frac{\gamma_\mathrm{Tx}}{4I^2}||\mathbf{H}(\mathbf{s}_1-\mathbf{s}_2)||^2}\right),
\end{equation}
where $\mathrm{d_{H}}(b_1,b_2)$ is the Hamming distance between the two bit allocations of $b_1$ and $b_2$ corresponding to signal vectors $\mathbf{s}_1$ and $\mathbf{s}_2$. It has been shown in the literature \cite{fath2013performance} and later in the paper that \eqref{bersm} is a tight bound at high SNR. 

It can be seen from \eqref{pep} and \eqref{bersm} that the error performance of SM directly depends on the channel matrix which determines the differentiability between signal vectors. 
We can assume that all PDs are placed on the screen of a smartphone, as shown in Fig. \ref{smartphone1}. However, this results in poor performance due to two issues. First, the resulting channel matrix is likely to be highly ill-conditioned because PDs are placed close to each other and this gives rise to correlation. Second, it is highly likely that some of the transmitters are out of the FOV of all PDs since usually the smart phone is held with an orientation other than upward. Therefore, we propose another structure by placing PDs on the screen and three other sides as shown in Fig. \ref{smartphone2}. Note that another PD can be placed at the back that can be activated instead of the one on the screen for situations where the user is lying on a horizontal surface. We call this structure the ``multi-directional receiver'' (MDR), which solves both above-mentioned problems. We investigate the performance of both structures later in the paper. We refer to the structure in Fig. \ref{smartphone1} as the screen receiver (SR). It should be noted that, in either structure, since the PDs are located at the top of the cellphone, it is a very low probability that they will be covered by the user's hand when the cellphone is being used. 

\begin{figure}[t]
		\centering
		\begin{subfigure}[b]{0.4\columnwidth}
			\centering
			\includegraphics[width=0.8\columnwidth,draft=false]{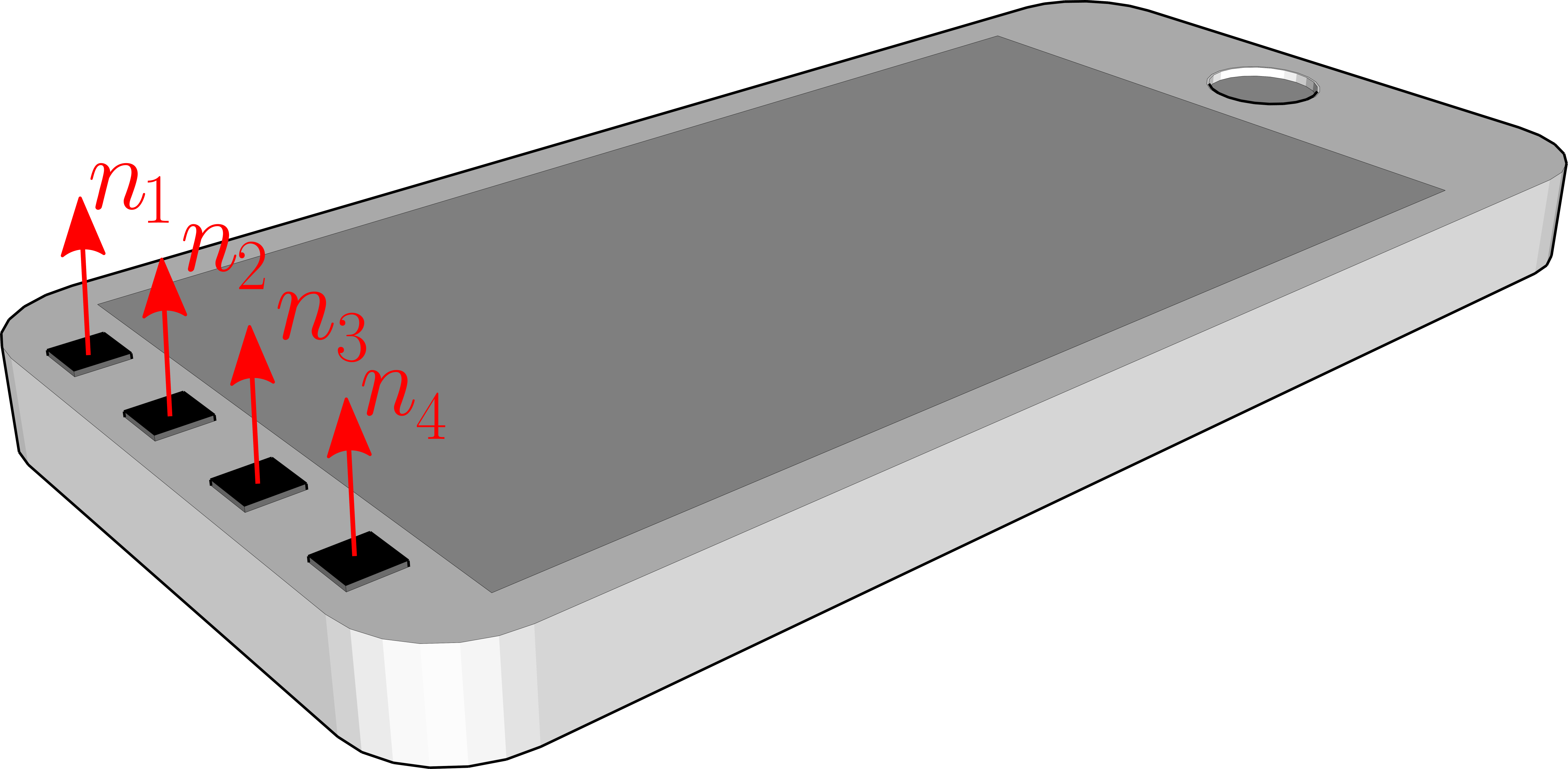}
						\caption{}
				\label{smartphone1}
		\end{subfigure}%
		~
		\begin{subfigure}[b]{0.4\columnwidth}
			\centering
			\includegraphics[width=0.8\columnwidth,draft=false]{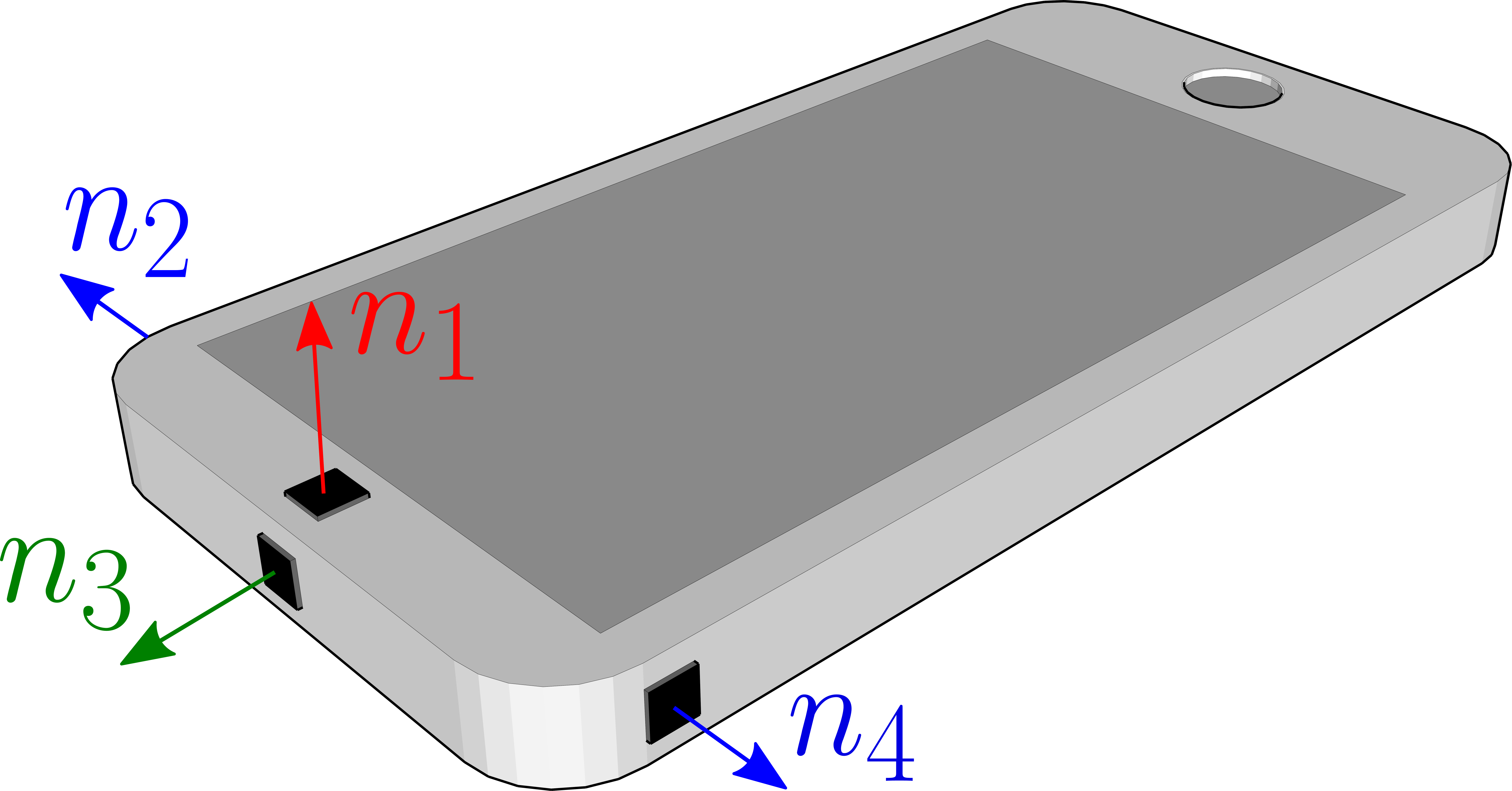}
						\caption{}
				\label{smartphone2}
		\end{subfigure}
        \vspace{-0.3cm}
				\caption{The SR and MDR structures for hand-held smartphone}
				\label{phones}
                \vspace{-0.6cm}
\end{figure}
\section{Random Orientation and Link Blockage} \label{Sec_rand_ori_blk}
In this section, we first present the statistics of device orientation based on experimental results. Then, the link blockage either due to the user itself or other objects in the vicinity will be explained. We use the models presented in this section along with the channel model described in the previous section to calculate the channel matrix for any user location and selection of Tx/Rx.
\vspace{-0.3cm}

\subsection{Random Orientation Modeling}
Current smartphones are equipped with gyroscope, accelerometer and compass that enable them to obtain the orientation in three dimensions by measuring the elemental rotation angles yaw, $\alpha$, pitch, $\beta$, and roll, $\gamma$ \cite{MDSHandover}. As shown in Fig.~\ref{figori}, $\alpha$, $\beta$ and $\gamma$ denote rotations about $z$-axis, $x$-axis and $y$-axis, respectively.  
According to the Euler's rotation theorem, any rotation matrix can be expressed by 
$\mathbf{R}=\mathbf{R}_\alpha \mathbf{R}_\beta \mathbf{R}_\gamma,$
where
\begin{equation}
\mathbf{R}_\alpha=\begin{bmatrix}
		\cos \alpha & -\sin \alpha & 0 \\
		\sin \alpha & \cos \alpha & 0 \\
		0 & 0 & 1 
\end{bmatrix},~~~
\mathbf{R}_\beta=\begin{bmatrix}
		1 & 0 & 0 \\
		0 & \cos \beta & -\sin \beta \\
		0 & \sin \beta & \cos \beta 
		\end{bmatrix},~~~
\mathbf{R}_\gamma=\begin{bmatrix}
		\cos \gamma & 0 & \sin \gamma \\
		0 & 1 & 0 \\
		-\sin \gamma & 0 & \cos \gamma
				\end{bmatrix},
\end{equation}
and angles $\alpha$, $\beta$, and $\gamma$ are shown in Fig. \ref{figori}. The normal vector of a PD can be described by $\mathbf{n}_\mathrm{r}=\mathbf{R}\mathbf{n}_0$, where $\mathbf{n}_0$ is the orientation vector for the vertically upward case. 

	\begin{figure}[t]
		\centering
		\begin{subfigure}[b]{0.5\columnwidth}
			\centering
			\includegraphics[width=0.5\columnwidth,draft=false]{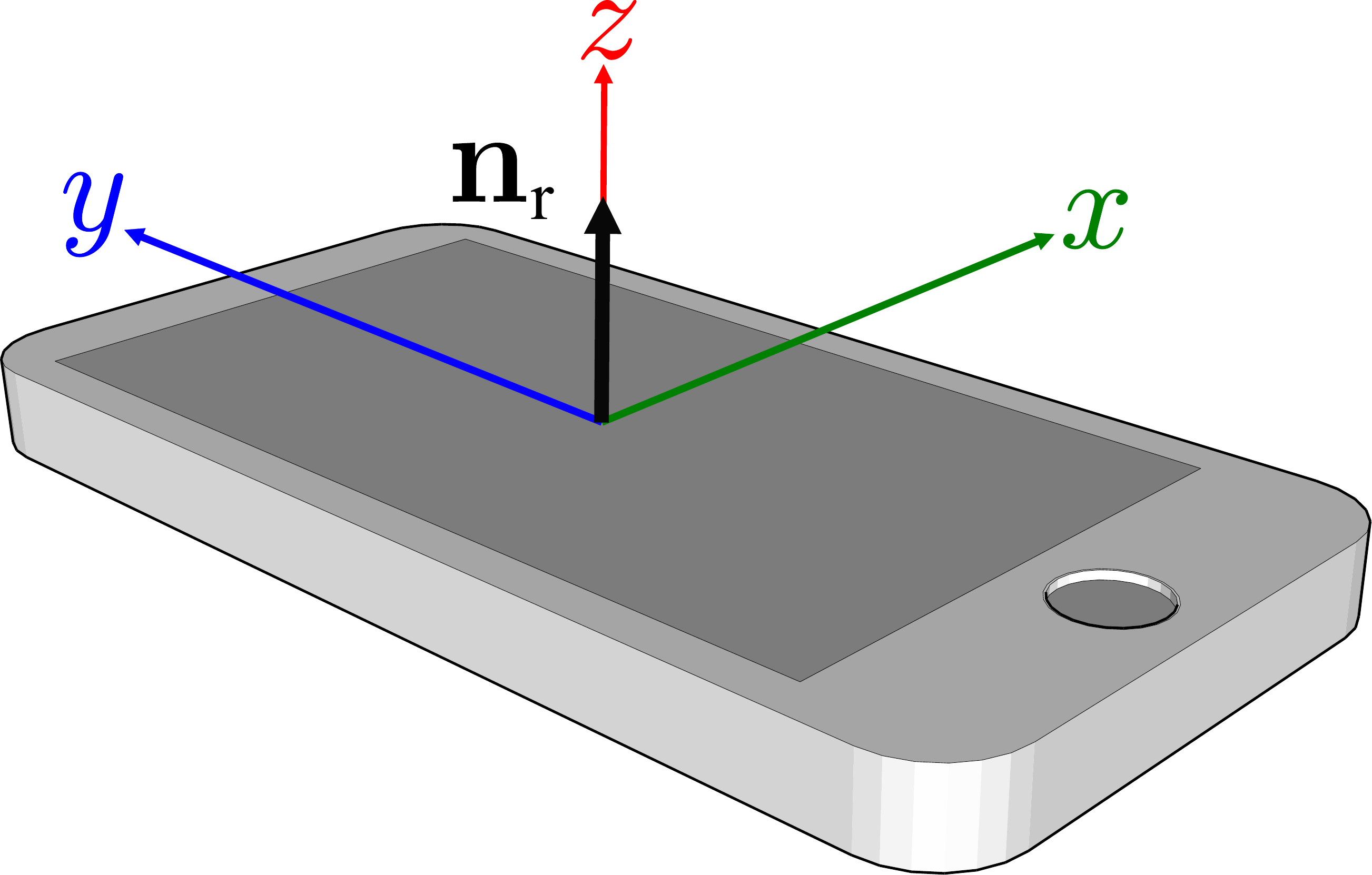}
            \vspace{-0.3cm}
			\caption{}
		\end{subfigure}%
		~
		\begin{subfigure}[b]{0.5\columnwidth}
			\centering
			\includegraphics[width=0.5\columnwidth,draft=false]{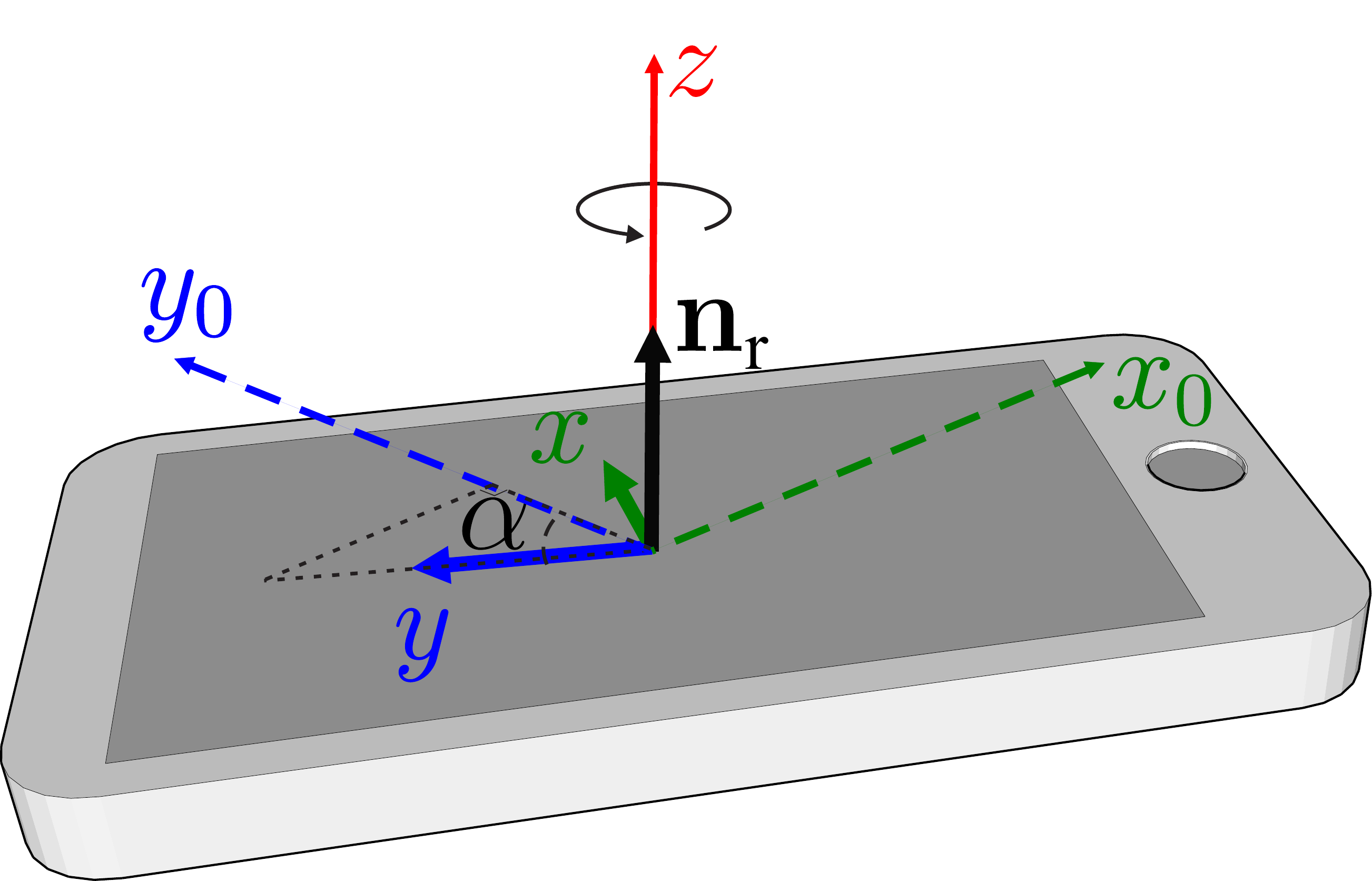}
            \vspace{-0.3cm}
			\caption{}
		\end{subfigure}\\
		\begin{subfigure}[b]{0.5\columnwidth}
			\centering
			\includegraphics[width=0.5\columnwidth,draft=false]{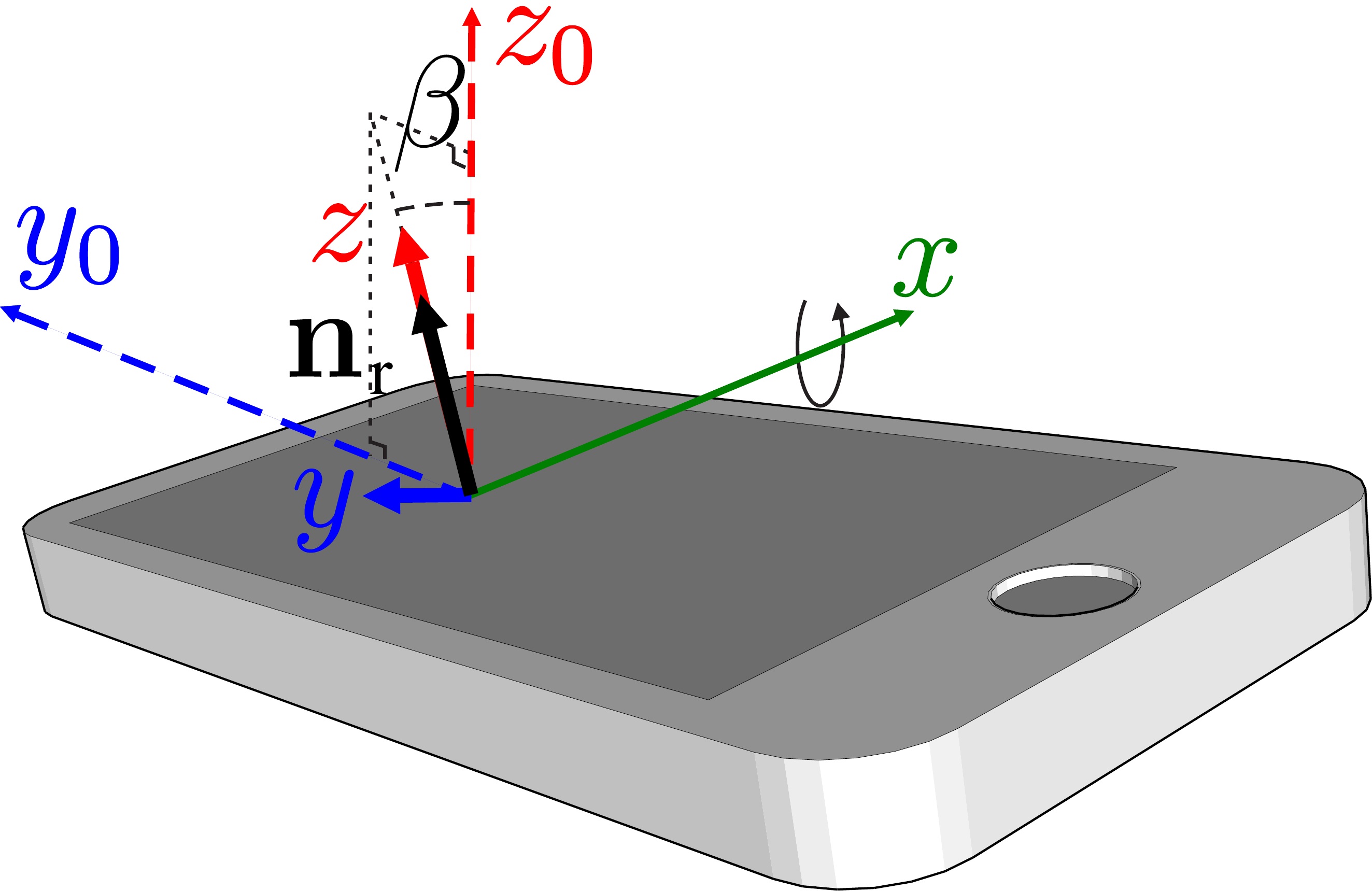}
            \vspace{-0.2cm}
			\caption{}
		\end{subfigure}%
		~
		\begin{subfigure}[b]{0.5\columnwidth}
			\centering
			\includegraphics[width=0.5\columnwidth,draft=false]{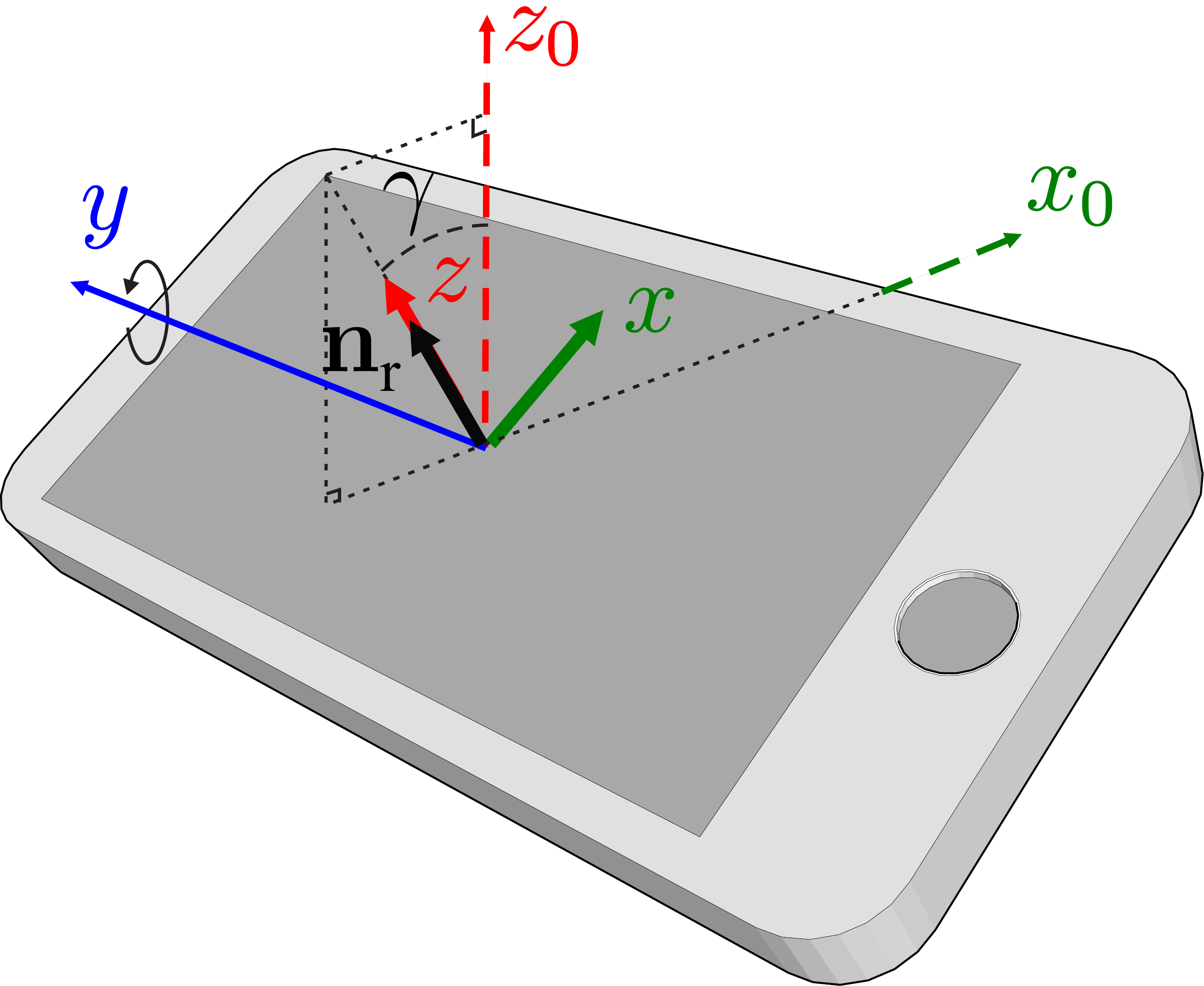}
            \vspace{-0.2cm}
			\caption{}
		\end{subfigure}
        \vspace{-0.7cm}
		\caption{Orientations of a mobile device: (a) normal position, (b) yaw rotation with angle $\alpha$, about the $z$-axis (c) pitch rotation with angle $\beta$, about the $x$-axis and (d) roll rotation with angle $\gamma$, about the $y$-axis.}
		\vspace{-0.5cm}
		\label{figori}
	\end{figure}
    
For collecting the measurements, we have asked $40$ participants to take part in the experiment while they were working with their cellphones. The application ``physics toolbox sensor suite" \cite{androidapp} was used to record the orientation data of yaw, pitch and roll while users were doing normal activities like browsing or watching a video stream. Measurements were recorded for static and mobile users (sitting and walking activities, respectively). More details about the data measurement can be found in \cite{MDSArxiv2018Orientation}. 

The histograms for angles $\alpha$, $\beta$, and $\gamma$, obtained from experimental measurements, along with the Laplace and Gaussian fitted distributions are shown in Figs. \ref{sitting} and \ref{walking}, respectively for sitting and walking activities. It can be seen that the distributions are well fitted with a Laplace distribution for sitting activities while histograms are more close to a Gaussian distribution for walking activities. The mean and variance for each case is noted in Table~\ref{distfit}. We used Kolmogorov-Smirnov distance (KSD) and kurtosis to evaluate the similarity of the collected measurements with the considered distributions \cite{murray1972theory}. The two-sample KSD is the maximum absolute distance between the cumulative distribution functions (CDFs) of two distributions. Small values of KSD (close to zero) confirm more similarity between the distributions. Kurtosis of the random variable, $X$, is defined as ${\rm{Kur[X]}}=\frac{\mathbb{E}[(x-\mu)^4]}{\sigma^4}$, where $\mu$ and $\sigma$ are the mean and variance of the random variable $X$. The kurtosis of Laplace and Gaussian distributions are $6$ and $3$, respectively. 

\begin{figure}[!t]
		\centering
		\begin{subfigure}[b]{0.3\columnwidth}
			\centering
			\includegraphics[width=\columnwidth,draft=false]{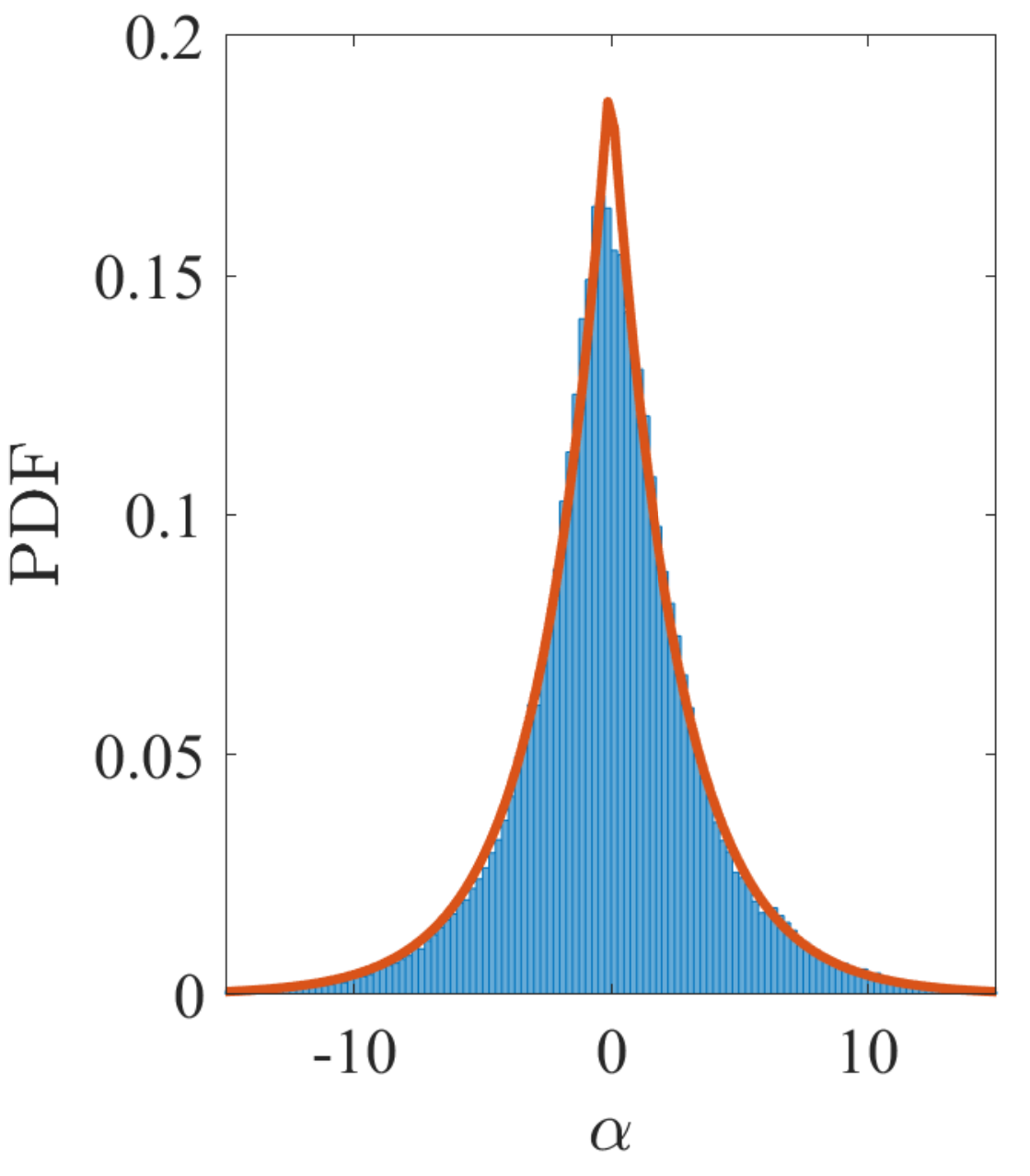}
				\label{alpha_sit}
		\end{subfigure}%
		~
		\begin{subfigure}[b]{0.3\columnwidth}
			\centering
			\includegraphics[width=\columnwidth,draft=false]{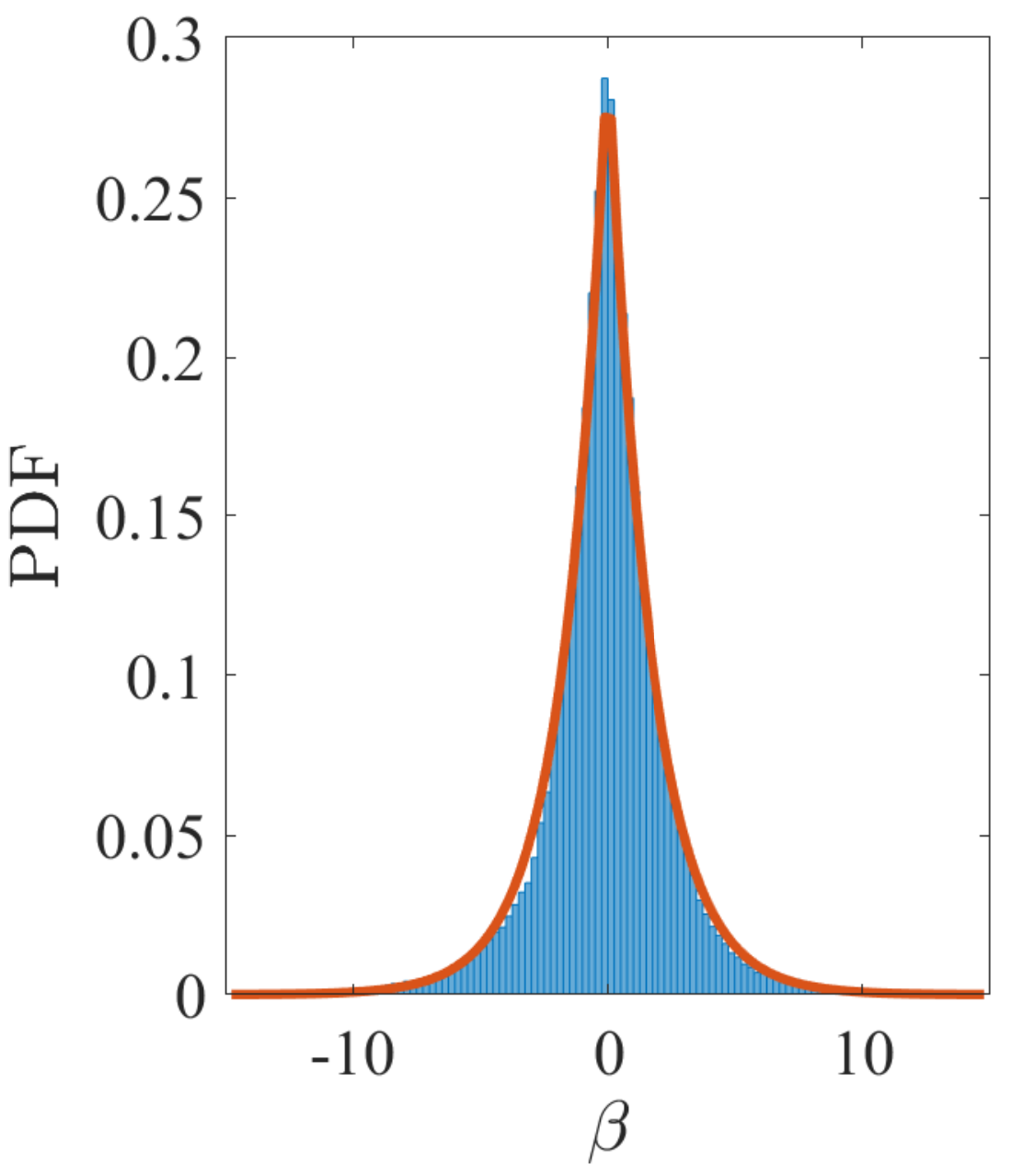}
				\label{beta_sit}
		\end{subfigure}
				~
				\begin{subfigure}[b]{0.3\columnwidth}
					\centering
					\includegraphics[width=\columnwidth,draft=false]{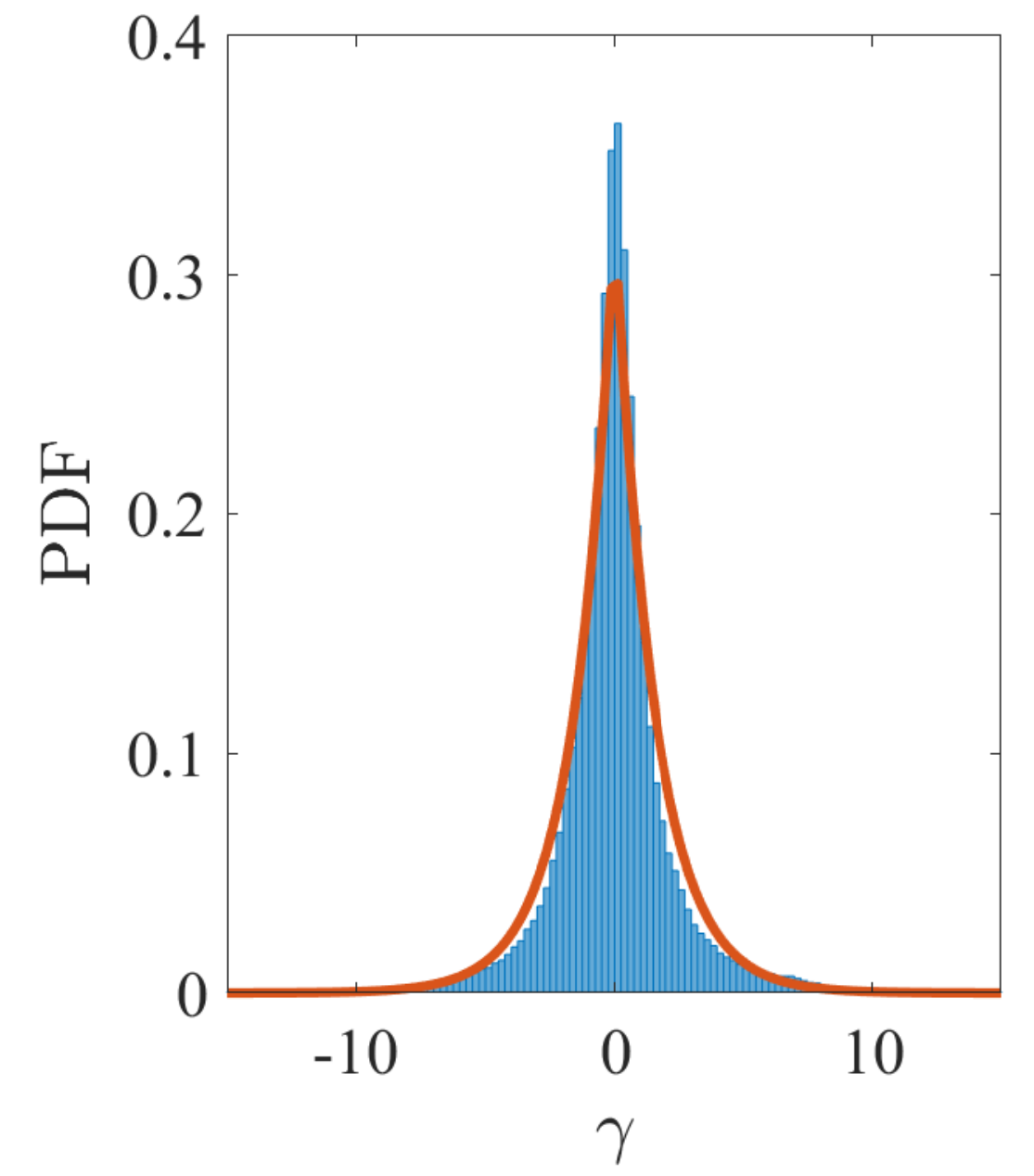}
						\label{gamma_sit}
				\end{subfigure}
                \vspace{-1cm}
				\caption{Histograms of orientation angles $\alpha$, $\beta$, and $\gamma$ for sitting activities obtained from experimental measurements.}
				\vspace{-0.6cm}
				\label{sitting}
\end{figure}
\begin{figure}[!t]
		\centering
		\begin{subfigure}[b]{0.3\columnwidth}
			\centering
			\includegraphics[width=\columnwidth,draft=false]{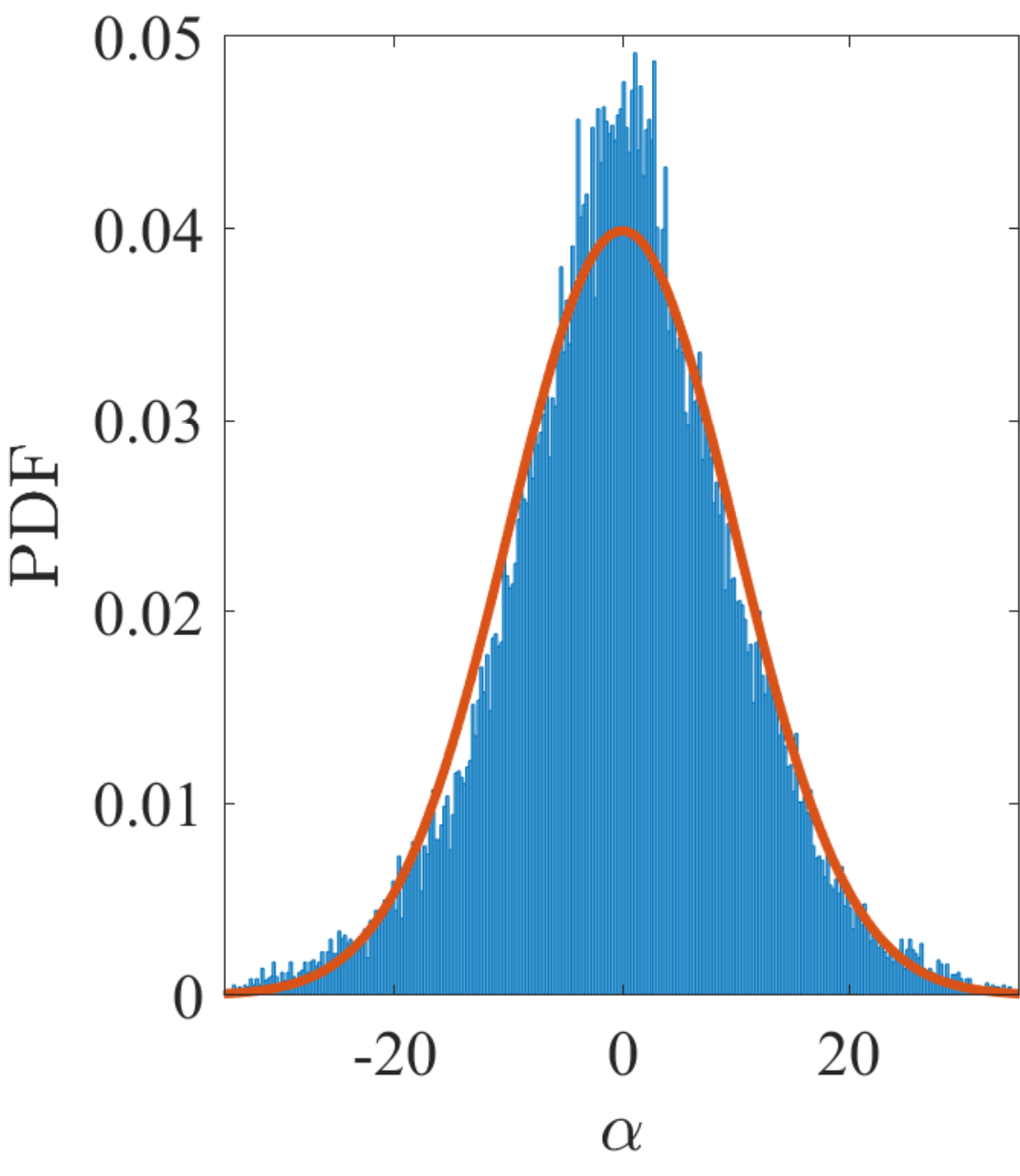}
				\label{alpha_walk}
		\end{subfigure}%
		~
		\begin{subfigure}[b]{0.3\columnwidth}
			\centering
			\includegraphics[width=\columnwidth,draft=false]{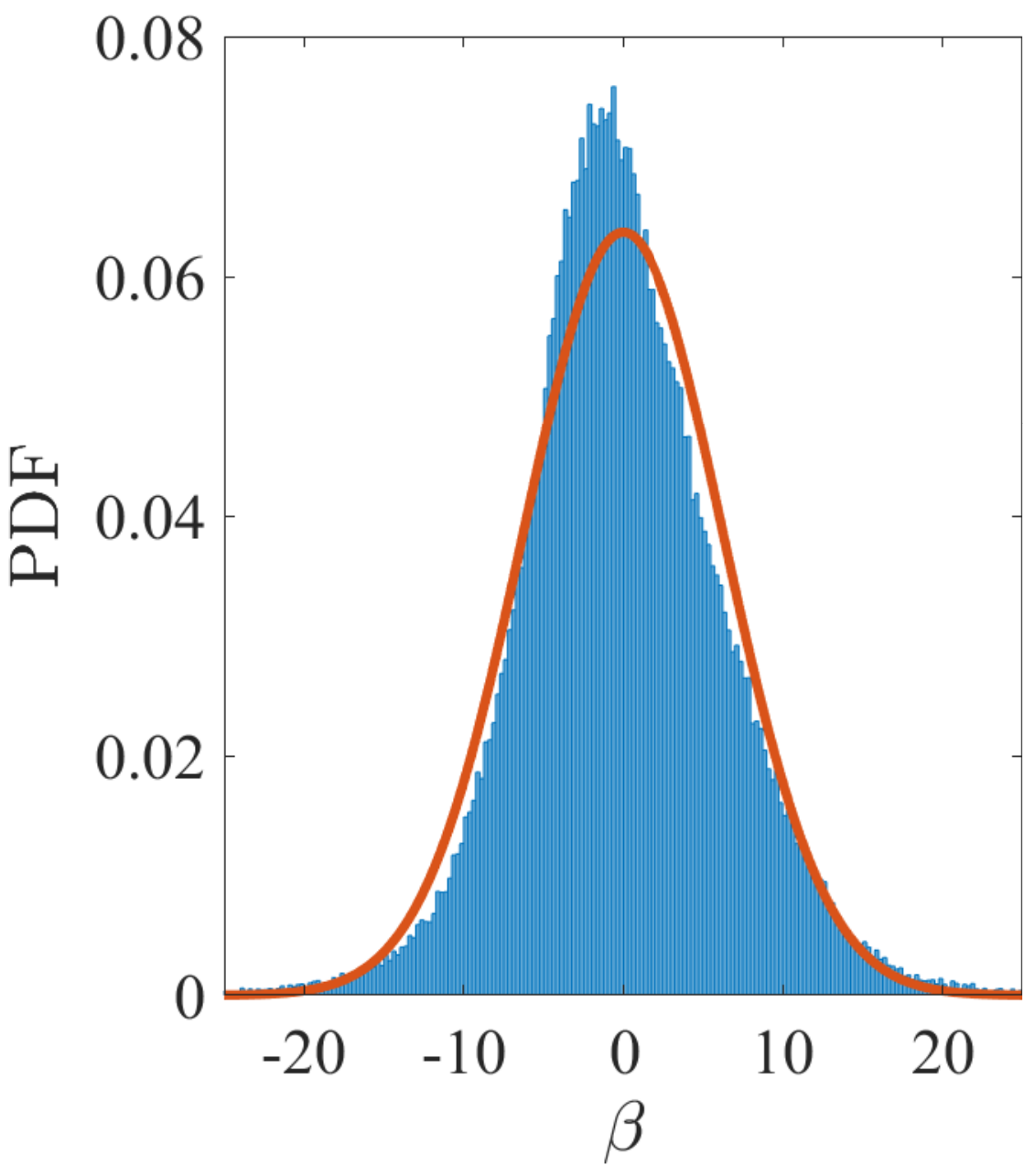}
				\label{beta_walk}
		\end{subfigure}
				~
				\begin{subfigure}[b]{0.3\columnwidth}
					\centering
					\includegraphics[width=\columnwidth,draft=false]{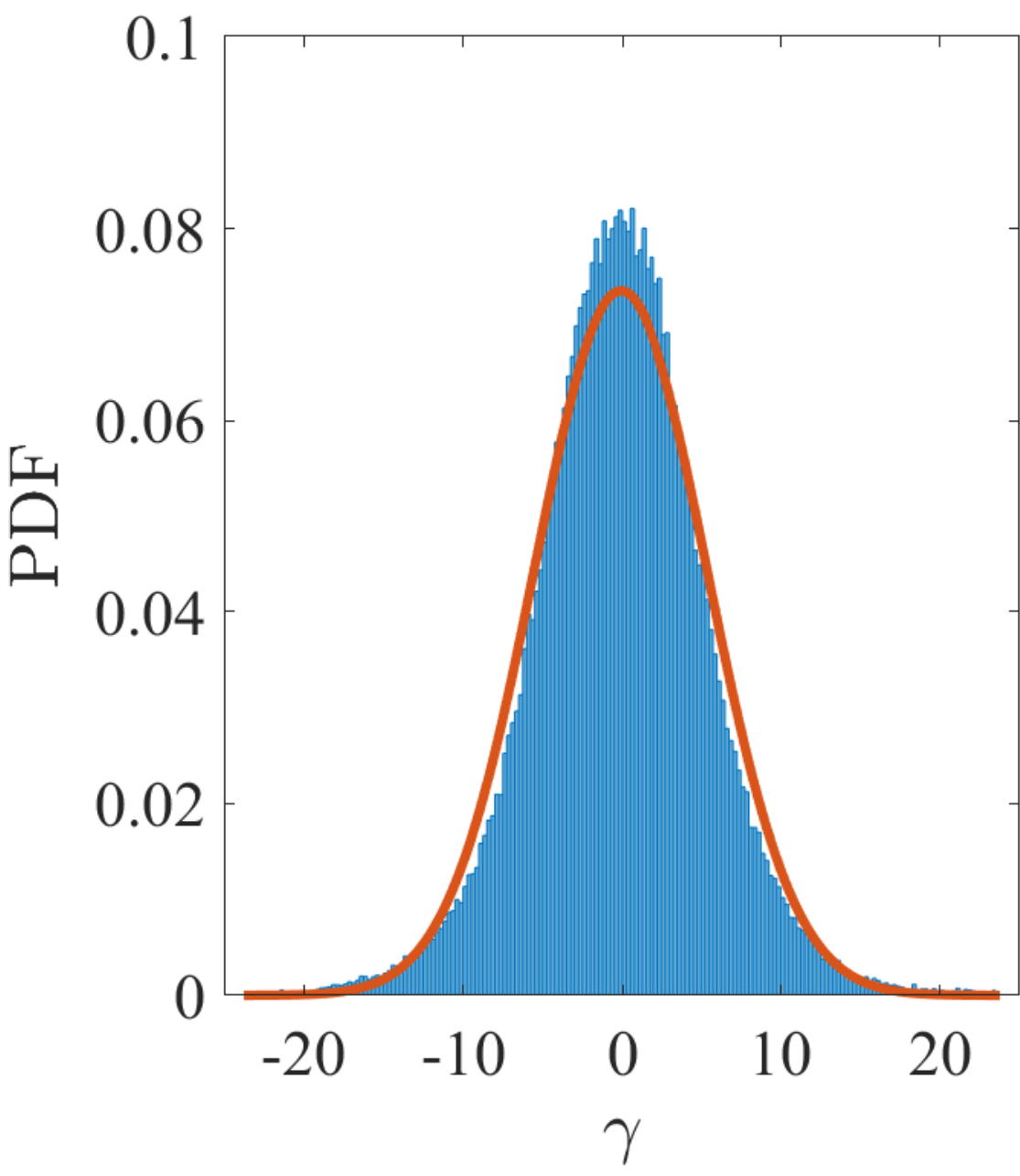}
						\label{gamma_walk}
				\end{subfigure}
                \vspace{-1cm}
				\caption{Histograms of orientation angles $\alpha$, $\beta$, and $\gamma$ for walking activities obtained from experimental measurements.}
			\vspace{-0.6cm}
				\label{walking}
\end{figure}

The statistics given in Table~\ref{distfit} can be used to model the device random orientation. The parameter $\Omega$ shows the user direction that will be explained in more detail in Section IV. Based on this data, the channel matrix at different positions with a random orientation can be found. In other words, random angles are generated using the statistics in Table \ref{distfit} (Gaussian for walking and Laplace for sitting), and then, the channel gain can be obtained using \eqref{h_LOS} and \eqref{h_NLOS}. The same method as described in \cite{ArdimasOFDM} is used to obtain the coherence time of $\alpha$, $\beta$ and $\gamma$. The values are presented in Table~\ref{distfit}. Since this coherence time is much greater than the transmitted symbol time, $T_{\rm{s}}$, and it includes hundreds or thousands of symbols, the channel fading can be considered as block fading. In other words, it is assumed that the channel matrix is estimated and is known for each block of data.

\begin{table}[t!]
	\caption{\small Statistics of orientation measurement.}
    \vspace{-0.3cm}
	\label{distfit}
	\centering
	\begin{tabular}{c c c c c c c}
		\hline
		\hline
		 & \multicolumn{3}{c}{Sitting} & \multicolumn{3}{c}{Walking} \\
		& $\alpha$& $\beta$ & $\gamma$ &  $\alpha$& $\beta$ & $\gamma$ \\
		\hline
		Mean & $\Omega$-90 &40.78& -0.84& $\Omega$-90 & 28.81& -1.35\\
		Standard deviation& 3.67& 2.39 &2.21 & 10&3.26& 5.42\\
		Kurtosis& 6.12& 7.97 &11.82 & 3.47& 3.84& 4.06\\
		Gaussian KSD& 0.07& 0.09 &0.13 & 0.02& 0.03& 0.02\\
		Laplace KSD& 0.01& 0.01 &0.04 & 0.04& 0.06& 0.05\\
        Coherence Time& 0.342& 0.377 &0.331 & 0.131& 0.176& 0.142\\
		\hline
		\hline
	\end{tabular}
    \vspace{-0.3cm}
\end{table}
\vspace{-0.3cm}
\subsection{Link Blockage Modeling}
Due to the nature of OWC, the link between a pair of Tx and Rx can be blocked by an opaque object such as a human body. In this study, we only consider the blockage due to human body or other similar objects which can be modeled as rectangular prisms. It is shown in \cite{BlockageGlob03} that MIMO can help the optical wireless networks to be robust against blockage because transmit or receive diversity is exploited. Here the model for link blockage is introduced which is used throughout the paper.

In this study, we model a human body as a rectangular prism of length, $L_{\rm b}$, width, $W_{\rm b}$, and height, $H_{\rm b}$. Two types of blockers are assumed, non-user blockers and user-blockers. The former is due to the other people or objects in the room while the latter is due to the user who is using the actual UE, also known as self-blockage. Thus, one user-blocker is considered in the direction that the user is facing, and other non-user blockers' directions are chosen from a uniform distribution of $\mathcal{U}[0^{\circ},360^{\circ}]$. Fig.~\ref{blockage} shows the blockage model and the relevant parameters that are considered in this study. 
The density of a non-user blocker is denoted by $\kappa_{\rm b}$, which is the number of non-user blockers per room area. It is assumed that non-user blockers are uniformly distributed in the room. The direction and location of the self-blocker are obtained based on the direction and location of the UE. It is assumed that the users keep the UE at a distance of $d_{\rm p}$ away from themselves. 





\begin{figure}
		\begin{center}
			\includegraphics[width=0.75\columnwidth,draft=false]{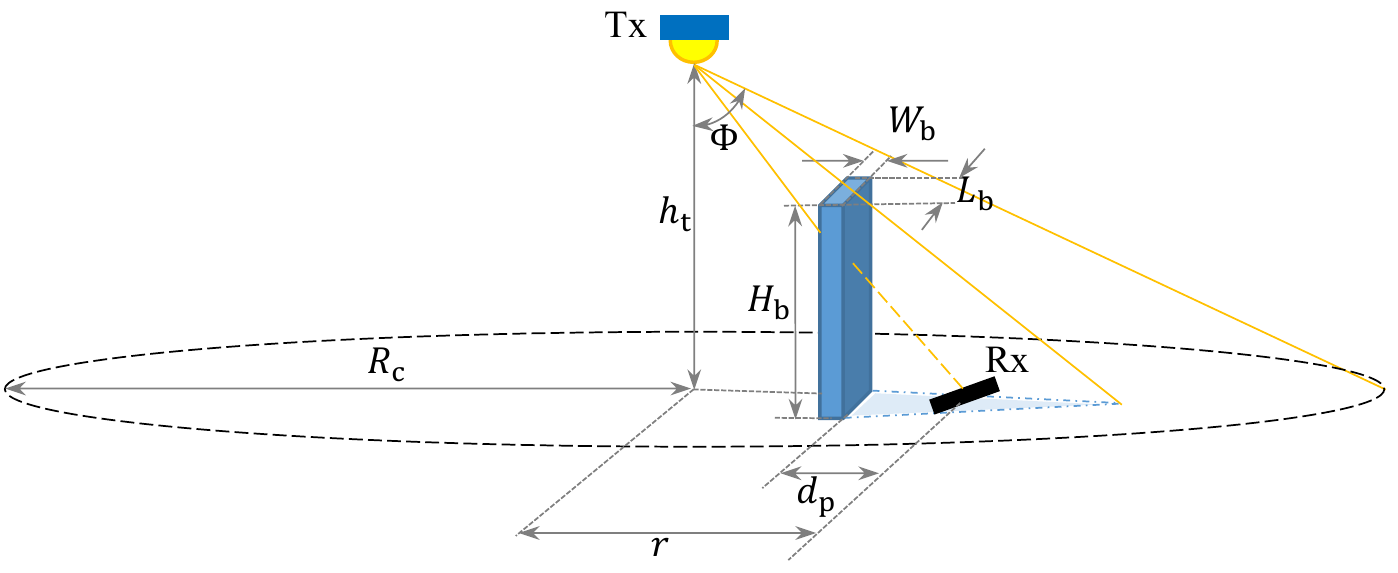}
			\caption{Geometry of link blockage.}
			\label{blockage}
		\end{center}
        \vspace{-0.7cm}
\end{figure}

The availability of the channel state information (CSI) is essential for the implementation of SM, as explained in Section \ref{sec_sm}. The measurement results in Table \ref{distfit} show that the coherence time for the orientation angles are on the order of several hundred milliseconds, and we assume that the coherence time for the channel matrix is also similar. This was confirmed in \cite{MDSArxiv2018Orientation}. Therefore, it can be assumed that the channel gains are known for each transmitted data block of length smaller than the channel coherence time. However, there can be errors in the estimation of the channel gain $h_{i,j}$, which could be independent of the channel gain itself \cite{basar2012performance,rajashekar2014reduced,gifford2005diversity}. An estimation error causes additional errors depending on the estimation method used. However, we assume that the estimation error can be ignored throughout this paper. An individual study can be carried out in the future to investigate the effect of the channel estimation error in different conditions and for various estimation methods.
\section{Performance of Downlink}\label{sec_downlink}
\subsection{System Configuration}
In this paper, we consider a typical indoor environment. Although the results may change slightly in other scenarios, it is expected that the same behavior will be observed provided that the main characteristics of the environment, such as transmitter separation, the room dimensions and the ceiling height, etc., do not vary dramatically. Fig.~\ref{RoomGeo} shows the geometric configuration of the transmitters which are arranged on the vertexes of a square lattice over the ceiling of an indoor environment. We use this configuration throughout the paper, where $16$ LED transmitters, called access points (AP), are considered in a room of $5\times5\times3$ m$^3$. Transmitters are oriented vertically downward, while the receivers may have a random orientation. 
\begin{figure}[!ht]
\centering
\begin{subfigure}[b]{0.5\columnwidth}
\centering
\includegraphics[width=1\columnwidth,draft=false]{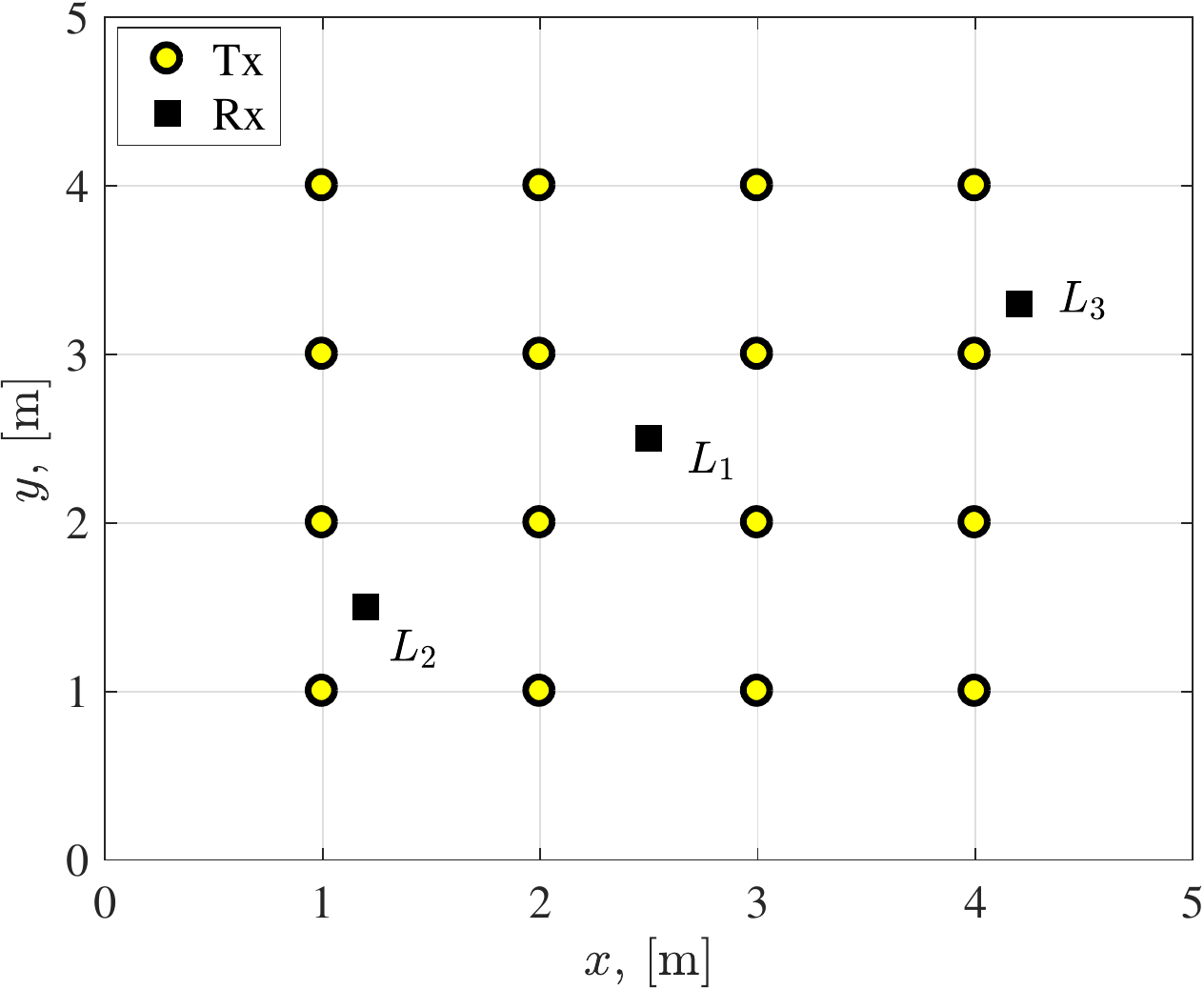}
\caption{Room geometry and transmitters arrangement}
\label{RoomGeo}
\end{subfigure}~
\begin{subfigure}[b]{0.5\columnwidth}
\centering
\includegraphics[width=0.64\columnwidth,draft=false]{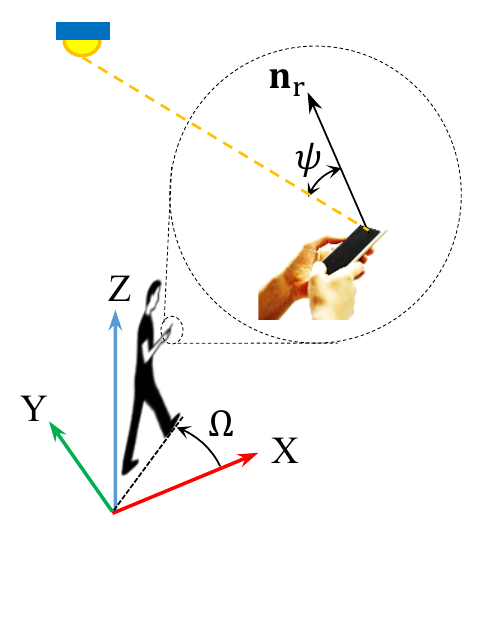}
\caption{User direction}
\label{Direction}
\end{subfigure}\vspace{-0.25cm}
\caption{Illustration of room geometry and user direction.}
\label{room}
\vspace{-0.25cm}
\end{figure}

Let us define $\Omega$ as the facing or movement direction of a user while sitting or walking, which is measured from the East direction in the Earth coordinate system as shown in Fig.~\ref{Direction}. Since $\alpha$ is measured from the North direction, we have $\mathbb{E}[\alpha]=90-\Omega$, as stated in Table \ref{distfit}. For the rest of the paper, we use $\Omega$ as it provides a better physical view, i.e., facing direction. 


\begin{table}[t]
	\centering
    \vspace{-0.3cm}
	\caption{Simulation Parameters}
    \vspace{-0.3cm}
	\label{TableSimulationParam}
	{\raggedright
		\begin{tabular}{p{125pt}|p{35pt}|p{35pt}|p{125pt}|p{35pt}|p{35pt}}
			\hline
			\parbox{130pt}{\centering{\small Parameter}} & \parbox{40pt}{\centering{\small Symbol}} & \parbox{40pt}{\centering{\small Value}} &\parbox{130pt}{\centering{\small Parameter}} & \parbox{40pt}{\centering{\small Symbol}} & \parbox{40pt}{\centering{\small Value}}\\
			\hline
			\hline
			\parbox{130pt}{\raggedright{\small Receiver FOV}} & \parbox{40pt}{\centering{\small $\Psi$}} & \parbox{40pt}{\centering{\small $60^\circ$}} &\parbox{130pt}{\raggedright{\small Ceiling height}} & \parbox{40pt}{\centering{\small $h_\mathrm{z}$}} & \parbox{40pt}{\centering{\small 3 m}}\\
			\hline
			\parbox{130pt}{\raggedright{\small LED half-power semiangle}} & \parbox{40pt}{\centering{\small $\Phi$}} & \parbox{40pt}{\centering{\small $60^\circ$}} &\parbox{130pt}{\raggedright{\small Reflectivity factor of walls}} & \parbox{40pt}{\centering{\small $\rho_\mathrm{w}$}} & \parbox{40pt}{\centering{\small $0.6$}}\\
			\hline
			\parbox{130pt}{\raggedright{\small PD responsivity}} & \parbox{40pt}{\centering{\small $R_{\rm{PD}}$}} & \parbox{40pt}{\centering{\small $1$ A/W }} &\parbox{130pt}{\raggedright{\small Reflectivity factor of the floor}} & \parbox{40pt}{\centering{\small $\rho_\mathrm{f}$}} & \parbox{40pt}{\centering{\small $0.2$}}\\
			\hline
			\parbox{130pt}{\raggedright{\small Physical area of a PD}} & \parbox{40pt}{\centering{\small $A$}} & \parbox{40pt}{\centering{\small $0.25$ cm$^2$}} &\parbox{130pt}{\raggedright{\small Reflectivity factor of the ceiling}} & \parbox{40pt}{\centering{\small $\rho_\mathrm{c}$}} & \parbox{40pt}{\centering{\small $0.8$}} \\
			\hline
			\parbox{130pt}{\raggedright{\small UE height (sitting)}} & \parbox{40pt}{\centering{\small $h_\mathrm{r}$}} & \parbox{40pt}{\centering{\small $0.8$ m}} &\parbox{130pt}{\raggedright{\small Length of the blockers}} & \parbox{40pt}{\centering{\small $L_{\rm b}$}} & \parbox{40pt}{\centering{\small $0.7$ m}} \\
			\hline
            			\parbox{130pt}{\raggedright{\small UE height (walking)}} & \parbox{40pt}{\centering{\small $h_\mathrm{r}$}} & \parbox{40pt}{\centering{\small $1.4$ m}} &\parbox{130pt}{\raggedright{\small Width of the blockers}} & \parbox{40pt}{\centering{\small $W_{\rm b}$}} & \parbox{40pt}{\centering{\small $0.2$ m}} \\
			\hline
            \parbox{130pt}{\raggedright{\small Light source height}} & \parbox{40pt}{\centering{\small $h_\mathrm{t}$}} & \parbox{40pt}{\centering{\small $2.95$ m}} & \parbox{130pt}{\raggedright{\small Height of the blockers}} & \parbox{40pt}{\centering{\small $H_{\rm b}$}} & \parbox{40pt}{\centering{\small $1.75$ m}}\\
			\hline
            \parbox{130pt}{\raggedright{\small Room dimensions}} & \parbox{40pt}{\centering{\small $h_\mathrm{x}\times h_\mathrm{y}$}} & \parbox{50pt}{{\small $5$ m$\times 5$ m}} &\parbox{130pt}{\raggedright{\small UE distance from user}} & \parbox{40pt}{\centering{\small $d_{\rm p}$}} & \parbox{40pt}{\centering{\small $0.3$ m}} \\
            \hline
			\hline
		\end{tabular}
	}
	\vspace{-0.5cm}
\end{table}

Parameters used throughout the paper are shown in Table \ref{TableSimulationParam}. The dimensions of the smartphone are $14\times 7\times 1$ cm$^3$. As shown in Fig. \ref{phones}, the PDs on the screen are placed 1 cm from the top edge, one PD for MDR at the center is considered, and 4 PDs for SR are uniformly distributed. For MDR, the PD associated with $\mathbf{n}_3$ is placed at the center of the corresponding side, and PDs shown by $\mathbf{n}_2$ and $\mathbf{n}_4$ are placed 1.5 cm from the top edge. 

\vspace{-0.1cm}

\subsection{The Effect of Blockage and Random Orientation}
Here, we investigate the effect of low and high density receiver blockage, fixed orientation versus random orientation, and the effect of the NLOS channel component. First, an example scenario is considered at the middle of the room (i.e., location $L_1$ in Fig. \ref{RoomGeo}) with fixed AP allocation with $N_\mathrm{a}=4$ and spectral efficiency $R=5$ bits/sec/Hz. The user direction is $\Omega=90^{\circ}$. The APs are determined by measuring the received power from each AP, and the strongest $N_\mathrm{a}=4$ APs, i.e., the ones corresponding to the highest received power at the user position, are selected. The results are shown in Fig. \ref{BERSRMDR} for both BER approximation in \eqref{bersm} (solid lines) and Monte-Carlo simulations (markers). Note that the statistics of random orientation for sitting activities are used according to Table I for Laplace distribution.

\begin{figure}
		\begin{subfigure}[b]{0.5\columnwidth}
			\includegraphics[width=1\columnwidth,draft=false]{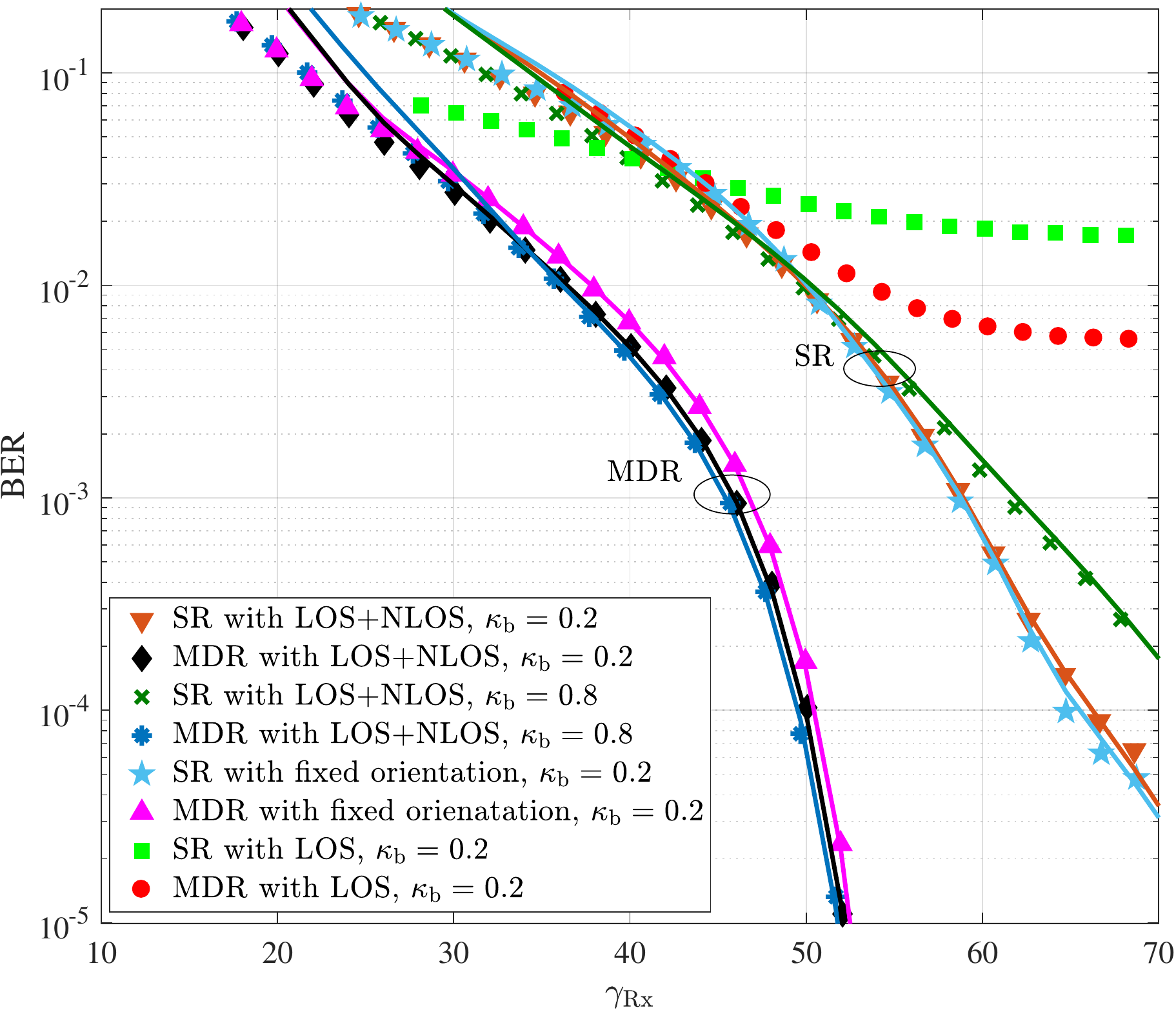}\vspace{-0.2cm}
			\caption{}
			\label{BERSRMDR}
		\end{subfigure}~
        	\begin{subfigure}[b]{0.5\columnwidth}
			\includegraphics[width=1\columnwidth,draft=false]{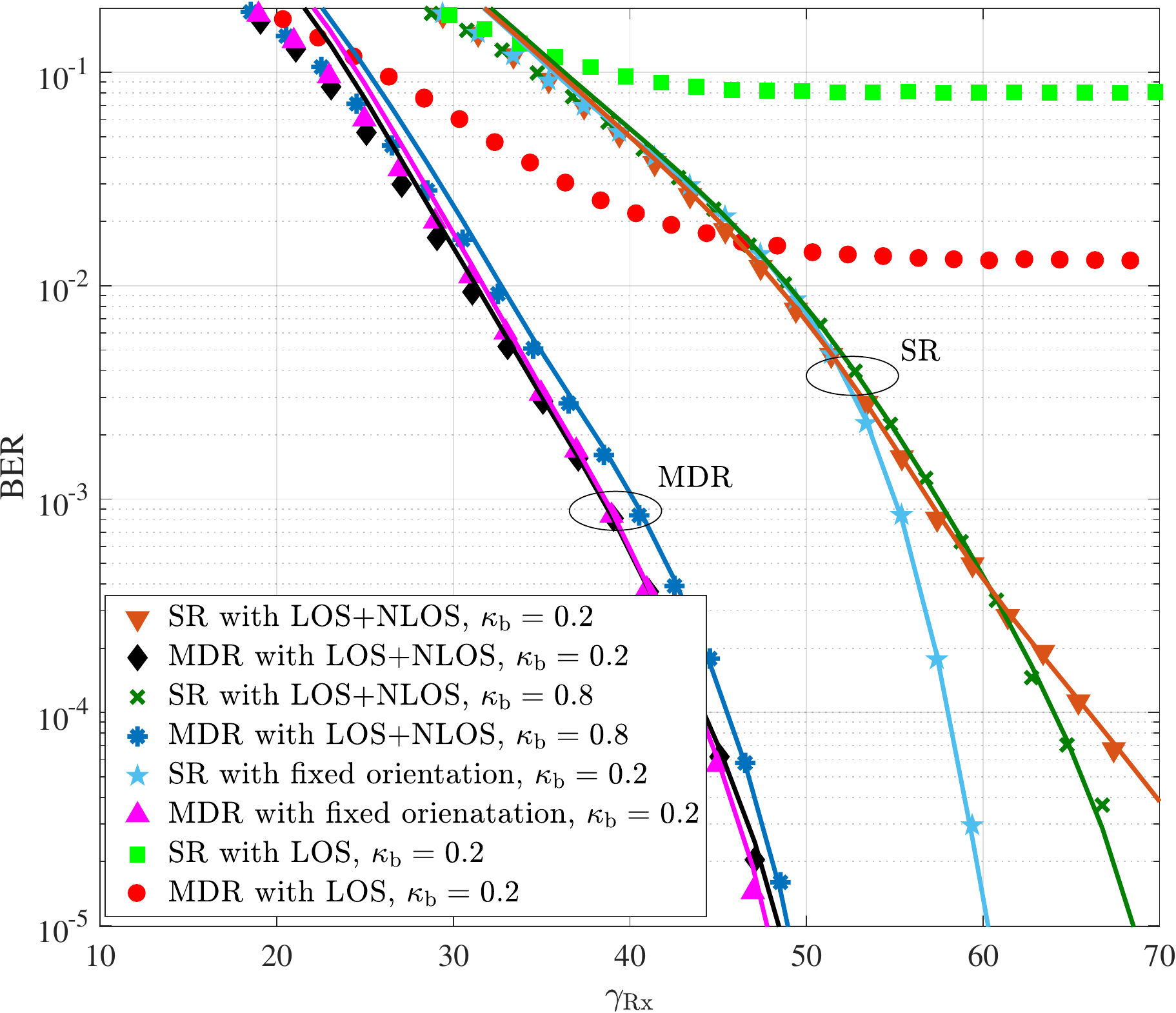}\vspace{-0.2cm}
			\caption{}
            \label{BERSRMDR2}
		\end{subfigure}\vspace{-0.3cm}
        \caption{Performance comparison of SR and MDR for UE's location of a) $L_1$ and direction of $\Omega=90^{\circ}$ b) $L_3$ and direction of $\Omega=180^{\circ}$. Marks denote Monte-Carlo simulation results and solid lines are based on the BER upper bound given in \eqref{bersm}.}
        \vspace{-0.5cm}
\end{figure}

It is observed in Fig. \ref{BERSRMDR} that the simulation results match the BER approximation at around $\mathrm{BER}\geq 10^{-2}$ for both cases. The performance of both SR and MDR is significantly degraded when the NLOS channel gains are ignored because it is highly likely that the channel gains for some of the APs are zero due to blockage or limited FOV. In such a case, the information will be lost, and the BER saturates at high SNR. The BER approximation \eqref{bersm} is not valid in this case and is not shown in Fig. \ref{BERSRMDR}. When the NLOS gain is included, the performance of MDR is always better than SR with SNR gains up to 12 dB at target BER $3.8\times10^{-3}$. For SR, the performance with random orientation is slightly worse than the fixed orientation with blockage parameter $\kappa_\mathrm{b}=0.2$. This happens because the strongest links are chosen at each random orientation, and in the middle of the room almost always some ``good'' links are found which lead to this performance. However, when the blockage parameter is increased, the BER also degrades for SR. On the other hand, both random orientation and increased blockage are beneficial for MDR since the induced randomness increases the differentiability between spatial symbols, which consequently leads to better performance. Overall, it can be seen that the proposed MDR structure is robust against random orientation and blockage, and exhibits superior performance compared to the conventional SR.

Another location, $L_3$ in Fig. \ref{RoomGeo}, is also considered with user direction $\Omega=180^{\circ}$. In this case the user is facing the room and the UE screen is facing the wall. The results are shown in Fig. \ref{BERSRMDR2}. In this scenario the effect of the NLOS channel gain is much more significant. Note that, unlike $L_1$, the channel matrix is non-symmetric at $L_3$, which generally leads to better performance compared to $L_1$. For SR, both random orientation and higher blockage density cause worse performance, and this is severe because ``good'' links are limited in the vicinity of the walls. Again, MDR outperforms SR in any condition, and demonstrates a robust performance against random orientation and blockage. In Fig. \ref{BERSRMDR2}, the random orientation has negligible effect on the performance of MDR, but the link blockage adversely affects the BER performance. This happens because the channel matrix is already non-symmetric, and blockage slightly worsens the channel matrix. The results shown in Figs. \ref{BERSRMDR} and \ref{BERSRMDR2} confirm that the proposed MDR outperforms SR and is robust against random orientation and blockage. However, the exact performance depends on the user location. This will be investigated later in this section, but first the effect of AP selection is studied in the next subsection. 

\subsection{AP Selection}
In the previous subsection, the number of selected APs was fixed and the APs were selected based on the received power at the UE location, user direction and UE orientation. It is expected that the choice of parameter  $N_\mathrm{a}$ can affect the performance of the system with fixed target BER and spectral efficiency. Therefore, the BER performance of MDR and SR are shown in Fig. \ref{BERSRMDR_AP} for position $L_2$ (see Fig. \ref{RoomGeo}) with $\Omega=0^\circ$ for $R=5$ bits/sec/Hz for $N_\mathrm{a}=1,4,16$. It is observed that the BER varies for each choice of $N_\mathrm{a}$. For MDR, $N_\mathrm{a}=16$ achieves the best performance while $N_\mathrm{a}=1$ is the best choice for SR in this specific scenario.

Fig. \ref{BERSRMDR_AP} is an example which highlights the importance of AP selection. A simple method can be defined based on this observation, and an adaptive SM (ASM) is defined. The parameter $N_\mathrm{a}$ is determined at each user position, direction, and UE orientation in order to select the one associated with the minimum energy requirement at the target BER $3.8\times10^{-3}$. This simple method is performed by calculating the required $\gamma_\mathrm{RX}$ based on BER approximation \eqref{bersm} which is a tight approximation for the target BER. 

In order to evaluate the performance of the proposed adaptive SM, the room area is divided into uniformly distributed points that are 25 cm apart in $x$ and $y$ directions. At each point, 24 user directions (every 15$^\circ$) are used, and 500 random orientation angles are generated for each user position and direction. The CDF of the required received SNR over the room is demonstrated in Fig. \ref{CDFSRMDR_AP} for MDR and SR structures. Fixed AP numbers of $N_\mathrm{a}=1$ and 16 are depicted along with the adaptive method, in which the optimum number is determined from $N_\mathrm{a}\in \{1,2,4,8,16\}$ for each user position, direction, and UE orientation. As expected, the minimum energy consumption is achieved by using this simple adaptive method, and MDR significantly outperforms SR. However, it can be seen that the CDF of the received SNR for ASM with MDR method is close to fixed $N_\mathrm{a}=16$. This indicates that whenever the complexity is a limiting factor, fixing $N_\mathrm{a}=16$ can be used for MDR. However, a similar statement is not applicable to SR.

\begin{figure}
\begin{minipage}[t]{.5\textwidth}
  \centering
  \includegraphics[width=1\linewidth]{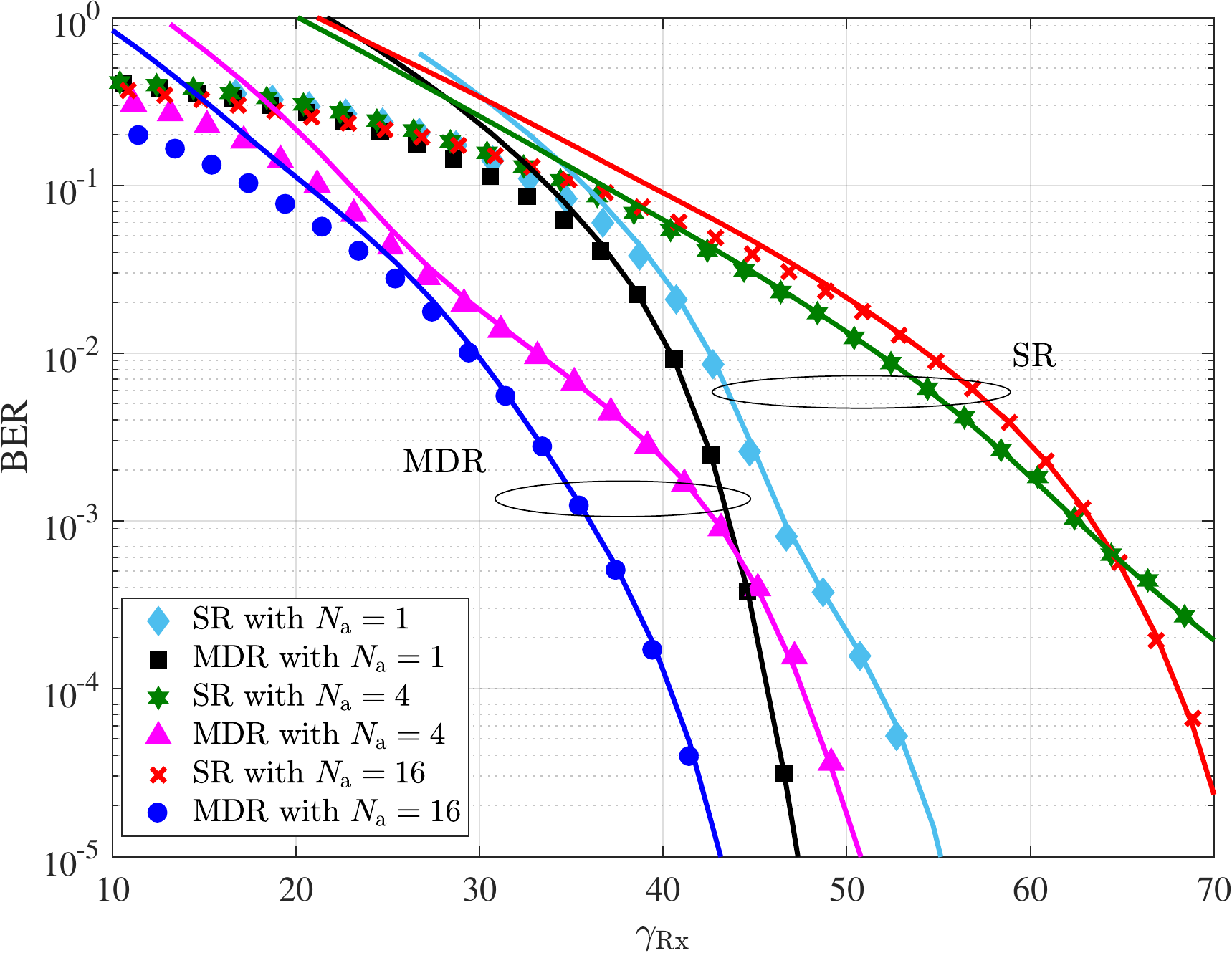}\vspace{-0.2cm}
  \captionof{figure}{Performance comparison of SR and MDR for UE's location of $L_2$ and direction of $\Omega=0^{\circ}$. Markers denote Monte-Carlo simulation and solid lines are based on the BER approximation given in \eqref{bersm}.}
  \label{BERSRMDR_AP}
\end{minipage}~
\begin{minipage}[t]{.5\textwidth}
  \centering
  \includegraphics[width=1\linewidth]{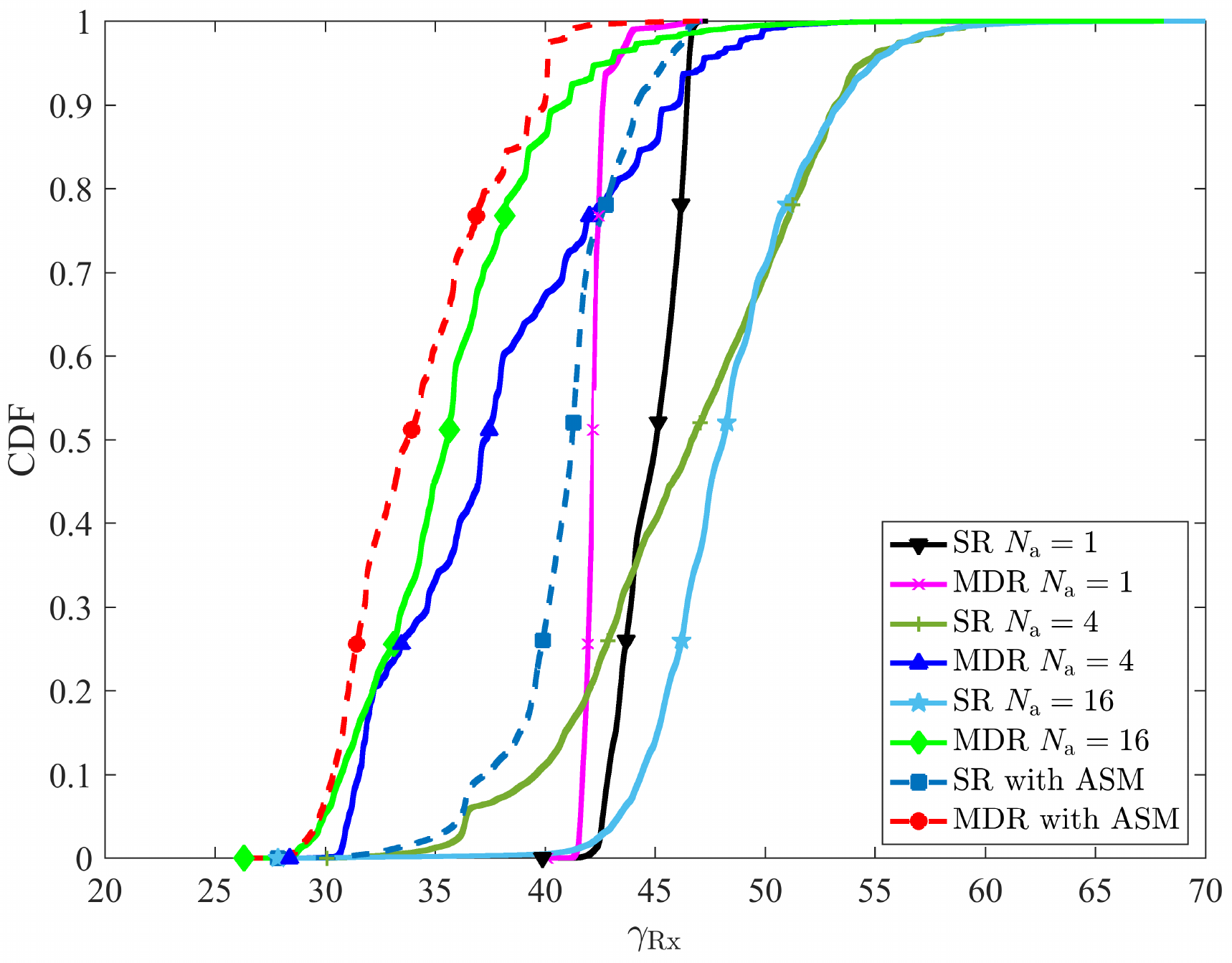}\vspace{-0.2cm}
  \captionof{figure}{Performance comparison of SR and MDR for all UE's locations and directions in the room.}
  \label{CDFSRMDR_AP}
\end{minipage}
\vspace{-0.5cm}
\end{figure}


\vspace{-0.2cm}
\subsection{Mobility}
\label{ORWP}
Performance analysis with consideration of user mobility is crucial in the design of wireless communication networks. The random waypoint (RWP) mobility model is one of the most widely used and simple models, which is utilized for the simulation-based studies of wireless networks \cite{Bettstetter}. The RWP mobility model specifies the following characterizations, i) destinations are chosen randomly following a uniform distribution in the room area, and ii) users move with a constant speed on a straight line between each two consecutive waypoints \cite{Bettstetter}. The RWP model is identified as a discrete-time stochastic random process.  
Mathematically, it can be denoted as an infinite sequence of triples, $ \left\{ \left( \mathbf{P}_{n-1}, \mathbf{P}_n, v_n \right) \big| \ n \in \mathbb{N} \right\} $, where $n$ shows the $n$th movement period. The UE moves from ${\bf{P}}_{n-1}=(x_{n-1},y_{n-1})$ to ${\bf{P}}_{n}=(x_n,y_n)$ with a constant speed of $v$. 
The angle between the direction of movement and the positive direction of the $X$-axis is defined as the user direction, $\Omega=\tan^{-1}(\frac{y_n-y_{n-1}}{x_n-x_{n-1}})$.
		
In order to provide a more realistic framework for analyzing the performance of mobile wireless networks in LiFi, it is required to combine the conventional RWP with the random orientation model. An orientation-based random waypoint (ORWP) mobility model is introduced in \cite{MDSArxiv2018Orientation}, where the elevation angle of the UE is included during user's movement. An altered version of ORWP (where $\alpha, \beta$ and $\gamma$ are encompassed) is used in this study to evaluate the system performance metrics (such as received SNR) more accurately for mobile users. The ORWP can be modeled as an infinite sequence of quadruples, $ \left\{ \left( \mathbf{P}_{n-1}, \mathbf{P}_n, v_n, \mathbf{\Theta}_n \right) \big| \ n \in \mathbb{N} \right\} $, where $\mathbf{\Theta}_{n}=(\alpha_{n},\beta_{n},\gamma_{n})$ is a random vector process describing the UE's orientation during the movement from waypoint ${\bf{P}}_{n-1}$ to waypoint ${\bf{P}}_{n}$. The entities, $\alpha$, $\beta$ and $\gamma$ are random processes (RP). The ORWP is summarized in the Algorithm~\ref{Algorith1}. 
	
	\begin{algorithm}[!t]
		\caption{Orientation-based random waypoint (ORWP)}
		\label{Algorith1}
		
		\begin{algorithmic}[1]
			\STATE Initialization: $n\longleftarrow 1$; $k\longleftarrow 1$;\\ 
			Denote ${\bf{P}}_{n}\!=\!(x_n,y_n)$ and ${\bf{P}}_{0}\!=\!(x_0,y_0)$ as the $n$th and initial UE's positions, respectively;\\
            \vspace{-1mm}
			$N$ as the number of simulation runs; $v$ as the average speed of UE;
			$T_{\rm{c,\alpha}}$, $T_{\rm{c,\beta}}$ and $T_{\rm{c,\gamma}}$ as the coherence time of $\alpha$, $\beta$ and $\gamma$, respectively; Set $T_{\rm{c}}=\min\{T_{\rm{c,\alpha}},T_{\rm{c,\beta}},T_{\rm{c,\gamma}}\}$;\\
			$\mu_{\alpha}$, $\mu_{\beta}$ and $\mu_{\gamma}$ as the mean and $\sigma_{\alpha}^2$, $\sigma_{\beta}^2$ and $\sigma_{\gamma}^2$ as the variance of Gaussian RPs $\alpha$, $\beta$ and $\gamma$;\\
            \vspace{-1mm}
			\FOR{  $n=1:N$ }
			\vspace{-1mm}
			\STATE Choose a random location ${\mathcal{P}}_{n}=(x_n,y_n)$
			\vspace{-1mm}
			\STATE Compute $\mathcal{D}_n=\|{\mathcal{P}}_{n}-\mathcal{P}_{n-1} \|$
			\vspace{-1mm}
			\STATE Compute $\Omega=\tan^{-1}\left(\frac{y_n-y_{n-1}}{x_n-x_{n-1}} \right) $
			\vspace{-1mm}
            \STATE $t_{\rm{move}}\longleftarrow 0$; 
			\vspace{-1mm}
			\WHILE{$t_{\rm{move}}\leq\frac{\mathcal{D}_n}{v}$}
			\vspace{-1mm}
			\STATE Compute ${\bf{P}}_{k}=(x_k,y_k)$ with $x_k=x_{k-1}+vT_{\rm{c}} \cos\Omega$ and $y_k=y_{k-1}+vT_{\rm{c}} \sin\Omega$
			\vspace{-1mm}
			\STATE Generate $\mathbf{\Theta}_k=(\alpha_k,\beta_k,\gamma_k)$ based on the AR(1) model
			\vspace{-1mm}
			\STATE Return $({\bf{P}}_{k-1},{\bf{P}}_{k},v,\mathbf{\Theta}_k)$ as ORWP specifications
            \vspace{-1mm}
			\STATE $k\longleftarrow k+1$
            \vspace{-1mm}
			\STATE $t_{\rm{move}}\longleftarrow t_{\rm{move}}+T_{\rm{c}}$
			\vspace{-1mm}
			\ENDWHILE
			\vspace{-1mm}
			\IF{$t_{\rm{move}}\neq \frac{\mathcal{D}_n}{v}-T_{\rm{c}}$ \& $t_{\rm{move}}\geq\lfloor\frac{\mathcal{D}_n}{v} \rfloor T_{\rm{c}}-T_{\rm{c}}$}
			\vspace{-1mm}
			\STATE Generate $\mathbf{\Theta}_k=(\alpha_k,\beta_k,\gamma_k)$ based on the AR(1) model
			\vspace{-1mm}
			\STATE ${\bf{P}}_{k}\longleftarrow {\mathcal{P}}_{n}$
			\vspace{-1mm}
			\STATE Return $({\bf{P}}_{k-1},{\bf{P}}_{k},v,\mathbf{\Theta}_k)$ as ORWP specifications
            \vspace{-1mm}
			\STATE $k\longleftarrow k+1$
            \vspace{-1mm}
			\ENDIF
            \vspace{-1mm}
			\STATE $n\longleftarrow n+1$
			\vspace{-1mm}
			\ENDFOR
		\end{algorithmic}
	\end{algorithm}

	It is shown in Section~\ref{Sec_rand_ori_blk} that $\alpha$, $\beta$ and $\gamma$ for walking activities follow a Gaussian distribution with the parameters given in Table~\ref{distfit}. According to the experimental measurements, the adjacent samples of the RPs $\alpha$, $\beta$ and $\gamma$  are correlated. Hence, to incorporate the device orientation with the RWP mobility model, a correlated Gaussian RP which statistically follows the experimental measurements should be generated. It should be noted that the random orientation process considered here can be applied to any other mobility models.
Methods to generate a correlated Gaussian RP are discussed in \cite{fox1988fast,CorrelatedGaussTrans} and references therein. A simple way of producing a correlated Gaussian RP is to use a linear time-invariant (LTI) filter and passing a white noise process through it, e.g., a linear autoregressive (AR) filter. 
Thus, after passing the white noise process, $w[k]$, through the LTI filter, the $k$th time sample of the correlated Gaussian RP, $\alpha[k]$, is expressed as:
\begin{equation}
\alpha[k]=c_0+\sum_{i=1}^{p}c_i\alpha[k-i]+w[k],
\end{equation}
where $c_i$ for $i=0,\dots,p$ are constant coefficients of the AR model with order $p$, i.e., AR($p$), and $c_0$ specifies the bias level. To characterize the AR($p$) model, we need to determine $p+2$ unknown parameters that are: $c_0,c_1,\dots,c_p,\sigma_w^2$, and $\sigma_w^2$ is the variance of white noise RP, $w$. 
These parameters can be obtained by matching the generated random process to the moments and the correlation lag between the samples \cite{CorrGausTS}. Here, we use the moments obtained through experimental measurements. Hence, a first-order AR model is sufficient to be considered for generating the correlated Gaussian RP as the mean and variance of the produced samples match the measurement results. The $k$th sample of the AR(1) model is given as:\vspace{-0.2cm}
	\begin{equation}
	\label{AR1}
	\alpha[k]=c_0+c_1\alpha[k-1]+w[k].
	\end{equation}
To guarantee the RP of $\alpha$ is wide-sense stationary, the condition $|c_1|<1$ should be fulfilled.
The mean, variance and autocorrelation function of AR(1) are given respectively as \cite{box2015time}:\vspace{-0.1cm}
	\begin{equation*}
	\label{MeanAR}
	\mu_{\alpha}=\frac{c_0}{1-c_1}, \ \ \ \ \ \ \ \sigma_{\alpha}^2=\frac{\sigma_w^2}{1-c_1^2}, \ \ \ \ \ \ \ \mathcal{R}_{\alpha}(k)=c_1^k.
	\end{equation*}
Note that $\mathcal{R}_{\rm{\alpha}}(\frac{T_{\rm{c,\alpha}}}{T_{\rm{s}}})=0.05$ where $T_{\rm{c,\alpha}}$ is the coherence time of $\alpha$ and $T_{\rm{s}}$ is the sampling time \cite{MDSArxiv2018Orientation}. Using the above equations:\vspace{-0.2cm}
	\begin{equation}
    \label{Parameters}
	c_0=(1-c_1)\mu_{\alpha},\ \ \ \ \ \ \ \sigma_w^2=(1-c_1^2)\sigma_{\alpha}^2, \ \ \ \ \ \ \ 
	c_1=0.05^{\frac{T_{\rm{s}}}{T_{\rm{c,\alpha}}}}. 
	\end{equation}
Then, the $k$th time sample of the correlated Gaussian RP, $\alpha$, can be obtained based on \eqref{AR1} and using the parameters of the AR(1) model given in \eqref{Parameters}. 
The same approach can be applied to both $\beta$ and $\gamma$ to determine the $k$th time sample of the device orientation, $\mathbf{\Theta}_{n}[k]=(\alpha_{n}[k],\beta_{n}[k],\gamma_{n}[k])$. 
According to the approach explained above, the ORWP is described in Algorithm~\ref{Algorith1}.

The simple adaptive algorithm used for the sitting scenario before is also incorporated here in conjunction with the ORWP model. About $500$ random waypoints are generated and the user walks between these points with a constant speed of $1$ m/s. The required received SNRs are calculated along the user's route for $N_\mathrm{a}\in \{1,2,4,8,16\}$ and the adaptive scheme for a target BER of $3.8\times 10^{-3}$ and spectral efficiency of $R=5$ bits/sec/Hz. The results are shown in Fig. \ref{CDFSRMDR_AP1} for $N_\mathrm{a}=1,4,16$ and ASM. Note that ASM is carried out with the choice of $N_\mathrm{a}\in \{1,2,4,8,16\}$ for each channel realization. It can be seen that once again the MDR method significantly improves the performance. The ASM for SR does not change the required SNR value compared to $N_\mathrm{a}=1$, and therefore, $N_\mathrm{a}=1$ is almost optimal for SR in the walking scenario. However, MDR with ASM improves the performance by about $2$ dB gain as compared to a fixed $N_{\rm a}=16$. 

In Fig. \ref{CDFSRMDRMIMO_AP}, the CDF of the required received SNR is simulated for ASM for MDR and SR for two different spectral efficiency values, namely, $R=4$ and $8$ bits/sec/Hz. Moreover, a $4\times4$ full MIMO (i.e., spatial multiplexing) is also considered with both structures. For the full MIMO, the strongest $4$ APs are selected for each channel realization, and the required received SNR is calculated using the union bound method for the BER approximation \cite{fath2013performance}. Interestingly, it is observed that the proposed ASM method outperforms the full MIMO system. As expected, a higher spectral efficiency demands more received SNR in any case. However, by using MDR, $R=8$ bits/sec/Hz can be achieved at similar SNRs compared to SR with $R=4$ bits/sec/Hz at the target BER. This confirms the advantage of using MDR. By increasing the spectral efficiency, the difference between the ASM and full MIMO system reduces. Clearly, the full MIMO system also benefits from the improved channel condition in MDR. Thus, in high spectral efficiencies the full MIMO can be the adopted modulation scheme as it uses spatial degrees of freedom more efficiently \cite{tavakkolnia2018energy}.    

\begin{figure}
\begin{minipage}[t]{.5\textwidth}
  \centering
  \includegraphics[width=1\linewidth]{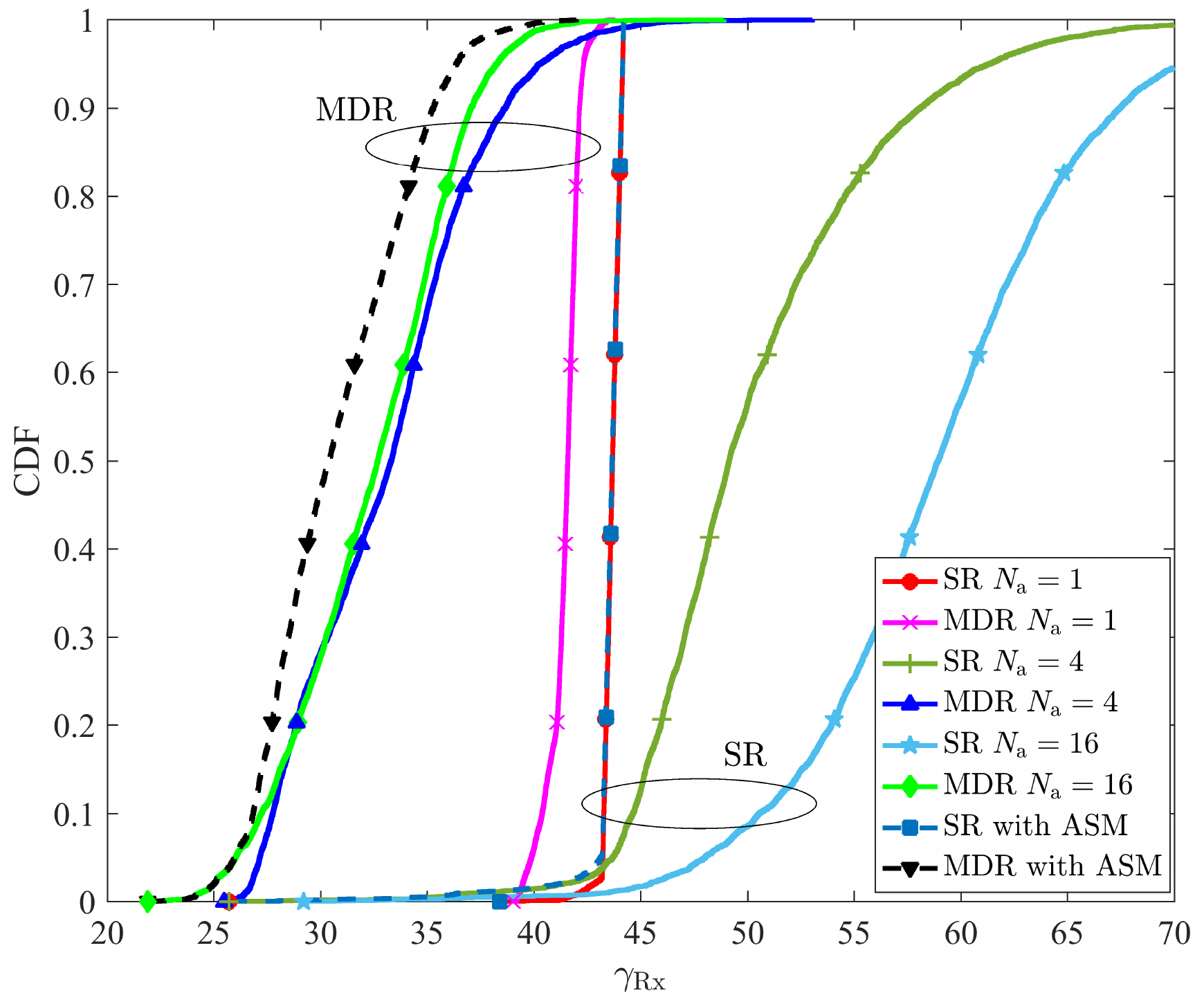}
  \captionof{figure}{Performance comparison of SR and MDR for mobile users based on RWP mobility model.}
  \label{CDFSRMDR_AP1}
\end{minipage}~
\begin{minipage}[t]{.5\textwidth}
  \centering
  \includegraphics[width=1\linewidth]{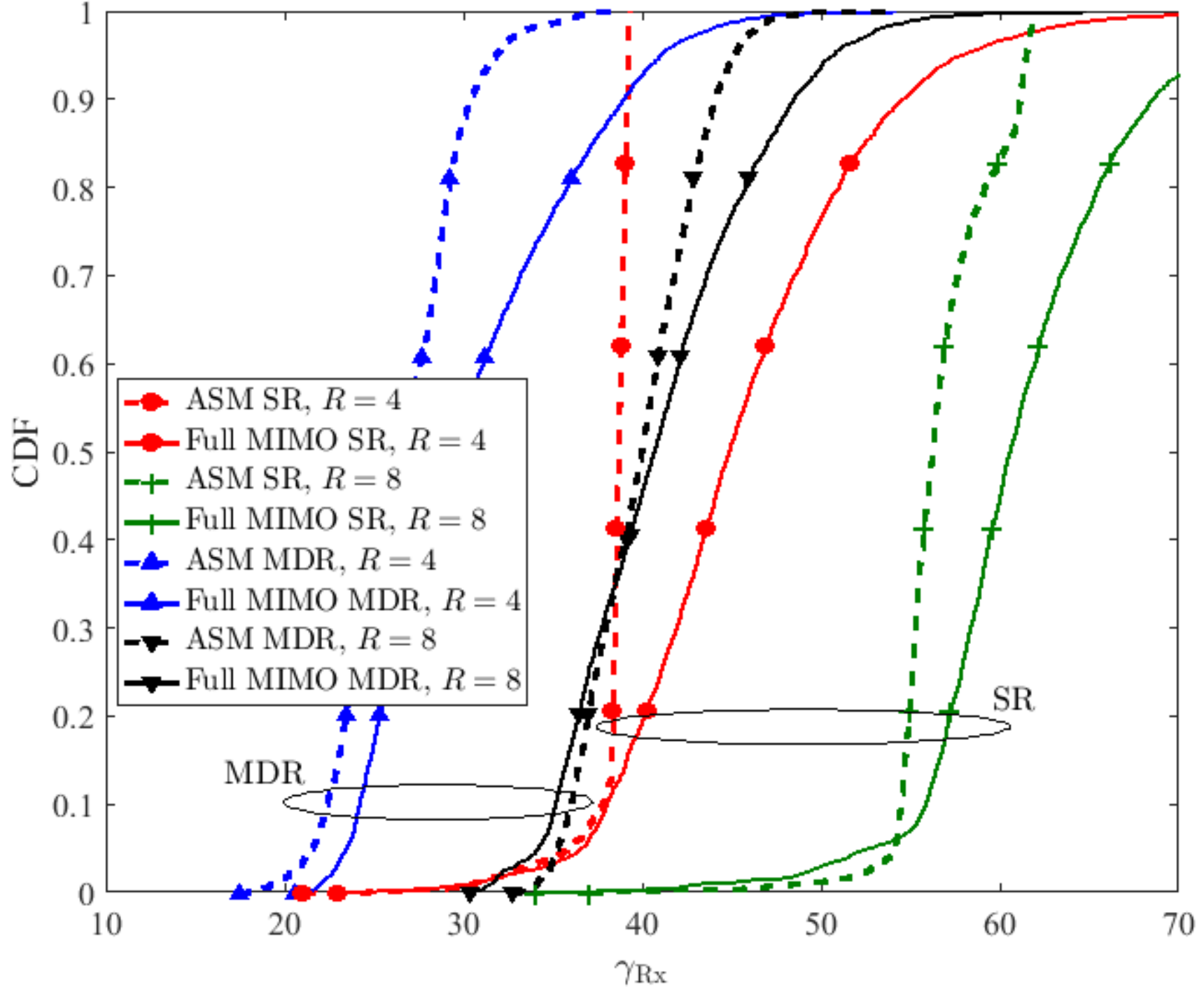}
  \captionof{figure}{Performance comparison of SR and MDR for mobile users with ASM and full MIMO utilization.}
  \label{CDFSRMDRMIMO_AP}
\end{minipage}
\vspace{-0.5cm}
\end{figure}

\section{Performance of the Uplink} \label{sec_uplink}
In indoor bidirectional OWC systems, the uplink transmission presents a fundamental part that needs to be investigated. Similar to downlink transmission, typical uplink performance measures include the average received SNR and the average BER. However, due to the limited transmit uplink power, it is of interest to analyze the energy efficiency in the uplink. \vspace{-0.3cm}
\subsection{Energy Efficiency}  
One important aspect that has to be considered in uplink transmission is the energy efficiency of the transmission scheme. In fact, since the available power at the user's device is not only small, compared to the one at the APs available for downlink transmission, but also limited with the battery reserve, energy efficient transmission schemes present a basic feature when designing mobile devices. In this section, we investigate the energy efficiency of uplink optical SM. For this purpose, we consider the same system model adopted in Section II, where the transmitter is the user's device, that is equipped with $N_{\rm t}=4$ infrared LEDs, and $N_{\rm r}=16$ receivers placed at the AP locations, where each of them is equipped with a single PD. It is important to highlight that, since in the uplink the transmitted power is usually small, the effect of the NLOS channel component can be ignored \cite{ChengOmni}. Therefore, only the LOS channel is considered in this section. \\ 
\indent The energy efficiency $\eta_{\rm EE}$ can be determined through the ratio between the received spectral efficiency $\eta_{\rm RSE}$, in bits per channel use, and the total average transmit power $P$, in Joule per channel use, \cite{ma2018energy,ngo2013energy}. In addition, $P = E_{\rm s} R_{\rm s}$, where $R_{\rm s}$ is the symbol rate. Since at each channel use, only one light source is activated, $P$ is also the average transmit power per light source. Based on this, $\eta_{\rm EE}$ is expressed as:\vspace{-0.3cm}
\begin{equation}
\eta_{\rm EE} = \frac{\eta_{\rm RSE}}{P}.
\end{equation}
Furthermore, based on \cite{ma2018energy,ngo2013energy}, the received spectral efficiency $\eta_{\rm RSE}$ can be determined through the mutual information, measured in bits per channel use, between the transmitted signal $\textbf{x}$ and the received signal $\textbf{y}$. i.e., $\eta_{\rm RSE} = I\left(\textbf{x};\textbf{y}\right)$. However, due to the discreteness of the input signaling, there is no closed form expression for the mutual information $I\left(\textbf{x};\textbf{y}\right)$. As an alternative, we use a modified energy efficiency which is defined as the ratio between the achievable rate $R_{\rm up} \leq I\left(\textbf{x};\textbf{y}\right)$ and the average transmit power $P$, i.e., \vspace{-0.3cm}
\begin{equation}
\eta_{\rm EE} = \frac{R_{\rm up}}{P}.
\end{equation}
In the following theorem, we present an achievable rate $R_{\rm up}$ for the considered optical system. 
\begin{theorem}
An achievable rate $R_{\rm up}$ for the considered optical system is expressed as $R_{\rm up}  = \max \left(L_1^+, L_2^+ \right)$, where the lower bounds $L_1$ and $L_2$ are expressed, respectively, as
\begin{equation}
\left\{ 
\begin{aligned}
&L_1 = 2 \log_2 \left( K \right) - \frac{N_{\rm r}}{2} \left( \log_2 (e) - 1 \right) - \log_2 \left[ \sum_{i=1}^K \sum_{j=1}^K \exp\left(- \frac{1}{\sigma^2} \left| \left| {\bf{H}} \left({\bf{s}}_i - {\bf{s}}_j \right) \right| \right|_2^2 \right) \right], \\ 
&L_2 = 2 \log_2 \left( K \right) +  \frac{1}{2} \log_2 \left[ \frac{|  \frac{2\sigma_x^2}{\sigma^2} {\bf{H}} {\bf{H}}^\mathrm{T} + {\bf{I}}_{N_{\rm r}}|}{| \frac{\sigma_x^2}{\sigma^2} {\bf{H}} {\bf{H}}^\mathrm{T} + {\bf{I}}_{N_{\rm r}}|} \right] - \log_2 \left[ \sum_{i=1}^{K} \sum_{j=1}^{K}  \exp \left( d_{i,j} \right) \right],
\end{aligned}
\right.
\end{equation}
such that $e$ denotes Euler constant, $\sigma_x^2 = \frac{I^2}{3 N_t} \frac{M-1}{M+1}$ is the average emitted power per LED and for all $(i,j) \in \llbracket 1, K \rrbracket^2$, $d_{i,j}$ is expressed as:
\begin{equation}
d_{i,j} = \frac{1}{\sigma^2} {\bf{s}}_i^\mathrm{T} {\bf{H}}^\mathrm{T}\! \left( \frac{2\sigma_x^2}{\sigma^2} {\bf{H}} {\bf{H}}^\mathrm{T} + {\bf{I}}_{N_{\rm r}}\!\! \right)^{-1}\!\! {\bf{H}} {\bf{s}}_j 
- \frac{\sigma_x^2}{2 \sigma^4}  \left({\bf{s}}_i-{\bf{s}}_j \right)^\mathrm{T} {\bf{H}}^\mathrm{T}{\bf{H}} {\bf{H}}^\mathrm{T} \left( \frac{2\sigma_x^2}{\sigma^2} {\bf{H}} {\bf{H}}^\mathrm{T} + {\bf{I}}_{N_{\rm r}}\!\! \right)^{-1}\!\! {\bf{H}} \left({\bf{s}}_i-{\bf{s}}_j \right).
\end{equation}
\end{theorem}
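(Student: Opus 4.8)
\emph{Proof strategy.} Because any nonnegative rate below $I(\mathbf x;\mathbf y)$ is achievable for the given uniform SM input and $I(\mathbf x;\mathbf y)\ge 0$, it suffices to prove $L_1\le I(\mathbf x;\mathbf y)$ and $L_2\le I(\mathbf x;\mathbf y)$; then $R_{\rm up}=\max(L_1^+,L_2^+)=\max(L_1,L_2,0)$ is achievable. I would start from $I(\mathbf x;\mathbf y)=h(\mathbf y)-h(\mathbf y\mid\mathbf x)=h(\mathbf y)-\tfrac{N_{\rm r}}{2}\log_2(2\pi e\sigma^2)$, which holds because $\mathbf n$ is independent of $\mathbf x$ and $\mathbf H$ is deterministic. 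So the whole problem reduces to a lower bound on the differential entropy of the Gaussian mixture $p(\mathbf y)=\tfrac1K\sum_{i=1}^K p_i(\mathbf y)$, where $p_i=\mathcal N(\mathbf H\mathbf s_i,\sigma^2\mathbf I_{N_{\rm r}})$ and the $\mathbf s_i$ are the columns of $\mathbf S$.

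The two bounds then come from a single variational inequality. Writing $\log_2 p(\mathbf y)=-\log_2 K+\log_2\sum_k p_k(\mathbf y)$, inserting an arbitrary reference density $q$ via $\sum_k p_k=q\sum_k(p_k/q)$, and applying Jensen's inequality (concavity of $\log_2$) first to $\mathbb E_{p_i}[\log_2\sum_k (p_k/q)]$ and then to the average over $i$, gives
\begin{equation*}
h(\mathbf y)\ \ge\ 2\log_2 K-\frac1K\sum_{i=1}^K\mathbb E_{p_i}\!\big[\log_2 q(\mathbf y)\big]-\log_2\sum_{i=1}^K\sum_{j=1}^K\int_{\mathbb R^{N_{\rm r}}}\frac{p_i(\mathbf y)p_j(\mathbf y)}{q(\mathbf y)}\,d\mathbf y .
\end{equation*}
For $L_1$, take $q$ constant: the middle term drops out, $\int p_i p_j\,d\mathbf y$ is the elementary Gaussian overlap, proportional to $\exp\!\big(-\|\mathbf H(\mathbf s_i-\mathbf s_j)\|_2^2/(4\sigma^2)\big)$, and collecting the Gaussian normalization against $h(\mathbf n)$ — whose residue is $\tfrac{N_{\rm r}}{2}\log_2\tfrac{4\pi\sigma^2}{2\pi e\sigma^2}=-\tfrac{N_{\rm r}}{2}(\log_2 e-1)$ — reproduces $L_1$. (This step is exactly the order-2 Rényi / collision-entropy bound $h(\mathbf y)\ge-\log_2\int p(\mathbf y)^2\,d\mathbf y$.)

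For $L_2$, after a harmless recentering of $\mathbf x$ (which leaves $I(\mathbf x;\mathbf y)$ unchanged) take $q=\mathcal N\big(\mathbf 0,\ \sigma_x^2\mathbf H\mathbf H^{\mathrm T}+\sigma^2\mathbf I_{N_{\rm r}}\big)$, a Gaussian matched to the second-order statistics of $\mathbf y$, with $\sigma_x^2$ the per-LED signal power of the theorem; abbreviate $\mathbf B=\tfrac{\sigma_x^2}{\sigma^2}\mathbf H\mathbf H^{\mathrm T}+\mathbf I_{N_{\rm r}}$ and $\mathbf A=\tfrac{2\sigma_x^2}{\sigma^2}\mathbf H\mathbf H^{\mathrm T}+\mathbf I_{N_{\rm r}}$, so that $q=\mathcal N(\mathbf 0,\sigma^2\mathbf B)$ and $\mathbf A=2\mathbf B-\mathbf I_{N_{\rm r}}$. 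Two Gaussian computations finish it: (i) $\mathbb E_{p_i}[\log_2 q(\mathbf y)]$ is a constant plus a quadratic-form expectation, and averaging over $i$ the matching gives $\tfrac1K\sum_i\mathbb E_{p_i}[\mathbf y^{\mathrm T}(\sigma^2\mathbf B)^{-1}\mathbf y]=\operatorname{tr}(\mathbf I_{N_{\rm r}})=N_{\rm r}$, so $-\tfrac1K\sum_i\mathbb E_{p_i}[\log_2 q]=\tfrac{N_{\rm r}}{2}\log_2(2\pi e\sigma^2)+\tfrac12\log_2|\mathbf B|$, which cancels $h(\mathbf n)$ up to $\tfrac12\log_2|\mathbf B|$; (ii) each $\int(p_i p_j/q)\,d\mathbf y$ is a Gaussian integral with precision matrix $\tfrac{2}{\sigma^2}\mathbf I_{N_{\rm r}}-\tfrac1{\sigma^2}\mathbf B^{-1}=\tfrac1{\sigma^2}\mathbf B^{-1}\mathbf A$, which is positive definite since $\mathbf B\succeq\mathbf I_{N_{\rm r}}$, so the integral converges and completing the square yields $\tfrac{|\mathbf B|}{\sqrt{|\mathbf A|}}\exp(d_{i,j})$. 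The $|\mathbf B|$ contributions then combine into $\tfrac12\log_2\tfrac{|\mathbf A|}{|\mathbf B|}$, giving $L_2$.

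I expect the only substantial work to be this last part: checking integrability, performing the complete-the-square step, and reducing the resulting quadratic form in $(\mathbf s_i,\mathbf s_j)$ to the precise expression for $d_{i,j}$, which requires repeated use of $\mathbf A=2\mathbf B-\mathbf I_{N_{\rm r}}$ and the mutual commuting of $\mathbf H\mathbf H^{\mathrm T}$, $\mathbf A$ and $\mathbf B$. One must also track the normalization constants so that the constant from $\mathbb E[\log_2 q]$ and the one emerging from the pairwise integrals cancel $h(\mathbf n)$ exactly — this is what pins down the parametrization of $q$ by $\sigma_x^2$.
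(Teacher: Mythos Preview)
Your proposal is correct and follows essentially the same route as the paper: both bounds arise from the inequality $h(\mathbf y)\ge -\mathbb E_{p}[\log_2 q]-\log_2\!\int p^2/q$, i.e., from applying Jensen to $\mathcal D(p\Vert q)\le\log_2\mathbb E_p[p/q]$ with $q$ either flat (yielding the collision-entropy bound $L_1$) or equal to the second-moment–matched Gaussian $\mathcal N(\mathbf 0,\sigma_x^2\mathbf H\mathbf H^{\mathrm T}+\sigma^2\mathbf I)$ (yielding $L_2$). The only presentational difference is that you package both cases into one variational inequality with an arbitrary reference $q$, whereas the paper derives $L_1$ directly from $h(\mathbf y)\ge -\log_2\!\int p_{\mathcal Y}^2$ and $L_2$ from the identity $h(\mathbf y^C)-h(\mathbf y)=\mathcal D(\mathbf y\Vert\mathbf y^C)$ followed by the same Jensen step; the ensuing Gaussian integrals and the completion-of-the-square leading to $d_{i,j}$ are identical in both.
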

\begin{proof}
See Appendix. 
\end{proof}
As it can be inferred from Theorem 1, the achievable rate $R_{\rm up}$ is the maximum of two different lower bounds $L_1$ and $L_2$. The motivation behind using two lower bounds is explained as follows. The objective is to derive a tight lower bound on the energy efficiency $\eta_{\rm EE}$ which requires the derivation of a tight lower-bound on the received spectral efficiency $\eta_{\rm RSE}$. In other words, the objective is the derivation of an achievable rate that is as close as possible to the transmit spectral efficiency $\eta_{\rm TSE}$. In our case, the lower bounds $L_1$ and $L_2$ depend on the system configuration, i.e., the triplet $(M,N_{\rm t},N_{\rm r})$. This dependency can be clearly seen in the high SNR regime. In fact, for the case where $\frac{I^2}{\sigma^2} \rightarrow \infty$, the lower bounds $L_1$ and $L_2$ can be expressed, respectively, as $L_1 = \eta_{\rm TSE} - \Delta_1$ and $L_2 = \eta_{\rm TSE} - \Delta_2$ where the gaps $\Delta_1$ and $\Delta_2$ are expressed, respectively, as:\vspace{-0.1cm}
\begin{equation}
\left\{ 
\begin{aligned}
&\Delta_1 \approx \frac{N_{\rm r}}{2} \left( \log (e) - 1 \right), \\ 
&\Delta_2 \approx \log_2 \left[ \sum_{i=1}^{M}  \exp \left( \frac{3N_{\rm t}\left(2i-M-1\right)}{2\left(M^2-1\right)}  \right) \right]  -\log_2 \left( M \right) -  \frac{1}{2},
\end{aligned}
\right.
\end{equation}
and consequently, the achievable rate $R_{\rm up}$ is expressed at high SNRs as $R_{\rm up} = \eta_{\rm TSE} - \Delta$, where $\Delta = \min \left( \Delta_1, \Delta_2 \right)$. For instance, when $(M,N_{\rm t},N_{\rm r}) = (4,4,16)$, $\Delta_1 = 3.5416$, $\Delta_2 = 0.0319$ and $\Delta = 0.0319$ whereas when $(M,N_{\rm t},N_{\rm r}) = (4,16,4)$, $\Delta_1 = 0.8854$, $\Delta_2 = 4.4850$ and $\Delta = 0.8854$. This example shows that, when one lower bound is lossy, the other bound compensates for this loss. Note that lower bound $L_1$ is more suitable for downlink and lower bound $L_2$ is more suitable for uplink. In addition, based on its structure and the above example, the achievable rate $R_{\rm up}$ is tight, which implies the tightness of the energy efficiency $\eta_{\rm EE}$.\vspace{-0.3cm}
\subsection{Adaptive Spatial Modulation for Uplink}
Due to the mobility and random orientation of the UE, and in conjunction with the blockage effect within the room, the channel gain varies over time and location. This implies that, for a given channel use, one or more transmit LEDs are inactive, which highly affects the performance of the system. Hence, adaptive modulation is necessary to guarantee reliable and energy efficient transmissions. Here, we propose using ASM (similar to the downlink case), which is based on LEDs selection. 

Assuming that the transmitter knows perfectly the channel matrix $\textbf{H}$ and the noise variance $\sigma^2$ at the receiver, one can  apply a certain selection criterion to estimate the number and the indexes of the ``active" transmit LEDs, which we denote by $N_a$. It is important to highlight that the performance of the system is directly dependent on the choice of the LEDs selection criterion. A simple selection criterion can be  based on the channel matrix $\textbf{H}$. It may consider a LED as active if the total sum of the magnitudes of the channel gains between the considered LED and the received PDs is higher than a fixed threshold $\delta$. In other words, for all $i \in \llbracket 1,N_{\rm t} \rrbracket$, the $i$th LED is active if $\delta \leq ||\textbf{h}_i||_\infty$, where $\textbf{h}_i$ denotes the $i$th column of the channel matrix $\textbf{H}$. However, in order to achieve better performance, the selection criterion should take into account the target performances, i.e., the target spectral efficiency $\eta_{\rm TSE}$ and the target BER $P_{\rm e,tr}$. \\ 
\indent In this paper, we consider the following selection criterion. We assume that the $N_{\rm t}$ LEDs are operating independently, where each of them should achieve the target spectral efficiency $\eta_{\rm TSE}$. In this case, the PAM order per light source is $M=2^{\eta_{\rm TSE}}$. A light source is active if its achieved BER is lower than the target BER $P_{\rm e,tr}$, i.e. $\mathrm{BER}\left(M, E_{\rm s}, {\textbf{h}}_{i} \right) \leq P_{\rm e,tr}$. Based on this criterion, our proposed LEDs selection algorithm is presented in the following. First, we order the column vectors of the matrix $\textbf{H}$ in an ascending order and the resulting matrix is denoted by $\tilde{\textbf{H}} = \left[ \tilde{\textbf{h}}_1, \tilde{\textbf{h}}_2,...,\tilde{\textbf{h}}_{N_{\rm t}} \right]$, so that, 
$||\tilde{\textbf{h}}_1||_2 \leq ||\tilde{\textbf{h}}_2||_2 \leq ... \leq ||\tilde{\textbf{h}}_{N_{\rm t}}||_2$. 
Let $f$ be the integer function that defines the index of the transmit light source after ascending order, i.e., for all $i \in \llbracket 1,N_{\rm t} \rrbracket$, $f(i)$ denotes the ordered index of $i$. In this case, for all $ i \in \llbracket 1,N_{\rm t} \rrbracket$, $\textbf{h}_i = \tilde{\textbf{h}}_{f(i)}$. Moreover, let $\mathcal{G}$ be the set of admissible numbers of active light sources, i.e., $\mathcal{G} \triangleq \left\{ i \in \llbracket 1,N_{\rm t} \rrbracket \left| \right. \, \, \log_2 \left( N_{\rm t} - i + 1 \right) \in \mathbb{N} \right\}$. In addition, let $r$ be the cardinality of $\mathcal{G}$ and let $\textbf{g} = \left[g_1, g_2,...,g_r \right]$ be the $1 \times r$ vector that contains all elements of $\mathcal{G}$ but in an ascending order. \\
The LEDs selection algorithm is detailed as follows. We start by $g_1$ and we consider the channel vector $\tilde{\textbf{h}}_{g_1}$. If $\mathrm{BER}\left(M, E_{\rm s}, \tilde{\textbf{h}}_{g_1} \right) \leq P_{\rm e,tr}$, then there is $N_{\rm a} = N_{\rm t} - g_1 + 1$ active LEDs, where their indexes are given by $\mathcal{E} = \left\{ \left. f^{-1}(j) \right| \, \, j \in \llbracket g_1,N_{\rm t} \rrbracket  \right\}$. Otherwise, we redo the same test to the index $g_2$. Thus, if $\mathrm{BER}\left(M, E_{\rm s}, \tilde{\textbf{h}}_{g_2} \right) \leq P_{\rm e,tr}$, then there is $N_{\rm a} = N_{\rm t} - g_2 + 1$ active LEDs, where their indexes are given by $\mathcal{E} = \left\{ \left. f^{-1}(j) \right| \, \, j \in \llbracket g_2,N_{\rm t} \rrbracket  \right\}$. Otherwise, we continue until either we reach a non-null number of active LEDs $N_{\rm a}$ or when all the LEDs are inactive, i.e., $N_{\rm a}=0$. In the latter case, the communication fails and the user should be notified to change its orientation or location. Based on the above, the detailed LEDs selection algorithm is given in Algorithm \ref{Algorith2}.\vspace{-0.3cm}
\begin{algorithm}[!t]
		\caption{LEDs Selection Algorithm}
		\label{Algorith2}		
		\begin{algorithmic}[1]
        	\STATE Input: $\textbf{H}$; \\
            \vspace{-1mm}
            \STATE Construct: $\tilde{\textbf{H}}$ and $\textbf{g}$; \\
            \vspace{-1mm}
			\STATE Initialization: $N_{\rm a} \longleftarrow 0$, $\mathcal{E} \longleftarrow \left\{ \right\}$, $state \longleftarrow 0$, $i\longleftarrow 1$ and $r \longleftarrow \rm{length} \left(\textbf{g} \right) $;\\ 
            \vspace{-1mm}
			\WHILE{$state = 0$ and $i \leq r$}
			\vspace{-1mm}
            \IF{$\mathrm{BER}\left(M, E_{\rm s}, \tilde{\textbf{h}}_{g_i} \right) \leq P_{\rm e,tr}$}
			\vspace{-1mm}
			\STATE $N_{\rm a} \longleftarrow N_{\rm t} - g_i + 1$
			\vspace{-1mm}
			\STATE $\mathcal{E} \longleftarrow \left\{ \left. f^{-1}(j) \right| \, \, j \in \llbracket g_i,N_{\rm t} \rrbracket  \right\}$
			\vspace{-1mm}
			\STATE $state \longleftarrow 1$
            \vspace{-1mm}
			\ENDIF
			\vspace{-1mm}
            \STATE $i\longleftarrow i+1$
            \vspace{-1mm}
			\ENDWHILE
            \vspace{-1mm}
            \IF{$N_{\rm a} = 0$}
			\vspace{-1mm}
			\STATE Communication failed. User should change its orientation or location.
			\ENDIF
		\end{algorithmic}
	\end{algorithm}
    \vspace{-0.3cm}
\subsection{Simulation Results}
In order to validate the proposed modulation schemes, we consider the same indoor environment adopted for the downlink. The uplink system consists of a UE, possibly with random orientation, that is equipped with $N_{\rm t} = 4$ infrared LEDs, and $N_{\rm r} = 16$ receivers placed at the AP locations on the ceiling of the room. Furthermore, each receiver is equipped with a facing-down single PD. Two configurations for the UE are considered, namely, ST and MDT as shown in Fig.~\ref{smartphone1} and Fig.~\ref{smartphone2}, respectively, where the infrared LEDs are located beside the PDs.


\begin{figure}[t]
		\centering
		\begin{subfigure}[b]{0.5\columnwidth}
			\centering
			\includegraphics[width=1\columnwidth,draft=false]{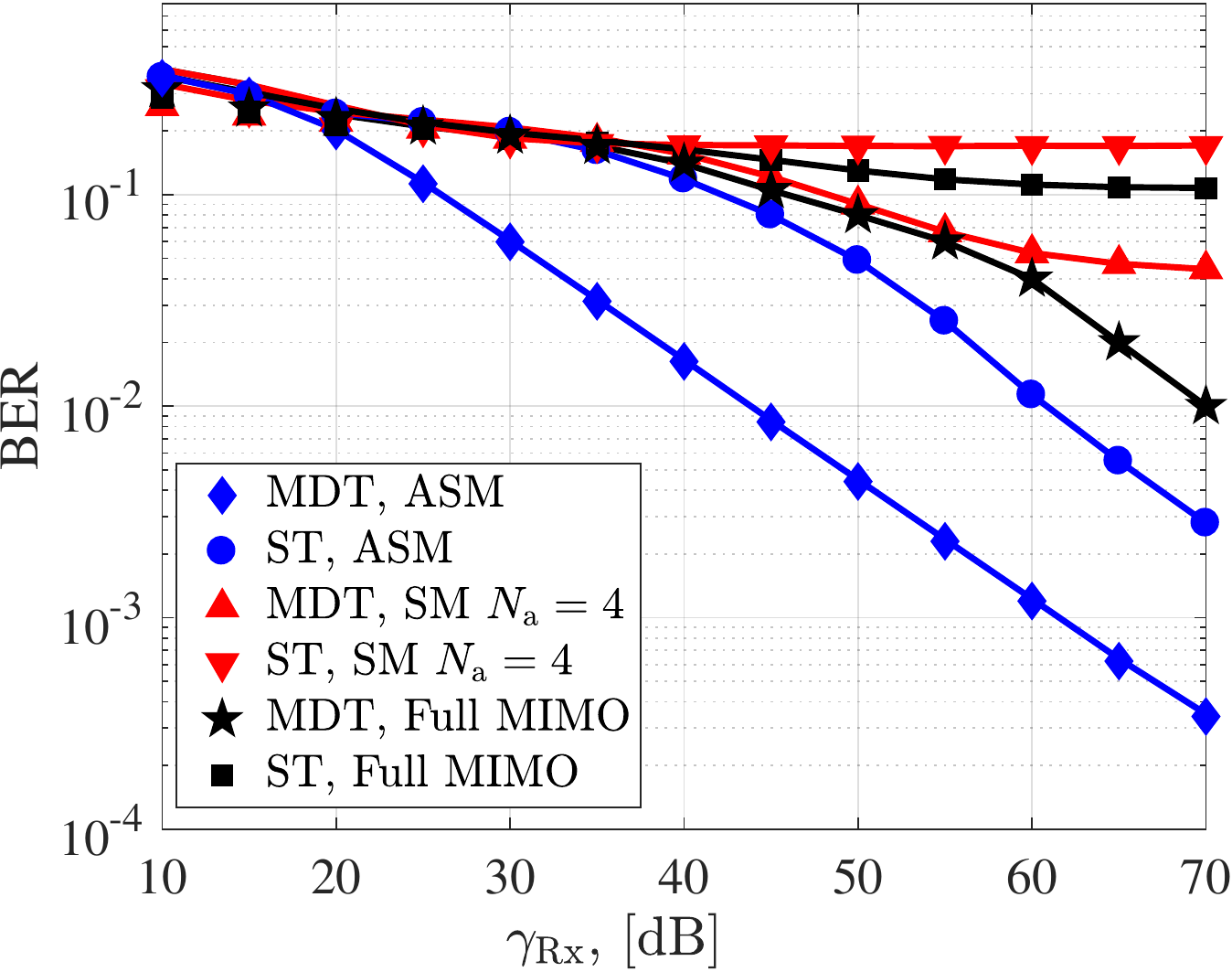}
						\caption{Sitting scenario.}
				\label{BERup2}
                \vspace{-0.3cm}
		\end{subfigure}%
		~
		\begin{subfigure}[b]{0.5\columnwidth}
			\centering
			\includegraphics[width=1\columnwidth,draft=false]{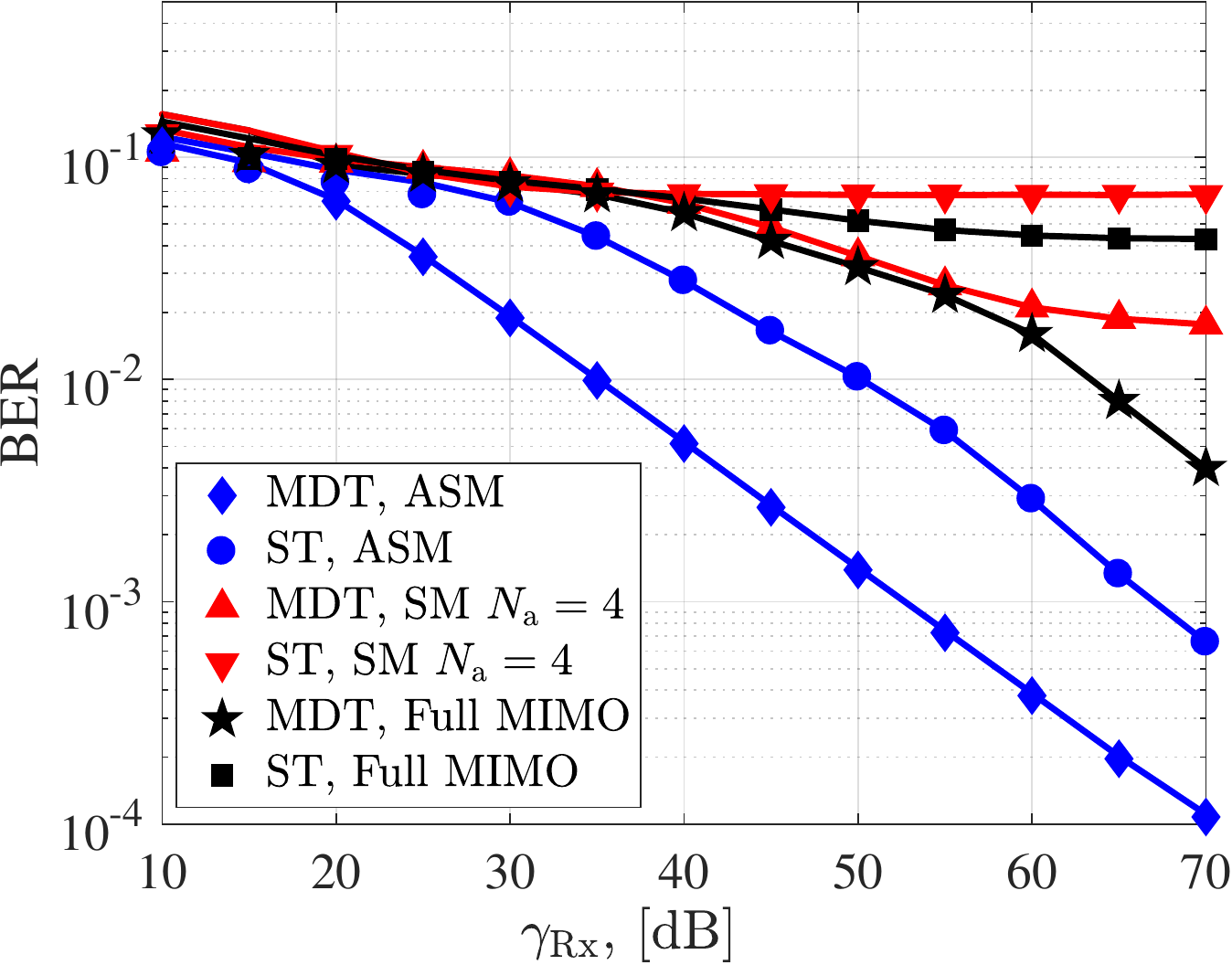}
						\caption{Walking scenario.}
				\label{BERup1}
                \vspace{-0.3cm}
		\end{subfigure}
				\caption{BER versus average received SNR $\gamma_{\rm Rx}$. Marks denote Monte-Carlo simulations and solid lines denotes theoretical results.}
				\label{BERup}
                \vspace{-0.6cm}
\end{figure}

\subsubsection*{BER Evaluation} Figs.~\ref{BERup2} and \ref{BERup1} present the average BER versus the average received SNR $\gamma_{\rm Rx}$ for the sitting and walking scenarios, respectively. For the sitting scenario, the average BER is obtained over all room locations, whereas for the walking scenario, the ORWP described in Algorithm~\ref{Algorith1} is considered. Results are demonstrated for the MDT and ST configurations using the ASM, SM and full MIMO schemes. Both figures show that MDT outperforms ST. As can be seen, for a BER of $3.8\times10^{-3}$ and when ASM is employed, there is a gap of approximately $15$ dB between MDT and ST configurations for both the sitting and walking scenarios. On the other hand, Figs.~\ref{BERup2} and ~\ref{BERup1} show that ASM significantly outperforms SM. Finally, by comparing both figures, we note that the average BER of mobile users is lower than that of static users. This is attributed to the fact that, when the user is moving, the height of the UE is higher compared to the sitting case. Therefore, for the walking scenario, the user's device is closer to the APs than the case in the sitting scenario, which increases the average received SNR. The other interesting observation refers to the notable gap between the proposed ASM algorithm, SM and full MIMO.   

\subsubsection*{Energy Efficiency Evaluation} The average received energy efficiency $\eta_{\rm EE}$ versus the average received spectral efficiency $\eta_{\rm RSE}$ are presented in Figs.~\ref{EE2} and \ref{EE1} for the sitting and walking scenarios, respectively. Performance of both MDT and ST using ASM, SM and full MIMO schemes are compared. These figures show that MDT outperforms ST. For example, for the walking scenario and when $\eta_{\rm RSE}=3$ bits/s/Hz, we can see that MDT is approximately $4$ times more energy efficient than ST. Whereas when $\eta_{\rm RSE}=2$ bits/s/Hz and SM is employed, it can be seen that MDT is approximately $1.5$ times more energy efficient than ST. On the other hand, Figs.~\ref{EE2} and ~\ref{EE1} show that ASM outperforms both SM and full MIMO. As an illustration, we can see that for the sitting scenario and when $\eta_{\rm RSE}=1.5$ bits/s/Hz, there is a gap of approximately $70$ bits/J between ASM and SM when MDT is adopted, whereas for ST, the gap is approximately $20$ bits/J. Finally, by comparing both figures and similar to the spectral efficiency evaluation, we note that the resulting average energy efficiency of mobile users is higher than the one of static user. \vspace{-0.3cm}
\begin{figure}[t]
		\centering
		\begin{subfigure}[b]{0.5\columnwidth}
			\centering
			\includegraphics[width=1\columnwidth,draft=false]{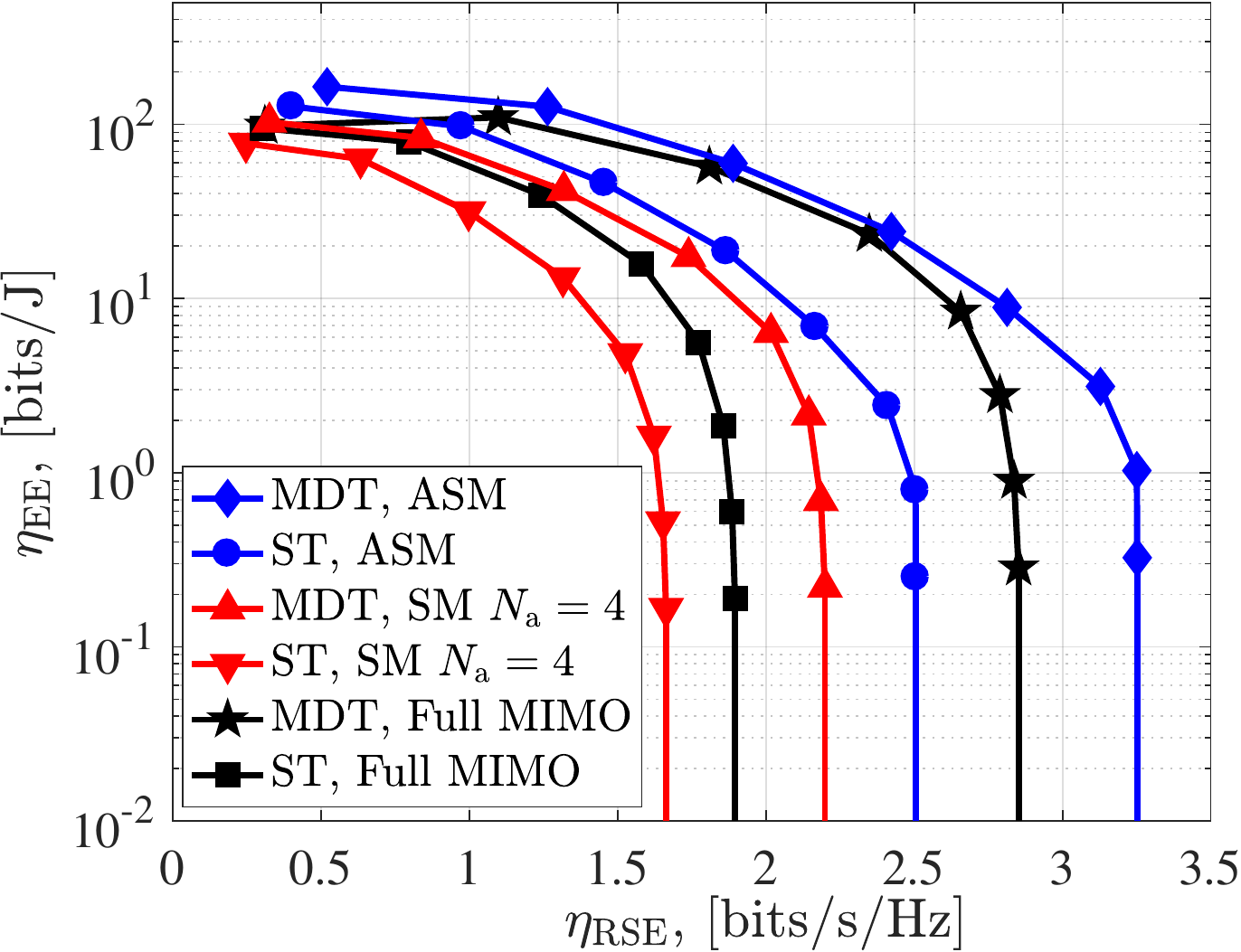}
						\caption{Sitting scenario.}
				\label{EE2}
                \vspace{-0.3cm}
		\end{subfigure}%
		~
		\begin{subfigure}[b]{0.5\columnwidth}
			\centering
			\includegraphics[width=1\columnwidth,draft=false]{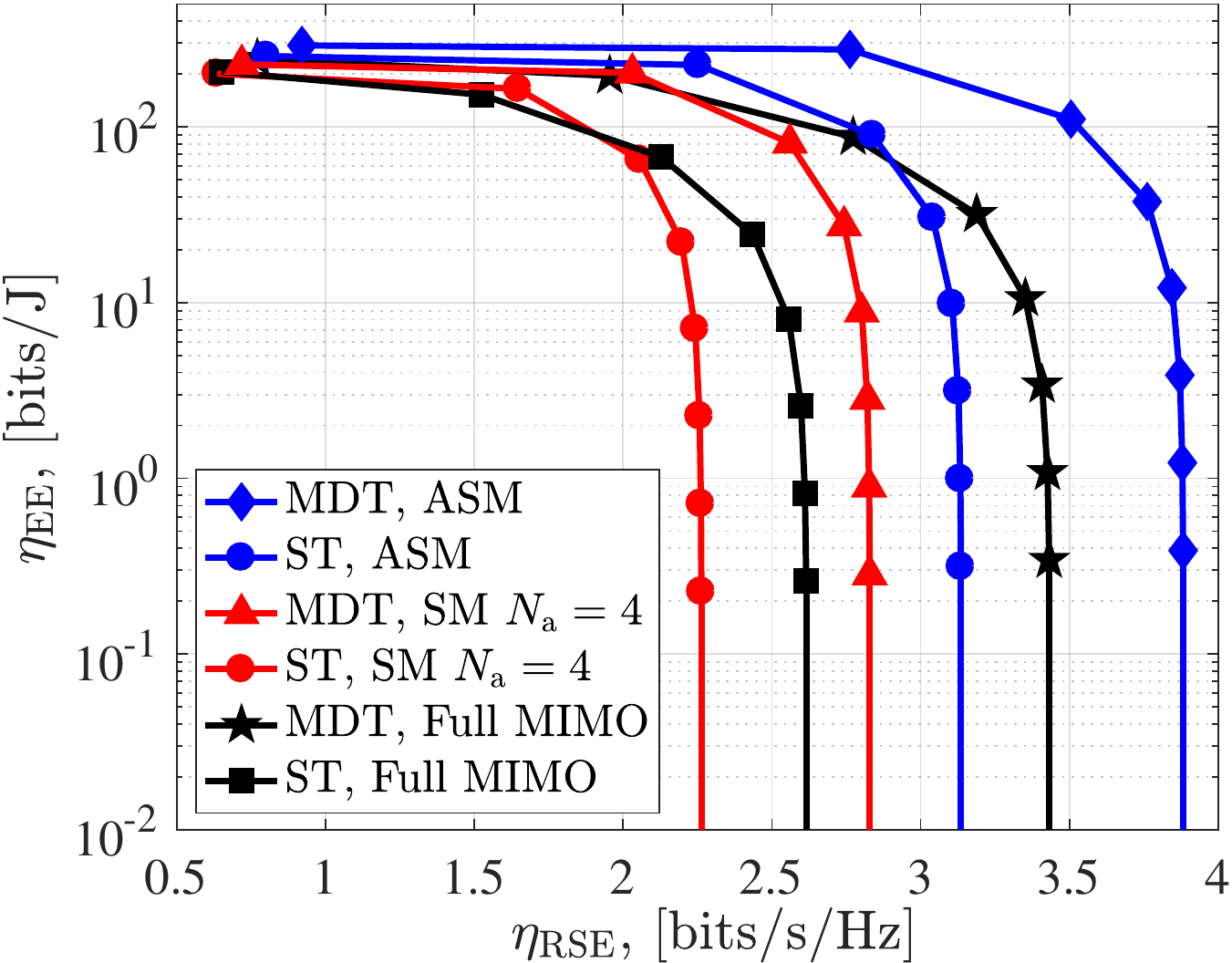}
						\caption{Walking scenario.}
				\label{EE1}
                \vspace{-0.3cm}
		\end{subfigure}
				\caption{Average energy efficiency $\eta_{\rm EE}$ versus received spectral efficiency $\eta_{\rm RSE}$.}
				\label{EE}
                \vspace{-0.6cm}
\end{figure}
\section{Conclusions and Future Work} \label{sec_conc}
In this paper, the effects of mobility, random orientation and blockage on indoor bidirectional optical SM were investigated while adopting a channel model derived from real-life measurements. A new user device configuration, called MDR in downlink and MDT in uplink were proposed to overcome the problem of high channel correlation. In addition, an adaptive SM scheme based on AP selection for downlink (and PDs for uplink) was proposed to overcome the effect of random orientation and blockage and to reduce power consumption. It was shown that MDR/MDT provide superior performance in terms of SNR, BER and energy efficiency and significantly outperform their counterpart SR/ST configurations. Since the proposed schemes were analyzed separately, a future extension of this work could be a joint optimization of the uplink/downlink, especially if the channel gains are available at the transmitter. On the other hand, recall that the performance of SM in this paper was investigated for an indoor communication system. Therefore, the validity of the obtained results for outdoor environments should also be considered in future work. Moreover, incorporating the multiuser scenario and SM will be another direction that we will explore in a future study.  

\appendix 
\section*{Proof of Theorem 1}
Based on \cite{cover2006elements}, the mutual information $I(\textbf{x};\textbf{y})$ can be expressed as:
\begin{equation}
I(\textbf{x};\textbf{y}) = h (\textbf{y}) - h ( \textbf{y} | \textbf{x} ),
\end{equation}
where $h ( \textbf{y} | \textbf{x} ) = h ( \textbf{n} ) = \frac{N_{\rm r}}{2} \log \left(2 \pi e \sigma^2  \right) $. In addition, since $\textbf{y} = \textbf{H}  \textbf{x} + \textbf{n}$ and $p_{\mathcal{X}} \left( \textbf{x} \right) = \frac{1}{K} \sum_{i=1}^K \delta \left(\textbf{x} - {\bf{s}}_i \right)$, the PDF of the received signal $\textbf{y}$ is a mixture of multivariate Gaussian distribution that is expressed, for all $\textbf{z} \in \mathbb{R}^{N_{\rm r}}$ as $p_\mathcal{Y} (\textbf{z}) = \frac{1}{K} \sum_{i=1}^K \mathcal{N} \left(\textbf{z} \left| \textbf{H}{\bf{s}}_i \right., \sigma^2 \textbf{I}_{N_{\rm r}} \right)$, where for all $\boldsymbol \mu \in \mathbb{R}^{N_{\rm r}}$ and for all $N_{\rm r} \times N_{\rm r}$ invertible matrix $\textbf{G}$, 
\begin{equation}
\mathcal{N} \left(\textbf{z} \left| \boldsymbol \mu \right., \textbf{G} \right) \triangleq \frac{1}{\sqrt{\left(2 \pi \right)^{N_{\rm r}}\left| \textbf{G} \right|}} \exp \left( - \frac{\left(\textbf{z} - \boldsymbol \mu \right)^\mathrm{T}\textbf{G}^{-1}\left( \textbf{z} - \boldsymbol \mu \right) }{2} \right).
\end{equation}
However, since there is no closed-form expression for the differential entropy of a mixture of multivariate Gaussian distribution, the derivation of $h\left( \textbf{y} \right)$ is not straightforward. Therefore, we provide in the following two lower bounds on the differential entropy $h\left( \textbf{y} \right)$, which lead to the lower bounds $L_1$ and $L_2$ provided in the theorem 1.
\subsection{Lower bound {$L_1$}}
\label{App1}
The differential entropy $h\left( \textbf{y} \right)$ can be lower bounded as:
\begin{subequations}
\begin{align}
h\left( \textbf{y} \right) &= - \int_{\mathbb{R}^{N_{\rm r}}} p_\mathcal{Y} (\textbf{z}) \log_2 \left[ p_\mathcal{Y} (\textbf{z}) \right] \mathrm{d} \textbf{z}\\ 
&\geq - \log_2 \left[ \int_{\mathbb{R}^{N_{\rm r}}} p_\mathcal{Y} (\textbf{z})^2 \mathrm{d} \textbf{z} \right] \\ 
&= -\log_2 \left[ \frac{1}{K^2}  \sum_{i=1}^K \sum_{j=1}^K \int_{\mathbb{R}^{N_{\rm r}}} \mathcal{N} \left(\textbf{z} \left| \textbf{H}{\bf{s}}_i \right., \sigma^2 \textbf{I}_{N_{\rm r}} \right) \mathcal{N} \left(\textbf{z} \left| \textbf{H}{\bf{s}}_j \right., \sigma^2 \textbf{I}_{N_{\rm r}} \right) \mathrm{d} \textbf{z} \right] \\ 
&= -\log_2 \left[ \frac{1}{K^2}  \sum_{i=1}^K \sum_{j=1}^K \frac{\exp \left( -\frac{1}{4 \sigma^2} \left| \left| \textbf{H} \left( {\bf{s}}_i - {\bf{s}}_j \right) \right| \right|_2^2 \right)}{\left(4 \pi \sigma^2 \right)^{\frac{N_{\rm r}}{2}}} \underbrace{\int_{\mathbb{R}^{N_{\rm r}}} \mathcal{N} \left(\textbf{z} \left| \textbf{H} \frac{\left({\bf{s}}_i + {\bf{s}}_j \right)}{2}  \right., \frac{\sigma^2}{2} \textbf{I}_{N_{\rm r}} \right) \mathrm{d} \textbf{z} }_{=1} \right] \\ 
&= \frac{N_{\rm r}}{2} \log_2 \left(2 \pi e \sigma^2  \right)- \frac{N_{\rm r}}{2} \left( \log_2 \left(e \right) - 1 \right) - \log \left[ \frac{1}{K^2} \sum_{i=1}^K \sum_{j=1}^K \exp \left( -\frac{1}{4 \sigma^2} \left| \left| \textbf{H} \left( {\bf{s}}_i - {\bf{s}}_j \right) \right| \right|_2^2 \right) \right],
\end{align}
\end{subequations}
where inequality (23b) follows from the concavity of the logarithmic function and Jensen's inequality. Consequently, by substituting each term in (22) by its expression, the mutual information $I(\textbf{x};\textbf{y})$ can be lower bounded as $ L_1^+ \leq I(\textbf{x};\textbf{y})$, where 
\begin{equation}
L_1 = 2 \log_2 \left[ K \right] - \frac{N_r}{2} \left( \log_2 \left(e \right) - 1 \right) - \log \left[ \sum_{i=1}^K \sum_{j=1}^K \exp \left( -\frac{1}{4 \sigma^2} \left| \left| \textbf{H} \left( {\bf{s}}_i - {\bf{s}}_j \right) \right| \right|_2^2 \right) \right].
\end{equation}
\subsection{Lower bound {$L_2$}}
\label{App2}
Let $\textbf{K}_\textbf{x}$ be the covariance matrix of the transmitted signal $\textbf{x}$. Consider the Gaussian vectors $\textbf{x}^C$ and $\textbf{y}^C$, where $\textbf{x}$ follows $\mathcal{N} \left(\textbf{0}_{N_{\rm r}}, \textbf{K}_\textbf{x} \right)$ and $\textbf{y}^C = \textbf{x}^C + \textbf{n}$. In this case, $\textbf{y}^c$ follows $\mathcal{N} \left(\textbf{0}_{N_{\rm r}}, \textbf{K}_c \right)$, where $\textbf{K}_c = \textbf{H} \textbf{K}_\textbf{x} \textbf{H}^\mathrm{T} + \sigma^2 \textbf{I}_{N_{\rm r}}$. Since $\textbf{y}$ and $\textbf{y}^C$ have the same covariance matrix, which is $\textbf{K}_c$, and based on \cite{cover2006elements}, we have: \vspace{-0.2cm}
\begin{equation}
h(\textbf{y}^C) - h(\textbf{y}) = \mathcal{D} \left(\textbf{y}||\textbf{y}^C \right) = \mathbb{E}_{\mathcal{Y}} \left( \log \left[ \frac{p_{\mathcal{Y}}}{p_{\mathcal{Y}^C}} \right] \right),
\end{equation}
where $\mathcal{D}$ denotes the Kullback--Leibler divergence. On the other hand, using the concavity of the logarithm function and Jensen's inequality, we have $
\mathbb{E}_{\mathcal{Y}} \left( \log_2 \left[ \frac{p_{\mathcal{Y}}}{p_{\mathcal{Y}^C}} \right] \right) \leq \log_2 \left[ \mathbb{E}_{\mathcal{Y}} \left( \frac{p_{\mathcal{Y}}}{p_{\mathcal{Y}^C}} \right) \right]$, and consequently, $h(\textbf{y})$ can be lower bounded as: 
\begin{equation}
h(\textbf{y}) \geq h(\textbf{y}^C) - \log_2 \left[ \mathbb{E}_{\mathcal{Y}} \left( \frac{p_{\mathcal{Y}}}{p_{\mathcal{Y}^C}} \right) \right].
\end{equation}
Furthermore, $ \mathbb{E}_{\mathcal{Y}} \left( \frac{p_{\mathcal{Y}}}{p_{\mathcal{Y}^C}} \right) $ is computed as:
\begin{equation}
\begin{split}
\mathbb{E}_{\mathcal{Y}} \left( \frac{p_{\mathcal{Y}}}{p_{\mathcal{Y}^C}} \right) &= \int_{\mathbb{R}^{N_{\rm r}}} p_{\mathcal{Y}}(\textbf{z}) \frac{p_{\mathcal{Y}}(\textbf{z})}{p_{\mathcal{Y}^C}(\textbf{z})} \mathrm{d}\textbf{z}  
= \sum_{i=1}^{K} \sum_{j=1}^{K} \frac{1}{K^2} \int_{\mathbb{R}^{N_{\rm r}} } \frac{\mathcal{N} \left(\textbf{z} \left|\textbf{H}{\bf{s}}_i \right., \sigma^2 \textbf{I}_{N_{\rm r}} \right) \mathcal{N} \left(\textbf{z} \left| \textbf{H}{\bf{s}}_j \right., \sigma^2 \textbf{I}_{N_{\rm r}} \right)}{\mathcal{N} \left(\textbf{z} \left| \textbf{0}_{N_{\rm r}} \right., \textbf{K}_c \right)} \\
&= \frac{\left(2 \pi \right)^{\frac{{N_{\rm r}}}{2}} |\textbf{K}_c|^{\frac{1}{2}}}{\left( 2 \pi \sigma^2 \right)^{N_{\rm r}} } \sum_{i=1}^{K} \sum_{j=1}^{K} \frac{1}{K^2} \exp \left[ d_{i,j} \right] \int_{\mathbb{R}^{N_{\rm r}} } \exp \left[-\frac{ \left(\textbf{y} - \textbf{u}_{i,j} \right)^\mathrm{T} \textbf{A}^{-1} \left(\textbf{y} - \textbf{u}_{i,j} \right)  }{2} \right] \mathrm{d}\textbf{y} \\ 
&= \frac{\left(2 \pi e \right)^{\frac{{N_{\rm r}}}{2}} |\textbf{K}_c|^{\frac{1}{2}} \left( 2 \pi e \right)^\frac{{N_{\rm r}}}{2} |\textbf{A} |^{\frac{1}{2}} }{\left( 2 \pi e \sigma^2 \right)^{N_{\rm r}}  } \sum_{i=1}^{K} \sum_{j=1}^{K} \frac{1}{K^2} \exp \left[ d_{i,j} \right] \\ 
&= \frac{\left(2 \pi e \right)^{\frac{{N_{\rm r}}}{2}} |\textbf{K}_c|^{\frac{1}{2}}  }{\left( 2 \pi e \sigma^2 \right)^\frac{{N_{\rm r}}}{2} }   \frac{| \frac{1}{\sigma^2} \textbf{H} \textbf{K}_\textbf{x}\textbf{H}^\mathrm{T} + \textbf{I}_{N_{\rm r}}|^{\frac{1}{2}}}{|  \frac{2}{\sigma^2} \textbf{H} \textbf{K}_\textbf{x}\textbf{H}^\mathrm{T} + \textbf{I}_{N_{\rm r}}|^{\frac{1}{2}}}   \sum_{i=1}^{K} \sum_{j=1}^{K} \frac{1}{K^2} \exp \left[ d_{i,j} \right],
\end{split}
\end{equation} 
where, 
\begin{equation} 
\left\{ 
\begin{aligned} 
&\textbf{A} = \sigma^2 \textbf{K}_c \left(2 \textbf{K}_c - \sigma^2\textbf{I}_{N_{\rm r}} \right)^{-1} = \sigma^2 \left( \frac{1}{\sigma^2} \textbf{H} \textbf{K}_\textbf{x}\textbf{H}^\mathrm{T} + \textbf{I}_{N_{\rm r}} \right) \left(  \frac{2}{\sigma^2} \textbf{H} \textbf{K}_\textbf{x}\textbf{H}^\mathrm{T} + \textbf{I}_{N_{\rm r}} \right)^{-1}, \\ 
&\textbf{u}_{i,j} = \textbf{A}^{-1} \frac{\textbf{H}\left( {\bf{s}}_i + {\bf{s}}_j \right)}{\sigma^2},\\
&d_{i,j} = \frac{1}{\sigma^2} {\bf{s}}_i^\mathrm{T} \textbf{H}^\mathrm{T} \left( \frac{2}{\sigma^2} \textbf{H} \textbf{K}_\textbf{x}\textbf{H}^\mathrm{T} + \textbf{I}_{N_{\rm r}} \right)^{-1} \textbf{H} {\bf{s}}_j \\ 
&\quad \, \, -\frac{1}{2} \left( \frac{1}{\sigma^2} \right)^2 \left({\bf{s}}_i-{\bf{s}}_j \right)^\mathrm{T} \textbf{H}^\mathrm{T}\textbf{H} \textbf{K}_\textbf{x}\textbf{H}^\mathrm{T} \left(  \frac{2}{\sigma^2} \textbf{H} \textbf{K}_\textbf{x}\textbf{H}^\mathrm{T} + \textbf{I}_{N_{\rm r}} \right)^{-1} \textbf{H} \left({\bf{s}}_i-{\bf{s}}_j \right).
\end{aligned}
\right.
\end{equation} 
Consequently, $\log_2 \left[ \mathbb{E}_{\mathcal{Y}} \left( \frac{p_{\mathcal{Y}}}{p_{\mathcal{Y}^C}} \right) \right]$ is given by: 
\begin{equation}
\begin{split}
\log_2 \left[ \mathbb{E}_{\mathcal{Y}} \left( \frac{p_{\mathcal{Y}}}{p_{\mathcal{Y}^C}} \right) \right] &= \frac{1}{2} \log_2 \left[ \frac{| \frac{1}{\sigma^2} \textbf{H} \textbf{K}_\textbf{x}\textbf{H}^\mathrm{T} + \textbf{I}_{N_{\rm r}}|}{| \frac{2}{\sigma^2} \textbf{H} \textbf{K}_\textbf{x}\textbf{H}^\mathrm{T} + \textbf{I}_{N_{\rm r}}|} \right] + \frac{1}{2}  \log_2 \left[\left(2 \pi e \right)^{N_{\rm r}} |\textbf{K}_c| \right] - \frac{N_{\rm r}}{2} \log_2 \left[2 \pi e \sigma^2 \right] \\ 
&+ \log_2 \left[ \sum_{i=1}^{K} \sum_{j=1}^{K} \frac{1}{K^2} \exp \left[ d_{i,j} \right] \right]. 
\end{split}
\end{equation}
Finally, since $ h(\textbf{y}^C)  = \frac{1}{2} \log_2 \left[\left(2 \pi e \right)^{N_{\rm r}} |\textbf{K}_c| \right]$, $h(\textbf{y})$ can be lower bounded as:\vspace{-0.2cm}
\begin{equation}
\begin{split}
h(\textbf{y}) \geq 2 \log_2 (K) + \frac{1}{2} \log_2 \left[ \frac{|  \frac{2}{\sigma^2} \textbf{H} \textbf{K}_\textbf{x}\textbf{H}^\mathrm{T} + \textbf{I}_{N_{\rm r}}|}{| \frac{1}{\sigma^2} \textbf{H} \textbf{K}_\textbf{x}\textbf{H}^\mathrm{T} + \textbf{I}_{N_{\rm r}}|} \right] - \log_2 \left[ \sum_{i=1}^{K} \sum_{j=1}^{K}  \exp \left[ d_{i,j} \right] \right] + \frac{N_{\rm r}}{2} \log_2 \left[2 \pi e \sigma^2 \right].
\end{split}
\end{equation} 
Consequently, by substituting each term in (22) by its expression, the mutual information $I(\textbf{x};\textbf{y})$ can be lower bounded as $ L_2^+ \leq I(\textbf{x};\textbf{y})$, where 
\begin{equation}
\begin{split}
L_2 = 2 \log_2 (K) + \frac{1}{2} \log_2 \left[ \frac{|  \frac{2}{\sigma^2} \textbf{H} \textbf{K}_\textbf{x}\textbf{H}^\mathrm{T} + \textbf{I}_{N_{\rm r}}|}{| \frac{1}{\sigma^2} \textbf{H} \textbf{K}_\textbf{x}\textbf{H}^\mathrm{T} + \textbf{I}_{N_{\rm r}}|} \right] - \log_2 \left[ \sum_{i=1}^{K} \sum_{j=1}^{K}  \exp \left[ d_{i,j} \right] \right].
\end{split}
\end{equation}
The covariance matrix $\textbf{K}_\textbf{x}$ of the transmitted signal $\textbf{x}$ is derived assuming that the transmitted symbols from different light sources are uncorrelated. Each light sources transmits an $M$ PAM symbol with probability $\frac{1}{K}$ and $0$ with probability $1-1/N_{\rm t}$. Therefore, $\textbf{K}_\textbf{x} = \sigma_x^2 \textbf{I}_{N_{\rm t}}$, where $\sigma_x^2 = \frac{1}{K} \sum_{i=1}^m \left(I_m - I\right)^2 = \frac{I^2}{3 N_{\rm t}} \frac{M-1}{M+1}$, which completes the proof.
\bibliographystyle{IEEEtranTCOM}
\bibliography{refs}

\begin{thebibliography}{10}
\baselineskip 12pt
\providecommand{\url}[1]{#1}
\csname url@samestyle\endcsname
\providecommand{\newblock}{\relax}
\providecommand{\bibinfo}[2]{#2}
\providecommand{\BIBentrySTDinterwordspacing}{\spaceskip=0pt\relax}
\providecommand{\BIBentryALTinterwordstretchfactor}{4}
\providecommand{\BIBentryALTinterwordspacing}{\spaceskip=\fontdimen2\font plus
\BIBentryALTinterwordstretchfactor\fontdimen3\font minus
  \fontdimen4\font\relax}
\providecommand{\BIBforeignlanguage}[2]{{%
\expandafter\ifx\csname l@#1\endcsname\relax
\typeout{** WARNING: IEEEtran.bst: No hyphenation pattern has been}%
\typeout{** loaded for the language `#1'. Using the pattern for}%
\typeout{** the default language instead.}%
\else
\language=\csname l@#1\endcsname
\fi
#2}}
\providecommand{\BIBdecl}{\relax}
\BIBdecl

\bibitem{Cisco}
$\text{Cisco Mobile VNI}$, ``Cisco visual networking index: Global mobile data
  traffic forecast update, 2016--2021 white paper,'' Mar. 2017.

\bibitem{Haas}
H.~Haas, L.~Yin, Y.~Wang, and C.~Chen, ``{What is LiFi?}'' \emph{J. Lightw.
  Technol.}, vol.~34, no.~6, pp. 1533--1544, Mar. 2016.

\bibitem{Bian}
R.~Bian, I.~Tavakkolnia, and H.~Haas, ``10.2 {G}b/s visible light communication
  with off-the-shelf {LEDs},'' in \emph{Proc. IEEE ECOC}, Rome, Italy, Sep.
  2018.

\bibitem{tsonev2015towards}
D.~Tsonev, S.~Videv, and H.~Haas, ``{Towards a 100 Gb/s visible light wireless
  access network},'' \emph{Opt. express}, vol.~23, no.~2, pp. 1627--1637, Jan.
  2015.

\bibitem{ieee802bb}
\emph{{IEEE 802.11bb Standard for Information Technology—Telecommunications
  and Information Exchange Between Systems Local and Metropolitan Area
  Networks—Specific Requirements, Part 11: Wireless LAN Medium Access Control
  (MAC) and Physical Layer (PHY) Specifications Amendment: Light
  Communications}}, IEEE-SA Std., Jul. 2018.

\bibitem{ieee802task}
\emph{{IEEE 802.15 WPANTM 15.7 Revision: Short-Range Optical Wireless
  Communications Task Group (TG 7r1)}}, IEEE-SA Std., Apr. 2018.

\bibitem{ieee802}
\emph{{IEEE 802.15.7: IEEE Standard for Local and Metropolitan Area
  Networks--Part 15.7: Short-Range Wireless Optical Communication Using Visible
  Light}}, IEEE-SA Std., Sep. 2011.

\bibitem{mossaad2015visible}
M.~S. Mossaad, S.~Hranilovic, and L.~Lampe, ``Visible light communications
  using {OFDM} and multiple {LEDs},'' \emph{IEEE Trans. Commun.}, vol.~63,
  no.~11, pp. 4304--4313, Nov. 2015.

\bibitem{chun2016led}
H.~Chun, S.~Rajbhandari, G.~Faulkner \emph{et~al.}, ``{LED} based wavelength
  division multiplexed 10 {G}b/s visible light communications,'' \emph{J.
  Lightw. Technol.}, vol.~34, no.~13, pp. 3047--3052, Jul. 2016.

\bibitem{fath2013performance}
T.~Fath and H.~Haas, ``Performance comparison of {MIMO} techniques for optical
  wireless communications in indoor environments,'' \emph{IEEE Trans. Commun.},
  vol.~61, no.~2, pp. 733--742, Feb. 2013.

\bibitem{tavakkolnia2018energy}
I.~Tavakkolnia, C.~Chen, R.~Bian, and H.~Haas, ``Energy-efficient adaptive
  {MIMO-VLC} technique for indoor {LiFi} applications,'' in \emph{Proc. IEEE
  ICT}, St. Malo, France, Jun. 2018, pp. 331--335.

\bibitem{MeslehHaasSM}
R.~Mesleh, H.~Haas, C.~W. Ahn, and S.~Yun, ``Spatial modulation - a new low
  complexity spectral efficiency enhancing technique,'' in \emph{2006 First
  International Conference on Communications and Networking in China}, Beijing,
  China, Oct 2006, pp. 1--5.

\bibitem{di2014spatial}
M.~Di~Renzo, H.~Haas, A.~Ghrayeb, S.~Sugiura, and L.~Hanzo, ``Spatial
  modulation for generalized {MIMO}: Challenges, opportunities, and
  implementation,'' \emph{Proc. IEEE}, vol. 102, no.~1, pp. 56--103, Jan. 2014.

\bibitem{mesleh2011optical}
R.~Mesleh, H.~Elgala, and H.~Haas, ``Optical spatial modulation,''
  \emph{IEEE/OSA J. Opt. Commun. Netw.}, vol.~3, no.~3, pp. 234--244, 2011.

\bibitem{jeganathan2009space}
J.~Jeganathan, A.~Ghrayeb, L.~Szczecinski, and A.~Ceron, ``Space shift keying
  modulation for {MIMO} channels,'' \emph{IEEE Trans. Wireless Commun.},
  vol.~8, no.~7, pp. 3692--3703, Jul. 2009.

\bibitem{mesleh2008spatial}
R.~Y. Mesleh, H.~Haas, S.~Sinanovic, C.~W. Ahn, and S.~Yun, ``Spatial
  modulation,'' \emph{IEEE Trans. Vehicular Tech.}, vol.~57, no.~4, pp.
  2228--2241, Jul. 2008.

\bibitem{jeganathan2008generalized}
J.~Jeganathan, A.~Ghrayeb, and L.~Szczecinski, ``Generalized space shift keying
  modulation for {MIMO} channels,'' in \emph{Proc. IEEE PIMRC}, Cannes, France,
  Dec. 2008, pp. 1--5.

\bibitem{popoola2014demonstration}
W.~O. Popoola and H.~Haas, ``Demonstration of the merit and limitation of
  generalised space shift keying for indoor visible light communications,''
  \emph{J. Lightw. Technol.}, vol.~32, no.~10, pp. 1960--1965, May 2014.

\bibitem{ICCTilting}
J.~Y. Wang, J.~B. Wang, B.~Zhu, M.~Lin, Y.~Wu, Y.~Wang, and M.~Chen,
  ``Improvement of {BER} performance by tilting receiver plane for indoor
  visible light communications with input-dependent noise,'' in \emph{Proc.
  IEEE ICC}, Paris, France, May. 2017, pp. 1--6.

\bibitem{MDSArxiv2018Orientation}
M.~D. Soltani, A.~A. Purwita, Z.~Zeng, H.~Haas, and M.~Safari, ``Modeling the
  random orientation of mobile devices: Measurement, analysis and {LiFi} use
  case,'' [Online]. Available: \url{https://arxiv.org/abs/1805.07999},
  $\text{Sep}$. 2018.

\bibitem{BEROrientation}
M.~D. Soltani, A.~A. Purwita, I.~Tavakkolnia, H.~Haas, and M.~Safari, ``Impact
  of device orientation on error performance of {LiFi} systems,'' [Online].
  Available: \url{https://arxiv.org/abs/1808.10476}, $\text{Aug}$. 2018.

\bibitem{BlockageGlob03}
S.~Jivkova and M.~Kavehrad, ``Shadowing and blockage in indoor optical wireless
  communications,'' in \emph{Proc. IEEE Globecom}, San Francisco, USA, Dec.
  2003.

\bibitem{blockageIET}
L.~Chevalier, S.~Sahuguede, and A.~Julien-Vergonjanne, ``Performance evaluation
  of wireless optical communication for mobile body area network scenario with
  blocking effects,'' \emph{IET Optoelectronics}, vol.~9, no.~5, pp. 211--217,
  Oct. 2015.

\bibitem{raghavan2018statistical}
V.~Raghavan \emph{et~al.}, ``Statistical blockage modeling and robustness of
  beamforming in millimeter wave systems,'' [Online]. Available:
  \url{https://arxiv.org/abs/1801.03346}, $\text{Jan}$. 2018.

\bibitem{kahn1997wireless}
J.~M. Kahn and J.~R. Barry, ``Wireless infrared communications,'' \emph{Proc.
  IEEE}, vol.~85, no.~2, pp. 265--298, Feb. 1997.

\bibitem{NLOSSchulze}
H.~Schulze, ``{Frequency-Domain Simulation of the Indoor Wireless Optical
  Communication Channel},'' \emph{IEEE Trans. Commun.}, vol.~64, no.~6, pp.
  2551--2562, Jun. 2016.

\bibitem{MDSFeedback}
M.~D. Soltani, X.~Wu, M.~Safari, and H.~Haas, ``{Bidirectional User Throughput
  Maximization Based on Feedback Reduction in LiFi Networks},'' \emph{IEEE
  Transactions on Communications}, vol.~66, no.~7, pp. 3172--3186, 2018.

\bibitem{DirenzoHanzoSM}
P.~Yang, M.~D. Renzo, Y.~Xiao, S.~Li, and L.~Hanzo, ``Design guidelines for
  spatial modulation,'' \emph{IEEE Communications Surveys Tutorials}, vol.~17,
  no.~1, pp. 6--26, Firstquarter 2015.

\bibitem{imanGWC}
I.~Tavakkolnia, A.~Yesilkaya, and H.~Haas, ``{OFDM}-based spatial modulation
  for optical wireless communications,'' in \emph{Proc. IEEE Globecom}, Abu
  Dahbi, UAE, Dec. 2018.

\bibitem{MeslehRenzoSM}
R.~Mesleh, M.~D. Renzo, H.~Haas, and P.~M. Grant, ``Trellis coded spatial
  modulation,'' \emph{IEEE Transactions on Wireless Communications}, vol.~9,
  no.~7, pp. 2349--2361, July 2010.

\bibitem{YounisRenzoSM}
A.~Younis, S.~Sinanovic, M.~D. Renzo, R.~Mesleh, and H.~Haas, ``Generalised
  sphere decoding for spatial modulation,'' \emph{IEEE Transactions on
  Communications}, vol.~61, no.~7, pp. 2805--2815, July 2013.

\bibitem{MDSHandover}
M.~D. Soltani, H.~Kazemi, M.~Safari, and H.~Haas, ``Handover modeling for
  indoor {Li-Fi} cellular networks: The effects of receiver mobility and
  rotation,'' in \emph{Proc. IEEE WCNC}, San Fransisco, USA, Mar. 2017.

\bibitem{androidapp}
\BIBentryALTinterwordspacing
{\relax Vieyra Software}. Physics toolbox sensor suite. [Online]. Available:
  \url{https://play.google.com/store/apps/details?id=com.chrystianvieyra.physicstoolboxsuite}
\BIBentrySTDinterwordspacing

\bibitem{murray1972theory}
R.~Murray~Spiegel, \emph{Theory and Problems of Statistics in SI Units}.\hskip
  1em plus 0.5em minus 0.4em\relax McGraw-Hill International Book Company;
  Singapore, 1972.

\bibitem{ArdimasOFDM}
A.~A. Purwita, M.~D. Soltani, M.~Safari, and H.~Haas, ``Terminal orientation in
  ofdm-based {LiFi} systems,'' [Online]. Available:
  \url{https://arxiv.org/abs/1808.09269}, $\text{Aug}$. 2018.

\bibitem{basar2012performance}
E.~Basar, U.~Aygolu, E.~Panayirci, and H.~V. Poor, ``Performance of spatial
  modulation in the presence of channel estimation errors,'' \emph{IEEE Commun.
  Lett.}, vol.~16, no.~2, pp. 176--179, Feb. 2012.

\bibitem{rajashekar2014reduced}
R.~Rajashekar, K.~Hari, and L.~Hanzo, ``Reduced-complexity {ML} detection and
  capacity-optimized training for spatial modulation systems,'' \emph{IEEE
  Trans. Commun.}, vol.~62, no.~1, pp. 112--125, Jan. 2014.

\bibitem{gifford2005diversity}
W.~M. Gifford, M.~Z. Win, and M.~Chiani, ``Diversity with practical channel
  estimation,'' \emph{IEEE Trans. Wireless Commun.}, vol.~4, no.~4, pp.
  1935--1947, Jul. 2005.

\bibitem{Bettstetter}
C.~Bettstetter, H.~Hartenstein, and X.~Perez-Costa, ``Stochastic properties of
  the random waypoint mobility model,'' \emph{ACM Wireless Netw.}, vol.~10,
  no.~5, pp. 555--567, Sep. 2004.

\bibitem{fox1988fast}
R.~F. Fox, I.~R. Gatland, R.~Roy, and G.~Vemuri, ``Fast, accurate algorithm for
  numerical simulation of exponentially correlated colored noise,''
  \emph{Physical review A}, vol.~38, no.~11, p. 5938, Dec. 1988.

\bibitem{CorrelatedGaussTrans}
G.~N. Tavares and A.~Petrolino, ``On the generation of correlated {Gaussian}
  random variates by inverse {DFT},'' \emph{IEEE Trans. Commun.}, vol.~59,
  no.~1, pp. 45--51, Jan. 2011.

\bibitem{CorrGausTS}
M.~Levin, ``Generation of a sampled {Gaussian} time series having a specified
  correlation function,'' \emph{IEEE Trans. Inf. Theory}, vol.~6, no.~5, pp.
  545--548, Dec. 1960.

\bibitem{box2015time}
G.~E. Box \emph{et~al.}, \emph{{Time series analysis: forecasting and
  control}}.\hskip 1em plus 0.5em minus 0.4em\relax John Wiley \& Sons, 2015.

\bibitem{ChengOmni}
C.~Chen, R.~Bian, and H.~Haas, ``Omnidirectional transmitter and receiver
  design for wireless infrared uplink transmission in {LiFi},'' in \emph{Proc.
  IEEE ICC}, Kansas City, MO, USA, May. 2018.

\bibitem{ma2018energy}
S.~Ma, T.~Zhang, S.~Lu, H.~Li, Z.~Wu, and S.~Li, ``Energy efficiency of {SISO
  and MISO} in visible light communication systems,'' \emph{J. Lightw.
  Technol.}, Mar. 2018.

\bibitem{ngo2013energy}
H.~Q. Ngo, E.~G. Larsson, and T.~L. Marzetta, ``Energy and spectral efficiency
  of very large multiuser {MIMO} systems,'' \emph{IEEE Trans. Commun.},
  vol.~61, no.~4, pp. 1436--1449, Apr. 2013.

\bibitem{cover2006elements}
T.~M. Cover and J.~A. Thomas, \emph{Elements of Information Theory 2nd
  Edition}.\hskip 1em plus 0.5em minus 0.4em\relax Wiley-Interscience, 2006.

\end{thebibliography}
\end{document}